\declaretheorem[style=plain,name=Theorem]{theorem}
\declaretheorem[style=plain,sibling=theorem,name=Lemma]{lemma}
\crefname{theorem}{Theorem}{Theorems}
\crefname{proposition}{Proposition}{Propositions}
\crefname{lemma}{Lemma}{Lemmas}
\crefname{exmp}{Example}{Examples}
\crefname{corollary}{Corollary}{Corollarys}
\crefname{claim}{Claim}{Claims}
\crefname{remark}{Remark}{Remarks}
\crefname{section}{Section}{Sections}
\newcommand{\prb}[1]{\textnormal{\scshape #1}}
\newcommand{\gad}[1]{\textnormal{\scshape #1}}
\newcommand{\Rule}{\mathsf{R}}
\newcommand{\kTJ}{k\text{-}\mathsf{TJ}}
\newcommand{\TJ}{\mathsf{TJ}}
\newcommand{\kTS}{k\text{-}\mathsf{TS}}
\newcommand{\TS}{\mathsf{TS}}
\newcommand{\TAR}{\mathsf{TAR}}
\newcommand{\True}{T}
\newcommand{\False}{F}
\newcommand{\NCL}{\prb{Nondeterministic Constraint Logic}}
\newcommand{\symdif}{\mathbin{\triangle}}
\newcommand{\notinA}{\bar{A}}
\theoremstyle{plain}
\newtheorem{Introtheorem}{Theorem}
\newcommand{\rev}[1]{#1}
\newcommand{\revcr}[1]{#1}
\begin{document}
\title{Reachability of Independent Sets and Vertex Covers\\ Under Extended Reconfiguration Rules}

\author[1]{Shuichi Hirahara
\thanks{s\_hirahara@nii.ac.jp}}
\author[2]{Naoto Ohsaka
\thanks{ohsaka\_naoto@cyberagent.co.jp}}
\author[3]{Tatsuhiro Suga
\thanks{suga.tatsuhiro.p5@dc.tohoku.ac.jp}}
\author[4]{Akira Suzuki
\thanks{akira@tohoku.ac.jp}}
\author[3]{\\Yuma Tamura
\thanks{tamura@tohoku.ac.jp}}
\author[3]{Xiao Zhou
\thanks{zhou@tohoku.ac.jp}}

\affil[1]{National Institute of Informatics, Tokyo, Japan}
\affil[2]{CyberAgent, Inc., Tokyo, Japan}
\affil[3]{Graduate School of Information Sciences, Tohoku University, Sendai, Japan}
\affil[4]{Center for Data-Driven Science and Artificial Intelligence, Tohoku University, Sendai Japan}

\date{}
\maketitle          
\begin{abstract}
    In reconfiguration problems, we are given two feasible solutions to a graph problem and asked whether one can be transformed into the other via a sequence of feasible intermediate solutions under a given reconfiguration rule. While earlier work focused on modifying a single element at a time, recent studies have started examining how different rules impact computational complexity.

    Motivated by recent progress, we study \prb{Independent Set Reconfiguration} (\prb{ISR}) and \prb{Vertex Cover Reconfiguration} (\prb{VCR}) under the $k$-Token Jumping ($\kTJ$) and $k$-Token Sliding ($\kTS$) models. In $\kTJ$, up to $k$ vertices may be replaced, while $\kTS$ additionally requires a perfect matching between removed and added vertices. It is known that the complexity of \prb{ISR} crucially depends on $k$, ranging from $\PSPACE$-complete and $\NP$-complete to polynomial-time solvable.
    
    In this paper, we further explore the gradient of computational complexity of the problems.
    We first show that \prb{ISR} under $\kTJ$ with $k = |I| - \mu$ remains $\NP$-hard when $\mu$ is any fixed positive integer and the input graph is restricted to graphs of maximum degree 3 or planar graphs of maximum degree 4, where $|I|$ is the size of feasible solutions. In addition, we prove that the problem belongs to $\NP$ not only for $\mu=O(1)$ but also for $\mu = O(\log |I|)$. In contrast, we show that \prb{VCR} under $\kTJ$ is in $\XP$ when parameterized by $\mu = |S| - k$, where $|S|$ is the size of feasible solutions. Furthermore, we establish the $\PSPACE$-completeness of \prb{ISR} and \prb{VCR} under both $\kTJ$ and $\kTS$ on several graph classes, for fixed $k$ as well as superconstant $k$ relative to the size of feasible solutions.
\end{abstract}

\noindent \textbf{Keywords:} combinatorial reconfiguration,
extended reconfiguration rule,
independent set reconfiguration,
vertex cover reconfiguration,
$\PSPACE$-completeness,
$\NP$-completeness

\section{Introduction}\label{sec:Introduction}
Reconfiguration problems ask whether it is possible to reach a target state from an initial state by gradually transforming one feasible solution into another, where each intermediate solution must also be feasible.  
At each step, the current solution can be changed to an ``adjacent'' one, as defined by a given restriction known as a \emph{reconfiguration rule}.  
One of the most well-known examples is the \emph{15-puzzle}, where the rule permits sliding a tile into an adjacent empty space.
Reconfiguration problems have been extensively studied in the context of classical graph problems involving feasible solutions such as independent sets~\cite{BKLMOS21,BB17,BonsmaKW14,BFHM21,DDFHIOOUY15,HD05,IDHPSUU11,KMM12,LM19}, cliques~\cite{DBLP:journals/dam/ItoOO23}, vertex covers~\cite{DBLP:journals/algorithms/MouawadNRS18}, dominating sets~\cite{DBLP:journals/tcs/HaddadanIMNOST16, DBLP:journals/jco/SuzukiMN16}, and so on (see surveys~\cite{DBLP:journals/algorithms/Nishimura18,DBLP:books/cu/p/Heuvel13}).
Three standard reconfiguration models have been widely studied in the literature: 
\emph{token jumping}~($\TJ$)~\cite{IDHPSUU11,KMM12}, \emph{token sliding}~($\TS$)~\cite{BKLMOS21,BonsmaKW14,HD05, IDHPSUU11}, and \emph{token addition/removal}~($\TAR$)~\cite{IDHPSUU11,KMM12} models.
In the $\TJ$ model, one can simultaneously remove any vertex from the current solution and add any vertex outside it.
The $\TS$ model is a restricted version of $\TJ$, where the removed  vertex and the added vertex must be adjacent.  
In the $\TAR$ model, vertices can be added or removed as long as the resulting set remains above (or below) a specified size threshold.

Reconfiguration problems on graphs under those rules have attracted attention in theoretical computer science, and the computational complexity of such problems has been settled~\cite{IDHPSUU11,KMM12,DBLP:journals/algorithmica/MouawadN0SS17,W18}.
Besides, the field of reconfiguration problems is in the course of trying to apply the theoretical viewpoint to functional implementation for practical use~\cite{DBLP:conf/ecai/ChristenE0MPPSS23,DBLP:conf/cpaior/ItoKNSSTT23,DBLP:conf/ictai/TodaIKSST23,DBLP:conf/walcom/YamadaBISU24}.

However, one may find that, in practical scenarios, conventional rules such as changing one element at a time may be too restrictive.  
For example, reconfiguring a monitoring system or an infrastructure network often requires multiple simultaneous changes due to physical or operational constraints.  
Even if the system cannot be reconfigured under standard one-element rules, in practice, it is unjustified to conclude that it is ``unreachable.''
More flexible reconfiguration processes---such as those allowing multiple simultaneous changes---more accurately reflect the feasibility of real-world systems.

Motivated by recent progress and aiming to address the emerging issues, several studies have begun to analyze the computational complexity of reconfiguration problems under extended reconfiguration rules~\cite{DBLP:journals/dam/BergJM18,DBLP:conf/walcom/DomonST024,DBLP:BoundedHopToken,DBLP:GeneralizedTokenJumping,ISIsoRsuga25}.

We study the reconfiguration problem under the extended rules, $k$-Token Jumping ($\kTJ$) and $k$-Token Sliding ($\kTS$)~\cite{DBLP:journals/dam/BergJM18,ISIsoRsuga25}, which allow the simultaneous exchange of up to $k$ vertices.

\subsection{Our problems}
In this paper, all graphs are simple and undirected.
\revcr{For two sets $A$ and $B$, the set difference $A\setminus B$ is $\{x\in A\colon x\notin B\}$, and $A\bigtriangleup B$ denotes the symmetric difference between $A$ and $B$, that is, $(A\setminus B) \cup (B\setminus A)$.}
For two vertex subsets $A$ and $B$ of a graph $G$ with $|A|=|B|$, we say that they are \emph{adjacent} under \emph{$k$-Token Jumping} ($\kTJ$) if $|A \symdif B| \leq 2k$.
On the other hand, they are said to be \emph{adjacent} under \emph{$k$-Token Sliding} ($\kTS$) if $|A \symdif B| \leq 2k$ and there exists a perfect matching between $A \setminus B$ and $B \setminus A$ in the graph $G$.
Intuitively, the transformation from $A$ to $B$ can be seen as moving tokens placed on the vertices in the symmetric difference $A \symdif B$.
Under $\kTJ$, up to $k$ tokens can be moved simultaneously to any vertices in $G$.
In contrast, under $\kTS$, up to $k$ tokens can be moved simultaneously along the edges of $G$.
Note that $1$-$\TJ$ and $1$-$\TS$ coincide with $\TJ$ and $\TS$, respectively.

\prb{Independent Set} and \prb{Vertex Cover} are $\NP$-complete problems~\cite{GJ79} that are thoroughly explored in graph theory.
We define their reconfiguration variants, that is, \prb{Independent Set Reconfiguration} (\prb{ISR}) and \prb{Vertex Cover Reconfiguration} (\prb{VCR}).
Recall that a \rev{set} of vertices in a graph \( G \) is called an \emph{independent set} if no two vertices in the set are adjacent in \( G \), and a \emph{vertex cover} if every edge in \( G \) has at least one endpoint in the \rev{set}.

In the \prb{Independent Set Reconfiguration} problem, we are given a graph $G$, two independent sets $I$ and $J$ of $G$ (representing the ``initial'' and ``target'' configurations of tokens, respectively) such that $|I| = |J|$, and a reconfiguration rule $\Rule \in \{\kTJ, \kTS\}$.
Then, the problem asks whether there exists a sequence $\sigma=\langle I=I_0,I_1,\ldots,I_{\ell}=J \rangle$ of independent sets of $G$, where any two consecutive independent sets in $\sigma$ are adjacent under $\Rule$.
Similarly, in \prb{Vertex Cover Reconfiguration}, we are given a graph $G$, two vertex covers $S$ and $T$ of $G$ (representing the ``initial'' and ``target'' configurations of tokens, respectively) such that $|S|=|T|$, and a reconfiguration rule $\Rule\in\{\kTJ,\kTS\}$. Then, the problem asks whether there exists a sequence $\sigma=\langle S=S_0,S_1,\ldots,S_{\ell}=T \rangle$ of vertex covers of $G$, where any two consecutive vertex covers in $\sigma$ are adjacent under $\Rule$.
In each problem, we refer to such a sequence of independent sets or vertex covers as a \emph{reconfiguration sequence}, where $\ell$ is the length of the reconfiguration sequence.
Furthermore, we say that two vertex sets are \emph{reconfigurable} under $\Rule$ if there exists a reconfiguration sequence between them under $\Rule$.
Formally, the two problems are defined as follows.

\begin{description}[leftmargin=1em, nosep, topsep=1em]
    \item[Problem] \prb{Independent Set Reconfiguration (ISR)}
    \item[Input] A simple undirected graph $G$, two independent sets $I$ and $J$ of $G$ with the same size, and a reconfiguration rule $\R\in\{\kTJ,\kTS\}$.
    \item[Output] Are $I$ and $J$ reconfigurable under $\R$?
\end{description}

\begin{description}[leftmargin=1em, nosep]
    \item[Problem] \prb{Vertex Cover Reconfiguration (VCR)}
    \item[Input] A simple undirected graph $G$, two vertex covers $S$ and $T$ of $G$ with the same size, and a reconfiguration rule $\R\in\{\kTJ,\kTS\}$.
    \item[Output] Are $S$ and $T$ reconfigurable under $\R$?
\end{description}

\paragraph{Related work.}
\prb{ISR} under $\TJ$ and $\TS$ is $\PSPACE$-complete even for planar graphs of maximum degree 3 and bounded bandwidth~\cite{HD05,DBLP:conf/iwpec/Zanden15, W18}, and perfect graphs~\cite{KMM12}. Under $\TJ$, it is known that any two independent sets of an even-hole-free graph are reconfigurable~\cite{KMM12}. Under $\TS$, the problem remains $\PSPACE$-complete for split graphs~\cite{BKLMOS21}, while it can be solved in polynomial time for interval graphs~\cite{BB17}.
For claw-free graphs, \prb{ISR} under both $\TJ$ and $\TS$ can be solved in polynomial time~\cite{BonsmaKW14}. 
For bipartite graphs, interestingly, \prb{ISR} is $\NP$-complete under $\TJ$, while $\PSPACE$-complete under $\TS$~\cite{LM19}.
Note that \prb{ISR} and \prb{VCR} under $\TJ$ and $\TS$ are essentially equivalent due to their complementary relationship; therefore, their computational complexities coincide.

In \cite{ISIsoRsuga25}, several results are presented for \prb{ISR} under the $\kTJ$ and $\kTS$ rules.
The paper shows that \prb{ISR} under both $\kTJ$ and $\kTS$ is $\PSPACE$-complete on perfect graphs for any fixed integer $k \geq 2$.
Furthermore, $\kTS$ and $\TS$ are essentially equivalent on even-hole-free graphs~\cite{ISIsoRsuga25}.
As a result, the computational complexity of \prb{ISR} under $\kTS$ on several graph classes contained within the class of even-hole-free graphs follows from the results under $\TS$.

K\v{r}i\v{s}\v{t}an et al.\ studied several reconfiguration problems, including \prb{ISR} and \prb{VCR}, under the \emph{$(k,d)$-Token Jumping} model~\cite{DBLP:GeneralizedTokenJumping}. In this model, $k$ tokens can move simultaneously, and each token can travel within a distance of $d$.
The $(k,d)$-Token Jumping model may appear to generalize the reconfiguration rules $\kTJ$ and $\kTS$; however, the settings are slightly different. The $(k,d)$-Token Jumping model is defined by a bijection between the current configuration and the next configuration. Consequently, a token can move to a vertex currently occupied by another token, as long as the latter token moves to a different vertex in the same step.
In contrast, the definitions of $\kTJ$ and $\kTS$ are based on the symmetric difference between configurations. In these models, a token is prohibited from moving to a vertex that is currently occupied by another token.
Although both the $(k,d)$-Token Jumping model and $\kTJ$ (resp.\ $\kTS$) are natural generalizations of $\TJ$ (resp.\ $\TS$), the difference between them significantly impacts the computational complexity of problems.
In fact, for the number $t$ of tokens, \prb{VCR} under the $(t,1)$-Token Jumping model can be solved in polynomial time~\cite{DBLP:GeneralizedTokenJumping}, while \prb{VCR} under the $\kTS$ is $\PSPACE$-complete when $k=t$~\cite{ISIsoRsuga25}.

\subsection{Our Contribution}\label{subsec:Contribution}
We research the computational complexity of \prb{ISR} and \prb{VCR} under $\kTJ$ and $\kTS$.
An overview of our results is provided in \Cref{ISRresultskTJ,tab:ISRandVCR}.
\revcr{The numbering of theorems follows the system starting from \Cref{sec:guaranvalue}.
Therefore, in this section, note that the numbering does not begin consecutively.}

\paragraph*{The results when parameterized by the guaranteed value}

We first investigate the complexity of \prb{ISR} under $\kTJ$ when $k$ is defined as $|I|-\mu$, where $I$ is the initial independent set of an \prb{ISR} instance and $\mu$ is a parameter referred to as the \emph{guaranteed value}~\cite{DBLP:journals/jal/MahajanR99}.
This parameter measures how far the instance is from the trivial case: when $\mu = 0$ (that is, $k = |I|$), ISR under $\kTJ$ becomes trivial, as all vertices in the independent set can be replaced simultaneously.
Hence, we are interested in the computational complexity of the problem when $\mu$ is small.
It was shown in~\cite{ISIsoRsuga25} that even for any fixed positive integer $\mu$, \textsc{ISR} under $\kTJ$ is $\NP$-complete when $k = |I| - \mu$.
However, the reduction presented in~\cite{ISIsoRsuga25} introduces a large number of edges.
From a practical standpoint, it is particularly important to understand the computational complexity of the problem on sparse graphs, such as planar graphs or graphs with bounded degree.

To answer this question, we present $\NP$-hardness results for these sparse classes.
\setcounter{Introtheorem}{1}
\begin{Introtheorem}\label{Introthm2}
    Let $\mu$ be any fixed positive integer. \prb{ISR} under $\kTJ$ is $\NP$-hard for graphs $G$ of maximum degree $3$ when $k=|I|-\mu\geq 1$, where $I$ is an initial independent set of $G$.
\end{Introtheorem}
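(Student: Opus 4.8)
The target is an NP-hardness reduction for ISR under $\kTJ$ with $k = |I| - \mu$ on graphs of maximum degree $3$, where $\mu \geq 1$ is a fixed positive integer. I would reduce from a suitable NP-complete problem on sparse graphs. The natural source is a variant of \prb{3-SAT} or, more appealingly for a degree-$3$ target, \prb{Independent Set} (or its complement \prb{Vertex Cover}) on graphs of bounded degree, or even \prb{Hamiltonian Path}/\prb{Path} on cubic graphs; but the cleanest route is probably to reduce from \textsc{ISR} under $\kTJ$ with $k = |I| - \mu$ on general graphs — which is already known to be NP-complete by~\cite{ISIsoRsuga25} — and to perform a degree-reduction gadget replacement, or alternatively to reduce directly from a classical NP-complete problem and build the reconfiguration instance from scratch. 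I will describe the direct route, since controlling the maximum degree is easier when one designs the gadgets oneself.

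The idea is as follows. Since $k = |I| - \mu$ with $\mu$ fixed, a single $\kTJ$ step may rearrange all but $\mu$ of the tokens at once; so the ``hard'' content must be squeezed into the behavior of a small, bounded number of ``frozen'' or ``slow'' tokens. I would build $G$ out of two kinds of pieces: (i) a large ``reservoir'' part — for instance a long path or a disjoint union of small gadgets — on which tokens can be parked freely and rearranged, which effectively lets the solver ``spend'' most of the $k$ budget harmlessly; and (ii) a ``selection'' part encoding an instance $\varphi$ of a classical NP-complete problem, wired so that the only way to move the $\mu$ critical tokens from their initial positions to their target positions is to pass through a configuration that encodes a satisfying assignment / valid solution of $\varphi$. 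Concretely, for $\mu = 1$ one frozen token must traverse a gadget whose ``corridor'' is open exactly when the rest of the independent set certifies a solution of $\varphi$; for general fixed $\mu$, one replicates the corridor $\mu$ times or merges $\mu-1$ of the critical tokens into a trivially movable block and keeps a single genuine bottleneck, so that the construction is essentially independent of $\mu$. All gadgets (variable gadgets, clause gadgets, the corridor, the reservoir) are standard bounded-degree widgets — e.g.\ subdivided edges, small cycles, and degree-$3$ ``copy'' trees — so that the global maximum degree stays at $3$; subdivision is harmless for independent sets because it only adds pendant-like structure that never forces a token.

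The main steps, in order, would be: (1) fix the source problem and state the many-one reduction; (2) describe the reservoir gadget and prove the key ``budget-absorption'' lemma — that in any reconfiguration sequence, at most $\mu$ tokens are ever outside the reservoir at positions relevant to $\varphi$, so the effective move budget on the selection part is tiny; (3) describe the selection/corridor gadget and prove the ``yes'' direction: if $\varphi$ is satisfiable, exhibit an explicit reconfiguration sequence (first rearrange reservoir tokens to encode the assignment using the generous $k$-budget, then slide the $\mu$ critical tokens through the now-open corridor, then clean up); (4) prove the ``no'' direction: any reconfiguration sequence from $I$ to $J$ must at some step have the corridor open, which forces the reservoir/selection tokens into a configuration encoding a solution of $\varphi$, contradiction; (5) verify $\max$-degree $\leq 3$ and that the reduction is polynomial-time, and note that the construction works uniformly for every fixed $\mu \geq 1$ (so the claim ``$k = |I| - \mu \geq 1$'' is meaningful, i.e.\ $|I|$ is taken large enough).

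**Anticipated main obstacle.** The hard part is the degree-$3$ constraint combined with the ``$\mu$ tokens do all the work'' structure: classical SAT-to-ISR reductions rely on clause gadgets with vertices of degree proportional to the number of literals, and forcing a token to be ``frozen'' is normally done by attaching it to a large rigid clique — both incompatible with degree $3$. I expect the real work to be in the ``no'' direction of step (4): showing that the enormous freedom of a $\kTJ$ step with $k = |I| - \mu$ cannot be used to sneak the critical tokens past the corridor without genuinely solving $\varphi$. The proof will likely hinge on an invariant — some potential function or parity/counting argument on the partition of tokens between reservoir and selection part — that is preserved by every legal $\kTJ$ move and is violated by any ``cheating'' traversal of the corridor. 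Getting that invariant to be simultaneously (a) robust to moving up to $k$ tokens at once, (b) compatible with the subdivisions needed for degree $3$, and (c) tight enough to force a satisfying assignment, is where the delicacy lies.
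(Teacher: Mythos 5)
There is a genuine gap. The crux you have not engaged with is the combinatorial meaning of adjacency when $k=|I|-\mu$: two independent sets of size $|I|$ are adjacent under $\kTJ$ exactly when they share at least $\mu$ vertices. Consequently, if $|I\cap J|\ge\mu$ the instance is trivially a yes-instance, so any hardness construction must make $I$ and $J$ (essentially) disjoint, and the question becomes whether there is a chain of size-$|I|$ independent sets in which consecutive sets intersect in at least $\mu$ vertices --- for $\mu=1$, whether some intermediate independent set meets both $I$ and $J$. Your ``corridor'' picture (a frozen token sliding through a passage opened by the rest of the configuration) has no counterpart here: under token jumping with such a large budget there is no geometry of movement to exploit, and your ``reservoir'' of freely parkable tokens is actively harmful, since any $\mu$ reservoir vertices common to $I$ and $J$ make the instance trivially yes. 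The invariant you defer to the ``no''-direction step is precisely the missing idea, and nothing in the sketch indicates how to obtain it; likewise, your alternative route of degree-reducing the known general-graph reduction is delicate because adjacency is a global intersection condition, so gadget replacement changes $|I|$ (hence $k$) and can create unintended large intersections.

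For comparison, the paper first proves $\NP$-completeness of a new SAT variant, \prb{IntE3-SAT}: given an E3-CNF formula satisfied by both the all-true and the all-false assignment, decide whether a \emph{mixed} satisfying assignment exists. It then builds a maximum-degree-$3$ graph whose variable gadgets are even cycles alternating ``true'' and ``false'' vertices and whose clause gadgets are triangles, takes $I$ to be all false vertices plus one negative literal vertex per clause and $J$ to be all true vertices plus one positive literal vertex per clause (so $I\cap J=\emptyset$), and shows that a mixed satisfying assignment yields an independent set $I'$ meeting both $I$ and $J$, so $\langle I,I',J\rangle$ is a reconfiguration sequence; conversely, at the first step of any reconfiguration sequence where a true vertex appears, the current set must contain both true and false vertices and hence encodes a mixed satisfying assignment. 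Fixed $\mu\ge 2$ is handled by adding $\mu-1$ isolated vertices, which belong to every independent set of the required size and absorb the extra intersection requirement. Without analogues of these ingredients --- the disjointness of $I$ and $J$, the intersection characterization of adjacency, and a source problem whose solutions are exactly the admissible intermediate configurations --- your plan does not yet constitute a proof.
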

\begin{Introtheorem}\label{Introthm3}
    Let $\mu$ be any fixed positive integer. \prb{ISR} under $\kTJ$ is $\NP$-hard for planar graphs $G$ of maximum degree $4$ when $k=|I|-\mu\geq 1$, where $I$ is an initial independent set of $G$.
\end{Introtheorem}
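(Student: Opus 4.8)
The plan is a polynomial-time reduction from \prb{Independent Set} on planar graphs of maximum degree $3$, which is $\NP$-hard. From an instance $(G_0,t)$ we build a planar graph $H$ of maximum degree $4$, two independent sets $I,J$ of $H$, and the rule $\kTJ$ with $k=|I|-\mu$, so that $I$ and $J$ are reconfigurable if and only if $G_0$ has an independent set of size $t$. The construction is in the same spirit as the one for the maximum-degree-$3$ case (Theorem~\ref{Introthm2}), but it is laid out to respect a fixed planar embedding of $G_0$; this is exactly what costs one extra unit of degree, since each vertex of $G_0$ --- already of degree up to $3$ --- must receive one further incident edge linking it to the rest of the construction.

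The first step is to reduce to the case $\mu=1$. Attach $\mu-1$ pairwise-disjoint ``anchor'' gadgets --- isolated vertices, or a triangle with a pendant --- each of which contributes exactly one vertex lying in every maximum independent set, and arrange the remainder of $H$ (the ``core'') so that its two designated independent sets are maximum. Then every independent set of $H$ of size $|I|$ contains all $\mu-1$ anchor vertices and restricts to a maximum independent set of the core, so reconfiguring $I$ into $J$ under $\kTJ$ with $k=|I|-\mu$ is equivalent to reconfiguring the two core homes under the rule ``$|A\symdif B|\le 2(|I|-\mu)$'', which on maximum core independent sets says precisely that consecutive sets share at least one vertex. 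Hence we may assume $\mu=1$, and iterating adjacency gives a clean criterion: the two homes are reconfigurable iff some vertex of one home and some vertex of the other lie in a common component of the auxiliary graph $\mathcal G$ on the core's vertices, where $u\sim v$ iff the core has a maximum independent set containing both $u$ and $v$.

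The core is then designed with two disjoint ``home'' subgraphs --- whose two maximum-independent-set states are $I$ and $J$ --- wired to a gadget $\Gamma$ built from $G_0$, for instance $G_0$ with a private pendant at each vertex, whose maximum independent sets encode the independent sets of $G_0$. The wiring is arranged so that a maximum independent set of the core can combine vertices from ``near $I$'' with vertices from ``near $J$'' only by vacating enough of the home subgraphs and absorbing the freed tokens into $\Gamma$, which at the size forced by the construction is possible precisely when $G_0$ has an independent set of size $t$. Correctness then splits in two: from a size-$t$ independent set of $G_0$ one writes the explicit short sequence $I\to(\text{witness})\to J$, each step changing at most $|I|-1$ tokens and sharing a vertex with its predecessor; conversely, a counting/potential argument bounding how many tokens can simultaneously be ``off home'' forces any reconfiguration sequence to pass through a configuration whose restriction to $\Gamma$ exhibits an independent set of size $t$ in $G_0$.

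The hard part is this gadget design together with the converse direction. Because $k=|I|-\mu$ is so large, the rule is extremely permissive --- a jump is forbidden only between independent sets sharing fewer than $\mu$ vertices --- so the home subgraphs and the wiring to $\Gamma$ must be engineered so that \emph{no} unintended token migration (shuffling tokens through auxiliary vertices a few at a time, transiently shrinking the footprint on one part to free room on another, or exploiting the pendants of $\Gamma$) can connect the two homes in $\mathcal G$ without a genuine size-$t$ independent set of $G_0$. It is easy to design a gadget that governs the intended moves yet admits a stray shortcut, so the real work is to formulate an invariant on reachable configurations that survives \emph{every} admissible $\kTJ$ jump. By contrast, once the gadget shapes are fixed, verifying that $H$ admits a planar embedding of maximum degree $4$ --- embedding $\Gamma$ as given, placing the anchor gadgets in separate faces, and routing the home subgraphs and their wiring through the outer face --- is routine bookkeeping.
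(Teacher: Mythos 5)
There is a genuine gap: what you have written is a plan, not a proof, and the missing piece is exactly the part that carries the whole theorem. Your framework (adding $\mu-1$ anchor/isolated vertices to reduce to $\mu=1$, and observing that for $k=|I|-1$ adjacency of maximum independent sets means sharing a vertex, so reconfigurability becomes a connectivity question in an auxiliary intersection graph) is sound and matches what is known. But the core of the reduction --- the concrete ``home'' subgraphs, the gadget $\Gamma$, the wiring between them, and above all the soundness argument showing that no sequence of permissive $\kTJ$ jumps can connect the two homes without a genuine size-$t$ independent set of $G_0$ --- is never specified. You yourself flag this as ``the real work'' and leave it as an invariant to be found; until that invariant and the gadget realizing it are exhibited and verified against \emph{every} admissible jump (including partial migrations through $\Gamma$ and the anchors), there is no proof. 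Moreover, the step you dismiss as routine bookkeeping is not: each vertex of $G_0$ already has degree up to $3$, so a pendant plus any wiring to a home already exhausts or exceeds the degree-$4$ budget, and connecting two small ``home'' subgraphs to many vertices of an embedded copy of $G_0$ while keeping the whole graph planar and of maximum degree $4$ is itself a nontrivial design constraint, not an afterthought.

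For comparison, the paper takes a different and fully worked route: it reduces from a new problem, \prb{Internal Exactly 3-SAT} (a sandwiched 3-CNF formula, asking for a \emph{mixed} satisfying assignment), shown $\NP$-complete from \prb{E3-SAT}. The all-false and all-true assignments become the disjoint independent sets $I$ and $J$ (variable gadgets are even cycles, clause gadgets are triangles), and with $k=|I|-1$ an intermediate independent set meeting both $I$ and $J$ corresponds exactly to a mixed satisfying assignment; the soundness direction is handled by looking at the first configuration containing a true vertex, with no counting or potential argument needed. Planarity with maximum degree $4$ is then obtained not by redesigning the gadgets but by redrawing the degree-$3$ graph on a grid so that all crossings are ``good'' (every maximum independent set hits exactly one endpoint of each crossing edge) and replacing each crossing by an explicit degree-$4$ crossover gadget whose three internal tokens faithfully simulate the crossing. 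If you want to salvage your approach, you would need to supply the gadget $\Gamma$, the wiring, and the reachability invariant in full detail, and verify the degree and planarity constraints explicitly; alternatively, adopting a SAT-type source problem with built-in local constraints, as the paper does, avoids the counting-based soundness argument that your sketch leaves open.
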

Here, let us explain the main obstacle in the proofs of these theorems.
Consider the case where $\mu = 1$.
If the initial and target independent sets $I$ and $J$ satisfy $|I \cap J| \geq 1$, then the reconfiguration is trivial.
Hence, to ensure that the instance is non-trivial, it is essential to construct $I$ and $J$ so that $I \cap J = \emptyset$.  
A common approach used in existing research is to employ a complete bipartite graph; however, this prevents the resulting graph from being sparse.
A more delicate reduction is required to preserve the sparsity of the graph.
As a key step toward this goal, we introduce a new variant of \prb{Exactly $3$-SAT}, which we call \prb{Internal Exactly $3$-SAT}, and show that it is $\NP$-complete.
In this variant, the input is a $3$-CNF formula that is \rev{promised to be} satisfiable under both the all-true and all-false assignments.
The goal is to determine whether there exists a \emph{mixed} satisfying assignment, that is, a satisfying assignment that is neither all-true nor all-false.
We convert the all-true and all-false assignments to the initial and target independent sets, respectively.
The existence of a mixed satisfying assignment corresponds to the reconfigurability from $I$ to $J$ via an independent set $I'$ such that $|I' \cap I| \geq 1$ and $|I' \cap J| \geq 1$.

We next demonstrate that \prb{ISR} under $\kTJ$ with $k=|I|-\mu$ belongs to $\NP$ even when $\mu = O(\log |I|)$ for some graph classes.
Since many reconfiguration problems are $\PSPACE$-complete due to the potentially super-polynomial length of reconfiguration sequences, showing that a reconfiguration problem belongs to $\NP$ is non-trivial.
It is known that \prb{ISR} under $\kTJ$ with $k=|I|-\mu$ is in $\NP$ when $\mu$ is constant~\cite{ISIsoRsuga25}.
We strengthen this result by proving $\NP$-membership for specific graph classes under the condition $\mu = O(\log |I|)$.

\begin{Introtheorem}\label{Introthm4}
    Let $G$ be an input graph with $n$ vertices, chromatic number $O(1)$ and maximum degree $o(\frac{n}{\log n})$, and let $I$ be an initial independent set of $G$.
    \prb{ISR} under $\kTJ$ is in $\NP$ when $k=|I|-\mu\geq 1$ with any non-negative integer $\mu$ at most $O(\log|I|)$.
\end{Introtheorem}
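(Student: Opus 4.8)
The plan is to prove that every yes-instance has a reconfiguration sequence of length polynomial in $n$; since a candidate sequence is verified in polynomial time (check that each set is an independent set of size $|I|$, that consecutive sets have symmetric difference at most $2k$, and that the endpoints are $I$ and $J$), this puts the problem in $\NP$. Write $c=\chi(G)$, fix a proper $c$-colouring of $G$ with colour classes $V_1,\dots,V_c$, where $|V_1|\ge n/c$ is largest, and put $\mu=|I|-k$; we may assume $1\le\mu\le C\log|I|$ for a constant $C$ (if $\mu=0$ then $k=|I|$ and a single step suffices). The tool used repeatedly is the reformulation: two independent sets $A,B$ with $|A|=|B|=|I|$ are adjacent under $\kTJ$ exactly when $|A\cap B|\ge\mu$, because $|A\symdif B|=2(|I|-|A\cap B|)$.

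I would then split on the size of $|I|$. If $|I|$ is bounded by an absolute constant (which, using $\mu\le C\log|I|$, also covers every case with $|I|<2\mu$), there are only $n^{O(1)}$ independent sets of size $|I|$, so the reconfiguration graph is of polynomial size and a shortest reconfiguration sequence, if one exists, has polynomial length. If $|I|\le n/(2c)$ and $|I|$ exceeds the constant above, I claim the reconfiguration graph restricted to independent sets of size $|I|$ is connected with diameter $O(n)$. Indeed, from any such independent set $S$, freeze an arbitrary $\mu$-subset $R\subseteq S$; since $|N[R]|\le\mu(\Delta+1)=o(n)$ by the degree bound, $V_1\setminus N[R]$ has $\Omega(n)\ge|I|$ vertices, so in one step $S$ moves to $R\cup W$ for some $W\subseteq V_1\setminus N[R]$ with $|W|=|I|-\mu$ (this set is independent and meets $S$ in at least $\mu$ vertices), and in a second step to a size-$|I|$ subset of $V_1$ containing $W$. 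Finally, any two size-$|I|$ subsets of $V_1$ are reconfigurable in at most $\lceil|I|/k\rceil\le|I|$ steps, exchanging up to $k$ tokens per step while staying inside the independent set $V_1$. Composing these pieces for $S$, for the target, and for the walk inside $V_1$ yields a sequence of length $O(n)$.

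The remaining, and I expect main, case is $|I|>n/(2c)$, where $|I|=\Theta(n)$ and hence $\mu=O(\log n)$. Take a reconfiguration sequence $\langle I=I_0,I_1,\dots,I_\ell=J\rangle$ of minimum length. Minimality forces $|I_s\cap I_t|<\mu$ whenever $t-s\ge 2$: otherwise $I_s$ and $I_t$ would be adjacent and the part of the sequence strictly between them could be deleted. Hence the even-indexed sets $I_0,I_2,I_4,\dots$ form a family of $|I|$-element subsets of $V(G)$ in which no $\mu$-element subset of $V(G)$ lies in two different members, so double counting the containments of $\mu$-subsets gives $\lceil(\ell+1)/2\rceil\le\binom{n}{\mu}/\binom{|I|}{\mu}=\prod_{i=0}^{\mu-1}\tfrac{n-i}{|I|-i}$. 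Because $|I|-\mu\ge\Omega(n)$ here, every factor is $O(1)$, so the product is $(O(1))^{\mu}=(O(1))^{O(\log|I|)}=|I|^{O(1)}$, which is polynomial in $n$ since $c$—and thus the constant bounding the base—is fixed. Therefore $\ell=n^{O(1)}$, completing the argument.

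The difficulty is concentrated in this last step, and it is there that the hypotheses are genuinely used: the bound $\binom{n}{\mu}/\binom{|I|}{\mu}\le(O(1))^{\mu}$ is polynomial only because $|I|$ is a constant fraction of $n$—which follows from $\chi(G)=O(1)$ together with $|I|>n/(2c)$—and because $\mu=O(\log|I|)$; a superlogarithmic $\mu$ or a sublinear $|I|$ would allow such a family to be superpolynomially large. The degree bound $\Delta=o(n/\log n)$ plays its role only in the complementary regime, where it guarantees that freezing $\mu$ tokens blocks just $o(n)$ vertices of the large colour class $V_1$, leaving room to retract onto $V_1$.
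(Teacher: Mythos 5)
Your proposal is correct and follows essentially the same route as the paper: in the dense regime ($|I|=\Theta(n)$) you bound the length of a shortest sequence by $\binom{n}{\mu}/\binom{|I|}{\mu}$ via the observation that non-consecutive sets in a shortest sequence intersect in fewer than $\mu$ vertices (the paper cites this as an $(N,r,L)$-system bound, you reprove it by double counting), and in the sparse regime you use the bounded chromatic number together with $\Delta=o(n/\log n)$ to show every instance is a yes-instance with a short explicit sequence, exactly as the paper's length-3 ``simple reconfiguration sequence.'' The only differences are cosmetic: your case threshold $|I|\le n/(2c)$ and your $O(n)$-step routing through a large colour class replace the paper's comparison of $k$ with $(n-2\mu(\Delta+1))/\chi^*$ and its 3-step detour through an independent set of $G-N[A\cup B]$.
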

In the proof of \Cref{Introthm4}, we utilize the concept of an \emph{intersecting family} of a set~\cite{IntersectingTheorem}, which has been extensively studied in the field of \rev{extremal set theory}.
Based on this concept, we derive an upper bound on the length of the reconfiguration sequence.

Here, the following \Cref{Introthm5} constitutes another main result of our work.
\setcounter{Introtheorem}{5}
\begin{Introtheorem}\label{Introthm5}
    \prb{VCR} under $\kTJ$ is in {\XP} for general graphs $G$ when parameterized by $\mu=|S|-k\geq 0$, where $S$ is an initial vertex cover of $G$.
\end{Introtheorem}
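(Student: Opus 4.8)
The plan is to reduce \prb{VCR} under $\kTJ$ with $k=|S|-\mu$ to a connectivity question on an auxiliary graph whose vertices are the size-$\mu$ subsets of $V(G)$; as there are only $\binom{n}{\mu}=n^{O(\mu)}$ such subsets, where $n=|V(G)|$, this is polynomial for every fixed $\mu$.

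First I would record the basic adjacency fact: two vertex covers $A,B$ of size $s$ are adjacent under $\kTJ$ with $k=s-\mu$ exactly when $|A\cap B|\ge\mu$, since $|A\symdif B|\le 2k$ rewrites as $|A\cap B|=s-|A\symdif B|/2\ge s-k=\mu$. Thus if $|S\cap T|\ge\mu$ the answer is YES; so assume $|S\cap T|<\mu$. The key structural observation is that the instance is then ``$\mu$-close to bipartite'': since $S$ and $T$ are vertex covers, $V\setminus S$ and $V\setminus T$ are independent sets, so $V\setminus(S\cap T)=(V\setminus S)\cup(V\setminus T)$; every edge of $G-(S\cap T)$ must therefore join $S\setminus T$ to $T\setminus S$, i.e.\ $G-(S\cap T)$ is bipartite and $W:=S\cap T$ is a set of at most $\mu-1$ vertices whose removal makes $G$ bipartite (an odd cycle transversal).

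Next I would introduce the auxiliary graph $\mathcal H$: its vertices are the size-$\mu$ subsets $X\subseteq V(G)$ that extend to a vertex cover of $G$ of size exactly $s$, with $X,X'$ adjacent in $\mathcal H$ whenever $X\cup X'$ extends to a vertex cover of size exactly $s$. The claim to establish is that $S$ and $T$ are reconfigurable if and only if some size-$\mu$ subset of $S$ and some size-$\mu$ subset of $T$ lie in the same component of $\mathcal H$. For the forward direction, given a reconfiguration sequence $S=R_0,\dots,R_\ell=T$ one picks a size-$\mu$ set $X_i\subseteq R_{i-1}\cap R_i$ for each $i$ (possible since consecutive covers share at least $\mu$ vertices); then $X_i,X_{i+1}\subseteq R_i$ witnesses adjacency in $\mathcal H$, and $X_1\subseteq S$, $X_\ell\subseteq T$. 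Conversely, a path $X_S=Y_0,\dots,Y_m=X_T$ in $\mathcal H$ provides vertex covers $Q_j\supseteq Y_j\cup Y_{j+1}$ of size $s$, and $S,Q_0,\dots,Q_{m-1},T$ is a reconfiguration sequence since consecutive members share some $Y_j$. The algorithm then just builds $\mathcal H$ and runs a breadth-first search between the size-$\mu$ subsets of $S$ and those of $T$.

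The hard part is implementing the adjacency test for $\mathcal H$: deciding, for a set $Y$ with $|Y|\le 2\mu$, whether $Y$ extends to a vertex cover of size exactly $s$ — equivalently, since any smaller vertex cover can be padded (as $s\le n$), whether $G[V\setminus Y]$ has a vertex cover of size at most $s-|Y|$. For general graphs this is \prb{Vertex Cover}, hence $\NP$-hard, but here the structural observation rescues us: $G[V\setminus Y]-(W\setminus Y)=G[V\setminus(Y\cup W)]$ is an induced subgraph of the bipartite graph $G-W$, hence bipartite, so $W\setminus Y$ is an odd cycle transversal of $G[V\setminus Y]$ of size at most $\mu-1$. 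A minimum vertex cover of a graph with a given odd cycle transversal of size $t$ can be computed in $2^{t}\cdot\mathrm{poly}(n)$ time: enumerate the $2^{t}$ ways of splitting the transversal between being inside and outside the cover, force the neighbours of the outside vertices into the cover, and solve the resulting forced minimum vertex cover on the remaining bipartite graph via König's theorem / maximum matching. With $t\le\mu-1$ every adjacency of $\mathcal H$ is decidable in $2^{O(\mu)}\cdot\mathrm{poly}(n)$ time; since $\mathcal H$ has $O(n^{\mu})$ vertices and $O(n^{2\mu})$ candidate edges, the whole algorithm runs in $n^{O(\mu)}$ time, giving membership in $\XP$.
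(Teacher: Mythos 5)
Your proposal is correct and follows essentially the same route as the paper: your auxiliary graph $\mathcal{H}$ is the paper's clique-compressed reconfiguration graph over size-$\mu$ subsets, your reachability equivalence is its Lemma on clique-nodes, and your adjacency test (exploiting that $G-(S\cap T)$ is bipartite, branching over the at most $\mu-1$ intersection vertices in $2^{O(\mu)}$ ways, and solving the remaining bipartite vertex cover via K\H{o}nig's theorem) matches the paper's algorithm, yielding the same $n^{O(\mu)}$-time $\XP$ bound.
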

Recall that in any graph \( G \), a vertex cover and an independent set are complementary: the complement of a vertex cover of $G$ is an independent set of $G$.  
Thus, \prb{ISR} and \prb{VCR} are generally considered equivalent problems.  
However, our result for \prb{VCR} in \Cref{Introthm5} stands in sharp contrast to the known results for \prb{ISR} (see also \Cref{tab:ISRandVCR}).  
In the proof of \Cref{Introthm5}, we design an \(\XP\) algorithm based on a \emph{clique-compressed reconfiguration graph}~\cite{ISIsoRsuga25}.

\renewcommand{\arraystretch}{1.1}
    \begin{table}[t]
        \caption{The complexity of \prb{ISR} under $\kTJ$ for various graph classes and values of \( k \).  
                Here, \( I \) denotes an initial independent set, and \( \Delta \), \( \mathsf{bw} \), and \( \mathsf{cw} \) denote the maximum degree, bandwidth, and clique-width, respectively, of a given \( n \)-vertex graph. Let \( \varepsilon_0 \in (0,1) \) be a fixed constant.}
        \setlength{\dashlinegap}{10pt}  
        \setlength{\dashlinedash}{1pt} 
        \centering
        \scalebox{0.76}{
        \begin{tabular}{p{2.6cm}|l|l|l|l|l|l}
        \hline
             & \multicolumn{6}{c}{{$\kTJ$}}\\ \hline
            & &   any const.  &  &\multicolumn{2}{l|}{$k=|I|-\mu$} & \\ \cline{5-6}
            &  $k=1~(\TJ)$ &$k\geq 2$ &any $k\leq\varepsilon_0|I|$& $\mu=O(\log |I|)$ & any const.~$\mu>0$  & $k=|I|$  \\ \hline 
            general & \multicolumn{2}{l|}{$\PSPACE$-c.} &$\PSPACE$-c.&open~(in $\NP$?)& $\NP$-c.~\cite{ISIsoRsuga25} & trivially \\ \cline{1-3}\cline{5-6}
            bounded $\Delta$ & $\PSPACE$-c.& $\PSPACE$-c. &[\Cref{ISR_PSPACEcom_largek}]& \multicolumn{1}{p{1.9cm}}{$\NP$-c.} & & always \\ \cline{1-1}
            $\Delta=3$ &\cite{HD05,DBLP:conf/iwpec/Zanden15,W18} & [\Cref{PSPACE-comp-planar}]  & & \multicolumn{1}{p{1.9cm}}{[\Cref{ISR_NPcom_maxdeg3}]} & & yes \\ \cline{1-1}\cdashline{2-3} \cline{4-4}
            planar $\cap$ \newline $\Delta=o(\frac{n}{\log n})$ & & & $\NP$-h. & \multicolumn{1}{p{1.9cm}}{[\Cref{ISR_NPcom_planarmaxdeg4}] \newline [\Cref{ISRinNPlogmu}]} &  &  \\ \cline{1-1}
            planar $\cap$ $\Delta=4$ & &&&\multicolumn{1}{p{1.9cm}}{}&  & \\ \cline{1-1}\cdashline{2-3} \cline{4-6}
            planar $\cap$ $\Delta=3$ & &&open&\multicolumn{2}{l|}{open~($\NP$-hard?)} &  \\ \cline{1-1}\cdashline{2-4} \cline{5-6}
            planar $\cap$ $\Delta=3$ & & && \multicolumn{2}{l|}{$\XP$} &  \\ 
            $\cap$ bounded $\mathsf{bw}$ &  & &&\multicolumn{2}{l|}{parameterized} &  \\ \cline{1-1}
            bounded $\mathsf{cw}$ & &&&\multicolumn{2}{l|}{by $\mu=|I|-k$} &  \\ \cline{1-3} 
            perfect & $\PSPACE$-c.~\cite{KMM12} & $\PSPACE$-c.~\cite{ISIsoRsuga25} &&\multicolumn{2}{l|}{\cite{ISIsoRsuga25}} &  \\ \cline{1-3}
            bipartite & $\NP$-c.~\cite{LM19}  & open &&\multicolumn{2}{l|}{} &  \\ \cline{1-3}
            claw-free & {\P}~\cite{BonsmaKW14,KMM12} & $\PSPACE$-c. ($k=2$) &&\multicolumn{2}{l|}{} &  \\ \cline{1-1}
            line &  & [\Cref{PSPACE-comp-claw-free}] && \multicolumn{2}{l|}{} &  \\ \hline
            even-hole-free & \multicolumn{6}{l}{always yes~\cite{KMM12}} \\ \hline
        \end{tabular}
        }
        \label{ISRresultskTJ}
    \end{table}

    \begin{table}[t]
        \caption{Comparison between the complexity of \prb{ISR} and \prb{VCR} under $\kTJ$ on general graphs and graphs of maximum degree $3$. The vertex subset \( A \) of the input graph represents the initial solution for both \prb{ISR} and \prb{VCR}. Specifically, \( A \) is the input independent set for \prb{ISR}, and the input vertex cover for \prb{VCR}. Let $\varepsilon_0\in (0,1)$ be some constant.}
        \centering
        \scalebox{0.9}{
        \begin{tabular}{c|c|c|c|c}
        \hline
              &  & \multirow{2}{*}{any $k\leq \varepsilon_0|A|$} & \multicolumn{2}{c}{$k=|A|-\mu$} \\ \cline{4-5}
             problems& graph classes& & $\mu=O(\log |A|)$ & any fixed $\mu\geq1$ \\ \hline
            \multirow{2}{*}{ISR} & general & $\PSPACE$-c. & open~(in $\NP$?) & $\NP$-c.~\cite{ISIsoRsuga25} \\ \cline{2-2}\cline{4-5}
             & maximum degree $3$ & [\Cref{ISR_PSPACEcom_largek}] & \multicolumn{2}{c}{$\NP$-c.~[\Cref{ISR_NPcom_maxdeg3},~\Cref{ISRinNPlogmu}]} \\ \hline\hline
             \multirow{2}{*}{VCR}& general & $\PSPACE$-c. & \multicolumn{2}{c}{$\XP$~[\Cref{VCR_XP}]} \\\cline{2-2}
             & maximum degree $3$ &[\Cref{VCR_PSPACEcom_largek}] & \multicolumn{2}{c}{parameterized by $\mu=|A|-k$} \\ \hline
        \end{tabular}
        }
        \label{tab:ISRandVCR}
    \end{table}

    \paragraph*{The results when $k$ is constant}
We establish the \(\PSPACE\)-completeness of \prb{ISR} for fixed \(k\) on several graph classes.
\begin{Introtheorem}\label{Introthm7}
    Let $k\geq 2$ be any fixed positive integer. \prb{ISR} under $\Rule\in\{\kTS,\kTJ\}$ is $\PSPACE$-complete for planar graphs of maximum degree $3$ and bounded bandwidth.
\end{Introtheorem}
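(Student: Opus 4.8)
\medskip
\noindent The plan is to dispose of membership quickly and then prove hardness by a single reduction that handles both $\Rule\in\{\kTS,\kTJ\}$. Membership in $\PSPACE$ is routine: a configuration is an independent set, hence has polynomial size; adjacency under $\kTJ$ and under $\kTS$ is testable in polynomial time (the bound $|A\symdif B|\le 2k$, together with a bipartite perfect-matching test for $\kTS$); and reachability is then decided by a nondeterministic polynomial-space search over configurations, so the problem lies in $\mathsf{NPSPACE}=\PSPACE$. For hardness, I would reduce from \prb{ISR} under the single-token rule $\Rule'\in\{\TS,\TJ\}$ on planar graphs of maximum degree $3$ and bounded bandwidth, which is $\PSPACE$-complete~\cite{HD05,DBLP:conf/iwpec/Zanden15,W18} (this is the case $k=1$); alternatively one may start from a linearly laid-out instance of \NCL, in the spirit of the bounded-bandwidth constructions of~\cite{DBLP:conf/iwpec/Zanden15,W18}, which carries the same structural guarantees.

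\smallskip
\noindent From an instance $(G,I,J)$ of the source problem---after an easy normalization of the source instance, or else with a slightly more careful analysis below---I would build $G'$ by replacing each vertex $v$ of $G$ with a planar gadget $\Gamma_v$ on $O(k)$ vertices whose independent sets of maximum size have exactly $k$ vertices and which carries designated \emph{port} vertices, and by joining $\Gamma_u$ and $\Gamma_w$, for every edge $uw$ of $G$, by a perfect matching between ports. The initial (resp.\ target) independent set $I'$ (resp.\ $J'$) of $G'$ fills $\Gamma_v$ completely for $v\in I$ (resp.\ $v\in J$) and leaves every other gadget empty, so that $|I'|=|J'|=k|I|$. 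Because the replacement is local---each $\Gamma_v$ is drawn inside a small disk with its ports on the outer face in the correct cyclic order, and all degrees are kept at most $3$---the graph $G'$ is planar of maximum degree $3$; and expanding each vertex-slot of a bandwidth-$b$ linear layout of $G$ into a block of the $O(k)$ vertices of $\Gamma_v$ gives $G'$ bandwidth $O(bk)=O(1)$.

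\smallskip
\noindent Forward correctness is immediate: a legal $\Rule'$-move $u\to w$ in $G$ is simulated in $G'$ by moving the full block of $k$ tokens from $\Gamma_u$ to $\Gamma_w$ in a single step, which meets $|A\symdif B|=2k$ with equality and, for $\kTS$, uses the interface matching. The backward direction is the heart of the matter: I would prove that every configuration reachable from $I'$ that is \emph{canonical}---each gadget either completely full or completely empty---encodes an independent set of $G$ of size $|I|$ that is $\Rule'$-reachable from $I$, whence the reduction follows. Two things must be established: (i) a $\Rule$-move between two \emph{distinct} canonical configurations moves exactly one full block along an edge of $G'$, since its symmetric difference is a positive multiple of $2k$ and hence equal to $2k$, and such a move is precisely a legal $\Rule'$-move of $G$; and (ii) every \emph{split} configuration---one in which a block straddles two gadgets, or a gadget is only partially filled---is a dead end, meaning that no $\Rule$-move from it, even one that spends the entire $k$-token budget, creates a canonical configuration not already obtainable by single block transfers. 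Concretely, $\Gamma_v$ has to be engineered so that once part of the budget is committed to an unfinished transfer the remainder cannot also drive a second transfer, and so that a block split between $\Gamma_u$ and $\Gamma_w$ can only be recombined at $\Gamma_u$ or at $\Gamma_w$, never re-routed elsewhere. Since every intra- and inter-gadget token move used above is a slide along an edge of $G'$, the construction gives $\PSPACE$-hardness for $\kTS$ directly, and since the same bookkeeping shows that the more permissive $\kTJ$ rule reaches no further canonical configurations, it gives $\PSPACE$-hardness for $\kTJ$ as well.

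\smallskip
\noindent The hard part will be item (ii): excluding \emph{every} way of exploiting simultaneous $k$-token moves---both couplings of moves in two different gadgets and detours of a block through a neighboring gadget and back out---while the bounded-bandwidth requirement rules out any ``wide'' locking structure and forces each token to interact with only $O(1)$ others. Designing the gadget $\Gamma_v$ so that these exploits are impossible, and then proving the dead-end property through a careful case analysis of an arbitrary $\Rule$-move, is where the real work lies; planarity, maximum degree $3$, and bounded bandwidth of $G'$ then come for free from the locality and thinness of the construction.
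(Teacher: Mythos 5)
Your plan correctly handles $\PSPACE$-membership, but the hardness argument has a genuine gap: the gadget $\Gamma_v$ and the ``dead-end'' property (ii) are exactly where the proof lives, and they are neither constructed nor established --- and there are concrete reasons to doubt the block-transfer design can be made to work as described. First, the forward simulation under $\kTS$ already fails: moving all $k$ tokens from $\Gamma_u$ to $\Gamma_w$ in a single step requires a perfect matching in $G'$ between the $k$ vacated and the $k$ newly occupied vertices, but with maximum degree $3$ the two gadgets are joined by only $O(1)$ port edges, so such a move is impossible once $k$ exceeds the number of ports; a transfer must then be pipelined over several moves, which forces exactly the split configurations that (ii) is supposed to exclude. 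Second, since the token count $k|I|$ is in general strictly below $k\cdot\alpha(G')$, there is slack: two partial transfers can proceed in parallel through non-canonical intermediates (interior vertices of a gadget do not conflict with a neighbouring gadget's ports), and this can realize a simultaneous ``swap'' $u\to w$, $v\to x$ in which each single-token move is blocked by the other token --- a target configuration unreachable under the source rule. So without an explicit gadget ruling this out, the backward direction is not just unproven but at risk of being false. Finally, the closing claim that $\kTJ$ reaches no further canonical configurations than the slides is incorrect for this construction, since under $\kTJ$ a full block may jump to an empty gadget not adjacent in $G$; this is repairable by pairing a $\TJ$ source with the $\kTJ$ target, but not as written.

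The paper sidesteps all of these issues by reducing from \NCL{} (which is $\PSPACE$-complete on planar, maximum-degree-$3$, bounded-bandwidth machines) rather than from \prb{ISR} with $k=1$, and --- crucially --- by making the total number of tokens equal to the maximum independent set size of the constructed graph. Each NCL-edge gadget is a cycle of length $2k$ that must always contain exactly $k$ tokens (which can only rotate within the cycle, encoding the edge's orientation by which connector is occupied), and each \textsc{and}/\textsc{or} gadget must always contain exactly one internal token. This tightness removes all slack: no token can ever cross between gadgets, no partially-filled configurations exist, and the equivalence of $\kTJ$ and $\kTS$ on the constructed instance comes for free. If you want to salvage your vertex-gadget approach, you would need to engineer a comparable capacity argument so that every reachable configuration is rigid; as it stands, the key lemma is missing and the intended gadget behaviour conflicts with the degree-$3$ and $\kTS$ constraints.
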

Together with known results for the case \(k = 1\)~\cite{HD05,DBLP:conf/iwpec/Zanden15,W18}, our findings provide a complete characterization of the \(\PSPACE\)-completeness of \prb{ISR} under \(\kTJ\) and \(\kTS\) for every fixed positive integers~\(k\) and planar graphs of maximum degree $3$ and bounded bandwidth.

We further demonstrate the following \Cref{Introthm8}.
\begin{Introtheorem}\label{Introthm8}
    \prb{ISR} under $\Rule\in \{2\text{-}\TJ, 2\text{-}\TS\}$ is $\PSPACE$-complete for line graphs.
\end{Introtheorem}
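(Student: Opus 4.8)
The plan is to reduce, in polynomial time, from \prb{ISR} under $\Rule\in\{2\text{-}\TJ,2\text{-}\TS\}$ on graphs of maximum degree $3$, which is $\PSPACE$-complete by \Cref{PSPACE-comp-planar}. Membership of \prb{ISR} on line graphs in $\PSPACE$ is inherited from the standard $\PSPACE$ upper bound for $\Rule$-\prb{ISR} with a fixed value of $k$ (configurations have polynomial size, and a reconfiguration sequence can be searched for in nondeterministic polynomial space), so the whole task is the hardness reduction, carried out uniformly for both rules.

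Given an instance $(H,I,J,\Rule)$ with $\Delta(H)\le 3$, I would construct a graph $B$ and output $G:=L(B)$; being a line graph is then automatic. Concretely, replace each vertex $v$ of $H$ by a \emph{vertex gadget}: a clique $Q_v$ of $G$ (a star of $B$) consisting of a distinguished \emph{home} vertex $h_v$ together with one \emph{port} $p_v^e$ for each edge $e$ of $H$ incident to $v$; for each edge $e=uv$ of $H$ add the edge $p_u^e p_v^e$ (i.e.\ identify the two corresponding leaves of the stars in $B$). Equivalently, $B$ is obtained from $H$ by subdividing every edge once and hanging one pendant edge at each branch vertex, and $G=L(B)$. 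Map $I$ and $J$ to the \emph{canonical} configurations $I_G:=\{h_v:v\in I\}$ and $J_G:=\{h_v:v\in J\}$; since home vertices are pairwise non-adjacent in $G$, these are independent sets of $G$ of size $|I|$.

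For the forward direction, observe that any $\Rule$-move in $H$ has symmetric difference at most $4$, hence relocates at most two tokens, and (after relabelling away the degenerate cases, which collapse) the two relocations are vertex-disjoint. Under $2\text{-}\TJ$, relocating a token from $v$ to $w$ is simulated by jumping it from $h_v$ to $h_w$ (the only neighbours of $h_w$ lie in the empty clique $Q_w$), and two relocations are done simultaneously as a single $2\text{-}\TJ$-move of $G$. Under $2\text{-}\TS$, a slide from $v$ to $w$ along $e=vw$ is realised by the three consecutive single-vertex slides $h_v\to p_v^e\to p_w^e\to h_w$ along the length-three path created by the edge gadget, every intermediate configuration being independent because the far side is token-free; the two slides are then carried out one after the other. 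Either way sizes are preserved and every step stays within budget $2$.

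The crux --- and the step I expect to be the main obstacle --- is the reverse direction: converting an arbitrary $\Rule$-reconfiguration sequence of $G$ between $I_G$ and $J_G$ into one of $H$ between $I$ and $J$. The subtlety is that an intermediate independent set $C$ of $G$ may park a token on a port $p_v^e$ rather than on $h_v$, and a port token conflicts only with its unique partner $p_u^e$; hence $C$ can be independent in $G$ even though its projection $\psi(C):=\{v:C\cap Q_v\neq\emptyset\}$ is not an independent set of $H$, so $G$-reachability is a priori strictly more permissive than $H$-reachability. The plan is to prove an invariant identifying which gadgets carry parked port tokens and showing that every such token can always be slid back to its home without obstruction; from this one normalizes any $G$-sequence so that, outside a bounded active region, all tokens sit on homes, and then verifies that along a normalized sequence $\psi$ evolves by legal $\Rule$-moves of $H$ --- in particular no parking maneuver enables a transition infeasible in $H$. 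Should the bare clique gadget prove too permissive for this normalization, the fallback is to refine the edge gadget --- for instance replacing each port edge by a short chain of triangles equipped with ``lock'' vertices --- so that occupying any vertex of $Q_v$ blocks every representative of each $H$-neighbour of $v$; this forces $\psi(C)$ to be an independent set of $H$ for every independent set $C$ of $G$, while still keeping every vertex of the construction in at most two cliques, hence $G$ a line graph. Combining the two directions shows the two instances are equivalent, yielding $\PSPACE$-hardness and completing the proof.
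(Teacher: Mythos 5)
Your reduction is only half a proof: the forward simulation is fine, but the reverse (soundness) direction --- which you yourself flag as ``the main obstacle'' --- is left as a plan, and it is precisely where the construction is in danger of being wrong, not merely unproved. In $G=L(B)$ the ports $p_v^e$ and $p_u^{e'}$ with $uv\in E(H)$ but $e\neq uv\neq e'$ are \emph{not} adjacent, so an intermediate independent set of $G$ can project to a set containing two $H$-adjacent vertices; worse, under $2\text{-}\TJ$ a token may jump to a port of a gadget far from its current one, and since the canonical configurations use only $|I|$ of the many available vertices of $G$ (a maximum independent set of $L(B)$ is a maximum matching of $B$, which is much larger), there is no size pressure forcing tokens back onto home vertices. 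Your proposed ``normalization invariant'' is therefore not a routine verification but the whole difficulty, and the fallback ``lock'' gadget is neither constructed nor checked to stay a line graph and to preserve reachability in both directions. As it stands, a no-instance of $H$ could conceivably become a yes-instance of $G$ via parked-port intermediates, and nothing in the proposal rules this out.

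The paper avoids this issue entirely by choosing a different source problem: \prb{Perfect Matching Reconfiguration}, which is $\PSPACE$-complete on bipartite graphs of maximum degree $5$. Matchings of $G$ are exactly independent sets of $L(G)$, and since $M_s$ and $M_t$ are perfect, every intermediate independent set of the same size is again a perfect matching; two distinct perfect matchings have symmetric difference at least $4$, so every $2\text{-}\TJ$ (or $2\text{-}\TS$) move has symmetric difference exactly $4$, i.e.\ is a flip along a $4$-cycle, and conversely every flip is such a move. This built-in rigidity (maximum-size solutions with no slack) is exactly what your gadget lacks and what makes the soundness direction immediate there. If you want to salvage your approach, you would need a gadget (and a token budget) that forces every intermediate configuration to keep one token per gadget on a ``legal'' position, which essentially amounts to re-deriving that rigidity by hand.
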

This result stands in sharp contrast to the polynomial-time solvability of \prb{ISR} under \(\TJ\) and \(\TS\) on claw-free graphs~\cite{BonsmaKW14}, which include line graphs as a special case.

\paragraph*{The results when $k$ is superconstant}
We investigate the complexity of \prb{ISR} under $\kTJ$ when $k$ is superconstant in the initial independent set size $|I|$.
To this end, we construct a polynomial-time reduction from the ``optimization variant'' called \prb{MaxminISR} to \prb{ISR}.
By using the $\PSPACE$-hardness of approximating \prb{MaxminISR}~\cite{DBLP:conf/stoc/HiraharaO24,karthikc.s.2023inapproximability,DBLP:conf/stacs/Ohsaka23}, we prove that \prb{ISR} under \(\kTJ\) is \(\PSPACE\)-complete on graphs of maximum degree~3 for a wide range of values of \(k\), including \(k = O(1)\), \(\Theta(\log |I|)\), \(\Theta(|I|^{O(1)})\), and \(\Theta(|I|)\).

In particular, we show the following \Cref{Introthm9}.

\setcounter{Introtheorem}{8}
\begin{Introtheorem}\label{Introthm9}
    There exists some constant $\varepsilon_0\in(0,1)$ such that \prb{ISR} under $\kTJ$ on graphs of maximum degree $3$ is $\PSPACE$-complete for any $k$ satisfying the following condition:
    there exists a constant $c$ such that $k\leq \varepsilon_0|I|$ holds whenever $|I|\geq c$, where $I$ is the initial independent set of the input graph. 
\end{Introtheorem}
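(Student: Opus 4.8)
The plan is to reduce from the optimization variant \prb{MaxminISR}, for which a constant-factor approximation of the maxmin value is $\PSPACE$-hard, even on graphs of bounded maximum degree~\cite{DBLP:conf/stoc/HiraharaO24,karthikc.s.2023inapproximability,DBLP:conf/stacs/Ohsaka23}. Concretely, I use that there is a fixed $\varepsilon\in(0,1)$ for which it is $\PSPACE$-hard to distinguish the following two cases for an instance consisting of a graph $G$ and two independent sets $I_s,I_t$ of $G$ of common size $q$: (completeness) $I_s$ and $I_t$ are reconfigurable under ordinary $\TJ$, i.e.\ there is a size-$q$ reconfiguration through genuine independent sets; and (soundness) every size-$q$ reconfiguration (allowing intermediate vertex sets that need not be independent) passes through a set whose largest independent subset has size below $\varepsilon q$. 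Given such an instance I output it essentially unchanged as an \prb{ISR} instance with rule $\kTJ$, where $k=k(|I_s|)$ is the budget the problem assigns, the claim being that this is already a valid reduction. Membership of \prb{ISR} under $\kTJ$ in $\PSPACE$ is routine: guess the reconfiguration sequence step by step while storing only the current configuration ($\mathsf{NPSPACE}=\PSPACE$), so it remains to prove $\PSPACE$-hardness.

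For completeness, a single $\TJ$ move is a $\kTJ$ move, so a $\TJ$-reconfiguration from $I_s$ to $I_t$ is also a $\kTJ$-reconfiguration; thus a yes-instance of \prb{MaxminISR} maps to a yes-instance. For soundness I use a ``no-shortcut'' lemma: a single $\kTJ$ move unrolls into at most $2k$ single-token jumps that never lose more than $k$ in solution quality. Indeed, if $A,B$ are independent sets of size $q$ with $|A\symdif B|\le 2k$, writing $A\setminus B=\{a_1,\dots,a_r\}$ and $B\setminus A=\{b_1,\dots,b_r\}$ with $r\le k$, the size-$q$ sets $C_i=(A\setminus\{a_1,\dots,a_i\})\cup\{b_1,\dots,b_i\}$ interpolate between $A=C_0$ and $B=C_r$ by single-token jumps, and each $C_i$ contains the independent set $A\cap B$ of size at least $q-r\ge q-k$. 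Hence any $\kTJ$-reconfiguration from $I_s$ to $I_t$ gives rise to a size-$q$ reconfiguration through vertex sets each of which has an independent subset of size at least $q-k$. Setting $\varepsilon_0:=1-\varepsilon$, whenever $k\le\varepsilon_0 q$ this bound is at least $\varepsilon q$, contradicting the soundness guarantee; therefore a no-instance of \prb{MaxminISR} maps to a no-instance of \prb{ISR} under $\kTJ$ for every $k\le\varepsilon_0|I_s|$.

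It remains to handle the range of $k$ and the restriction to maximum degree $3$. Fix any budget function $k(\cdot)$ with a constant $c$ such that $k(m)\le\varepsilon_0 m$ whenever $m\ge c$. On input of length $N$ for a fixed $\PSPACE$-complete language, the \prb{MaxminISR} hardness produces an instance with $q=\mathrm{poly}(N)$, so for all but finitely many $N$ we have $q\ge c$ and hence $k(q)\le\varepsilon_0 q$; the finitely many remaining inputs are decided directly and mapped to a fixed trivial instance. Since the argument above applies to \emph{every} $k$ in the interval $1\le k\le\varepsilon_0 q$, a single construction covers simultaneously $k=O(1)$, $k=\Theta(\log|I|)$, $k=\Theta(|I|^{O(1)})$, and $k=\Theta(|I|)$, with no padding required. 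To force maximum degree $3$, I will compose with the subcubic gadgetization underlying the $\PSPACE$-completeness of \prb{ISR} on planar subcubic graphs~\cite{HD05,DBLP:conf/iwpec/Zanden15,W18}: each high-degree vertex is replaced by a constant-size degree-$3$ gadget so that independent sets, their sizes, and $\TJ$/$\kTJ$ moves translate faithfully, and so that the largest-independent-subset quality measured by \prb{MaxminISR} degrades by at most a fixed constant factor. That factor only rescales $q$ and is absorbed into $\varepsilon_0$, so the gap, hence the nonempty interval of hard budgets, survives.

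The main obstacle is this degree-reduction step together with arranging the \prb{MaxminISR} hardness in exactly the form used above. The no-shortcut lemma itself is clean and is the conceptual heart: moving $k$ tokens at once is never better than moving them one at a time while ``parking'' on the common core $A\cap B$, which costs precisely $k$ in solution size, so the extra power of $\kTJ$ cannot bypass the $\varepsilon$-bottleneck forced by soundness. What needs care is (i) obtaining the cited inapproximability of \prb{MaxminISR} with \emph{perfect completeness} (the two target sets genuinely $\TJ$-reconfigurable, not merely both of high maxmin value), and (ii) designing the subcubic gadget so that it preserves simultaneously this perfect-completeness property and the multiplicative soundness gap; standard vertex-expansion gadgets for independent-set reconfiguration achieve the structural translation, but verifying that the quality measure degrades by only a constant factor in the soundness case is the delicate point.
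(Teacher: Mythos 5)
Your core idea coincides with the paper's key lemma: unroll each $\kTJ$ move and observe that consecutive solutions share at least $|I|-k$ vertices, so for $k\leq \varepsilon_0|I|$ the extra power of $\kTJ$ cannot cross the soundness bottleneck of the gap version of \prb{MaxminISR}; the completeness direction (a $\TJ$ sequence is a $\kTJ$ sequence) is also the paper's. However, there are two gaps in how you set this up. First, your soundness hypothesis is not the statement that the cited inapproximability results provide: you quantify over size-$q$ swap sequences of \emph{arbitrary} (possibly non-independent) vertex sets, measured by their largest independent subset, because your interpolation $C_i=(A\setminus\{a_1,\dots,a_i\})\cup\{b_1,\dots,b_i\}$ keeps the size fixed at the cost of independence. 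The known hardness (\cite{DBLP:conf/stoc/HiraharaO24,karthikc.s.2023inapproximability,DBLP:conf/stacs/Ohsaka23}) is for $\mathsf{val_{max}}$ over $\TAR$ sequences of genuine independent sets, and your stronger formulation does not follow from it by any argument you give (chaining the independent subsets of consecutive $C_i$'s fails, since they may intersect in as few as $2s-(q+1)$ vertices). This is easily repaired — unroll each $\kTJ$ step as ``remove all of $A\setminus B$ one by one, then add all of $B\setminus A$ one by one,'' which stays inside independent sets of size at least $q-k$ and directly contradicts $\mathsf{val_{max}}(I,J)<(1-\varepsilon_0)(q-1)$ — and that is exactly what the paper's Lemma~\ref{MaxmintoISR} does, but as written your argument appeals to an unproven variant of the source theorem.

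Second, and more seriously, your treatment of the maximum-degree-$3$ restriction is a genuine gap. You propose to compose with the NCL-style subcubic gadgetization from the $\PSPACE$-completeness proofs of \prb{ISR} and assert that it ``degrades the quality measure by at most a fixed constant factor,'' but gap preservation is precisely what such exact-reconfiguration gadgets are not designed for and is not verified (nor is faithfulness of the translation under simultaneous $k$-token moves); you yourself flag this as the delicate point and leave it open, so the degree-$3$ claim in the theorem is not actually established by your write-up. The paper avoids this step entirely: Theorem~\ref{MaxminISRPSPACE-h}, as proved in the cited works, already holds on graphs of maximum degree $3$ with $|I|=|J|=\alpha(G)$, so the gap reduction can be applied verbatim on subcubic instances and no gadget construction is needed. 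With the soundness formulation corrected as above and the degree-$3$ source hardness invoked directly instead of the sketched gadget step, your argument becomes essentially the paper's proof.
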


We show that a similar result holds for \prb{VCR}.
\setcounter{Introtheorem}{11}
\begin{Introtheorem}\label{Introthm12}
    There exists some constant $\varepsilon_0\in(0,1)$ such that \prb{VCR} under $\kTJ$ on graphs of maximum degree $3$ is $\PSPACE$-complete for any $k$ satisfying the following condition:
    there exists a constant $c$ such that $k\leq \varepsilon_0|S|$ holds whenever $|S|\geq c$, where $S$ is the initial vertex cover of the input graph.
\end{Introtheorem}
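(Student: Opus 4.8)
The plan is to exploit the complementary relationship between independent sets and vertex covers at the level of reconfiguration graphs. For any graph $G$ and vertex subsets $A,B$ with $|A|=|B|$, the set $A$ is an independent set of $G$ exactly when $V(G)\setminus A$ is a vertex cover of $G$, and moreover $|A\symdif B| = |(V(G)\setminus A)\symdif(V(G)\setminus B)|$ with $(V(G)\setminus A)\setminus(V(G)\setminus B)=B\setminus A$ and $(V(G)\setminus B)\setminus(V(G)\setminus A)=A\setminus B$. Hence, for a \emph{fixed} graph $G$, the complementation map $A\mapsto V(G)\setminus A$ is an isomorphism from the $\kTJ$ reconfiguration graph on independent sets of size $N$ to the $\kTJ$ reconfiguration graph on vertex covers of size $|V(G)|-N$. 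In particular, an \prb{ISR} instance $(G,I,J,\kTJ)$ is a yes-instance exactly when the \prb{VCR} instance $(G,\, V(G)\setminus I,\, V(G)\setminus J,\, \kTJ)$ is, with the \emph{same} graph and the \emph{same} value of $k$. Since the reduction underlying \Cref{Introthm9} outputs graphs of maximum degree $3$, the complemented instances are again of maximum degree $3$, and \prb{VCR} is in $\PSPACE$ on all inputs, so only the parameter regime has to be matched up.

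The parameter condition to be achieved is $k\le\varepsilon_0|S|$ (for $|S|$ above a constant) with $S:=V(G)\setminus I$, whereas complementation of \Cref{Introthm9} supplies $k\le\varepsilon_0|I|$; since $|S|=|V(G)|-|I|$, it suffices to produce instances on which the vertex cover is a constant fraction of $V(G)$, i.e.\ $|I|=O(|S|)$. I would arrange this by a normalization followed by padding. First, make $I$ and $J$ maximum independent sets of $G$ (so that $S$ and $T$ are minimum vertex covers), which can be done by attaching constant-size gadgets that dominate the previously undominated vertices while keeping the maximum degree at $3$ and creating no new reconfiguration moves; this is needed so that, after padding, the padding tokens are confined. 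Then take the disjoint union of $G$ with polynomially many triangles: each triangle contributes exactly one vertex to every maximum independent set and exactly two to every minimum vertex cover, keeps the maximum degree at $3$, and---because $I,J$ are now maximum---forces every configuration of any $\kTJ$ reconfiguration sequence to keep exactly two tokens on each triangle and exactly $|S|$ tokens on the copy of $G$; restricting a sequence to $G$ (deleting steps that move only triangle tokens), and conversely lifting a $G$-sequence by freezing the triangles, shows that reconfigurability is preserved. Taking enough triangles makes $|S|\ge|I|$ on the padded instance, so the \prb{VCR} growth condition in terms of $|S|$ is implied by an \prb{ISR} growth condition in terms of $|I|$ that still meets the hypothesis of \Cref{Introthm9}, and the theorem follows. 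An equivalent route that avoids the explicit normalization is to rerun the reduction behind \Cref{Introthm9} with independent sets replaced by vertex covers throughout and with \prb{MaxminISR} replaced by its complementary ``min--max vertex cover reconfiguration'' problem---reconfigure one vertex cover into another so as to minimize the largest vertex cover along the sequence---whose $\PSPACE$-hardness of approximation is immediate from that of \prb{MaxminISR} by the same complementation; this directly yields $\Delta\le 3$ \prb{VCR} instances with the desired relation between $k$ and $|S|$.

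The main obstacle is precisely this parameter bookkeeping: the statement quantifies over \emph{all} admissible token functions $k(\cdot)$, so for each we must output maximum-degree-$3$ \prb{VCR} instances on which $k$ lies in the admissible range while the underlying optimization instance is still faithfully encoded, which requires pinning down the relationship among $|I|$, $|S|$, and $|V(G)|$ uniformly over the produced family. The triangle padding together with the maximum-independent-set normalization is what delivers this control; the normalization step---forcing $I$ and $J$ to be maximum without disturbing the rest of the construction and without exceeding degree $3$---is the most delicate piece, and it is exactly what the alternative ``rerun the reduction for vertex covers'' route is designed to sidestep.
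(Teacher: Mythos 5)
Your opening observation is correct and clean: complementation $A\mapsto V(G)\setminus A$ preserves symmetric differences, so for a fixed graph the $\kTJ$ reconfiguration graphs on size-$N$ independent sets and size-$(|V(G)|-N)$ vertex covers are isomorphic, and the graph and $k$ are unchanged. The genuine gap is that your parameter bookkeeping runs in the wrong direction. In the reduction you are \emph{given} an arbitrary admissible rule with $k\leq\varepsilon_0|S|$ and must conclude that, on the instances you produce, it also satisfies the hypothesis of \Cref{ISR_PSPACEcom_largek}, i.e.\ $k\leq\varepsilon_0'|I|$; this needs an \emph{upper} bound $|S|\leq\rho|I|$ for a constant $\rho$ (then take $\varepsilon_0:=\varepsilon_0'/\rho$), equivalently $|I|=\Omega(|V(G)|)$. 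You instead aim for the reverse bounds (``$|S|\geq|I|$'', ``$|I|=O(|S|)$'', the vertex cover being a constant fraction of $V(G)$), and the implication you state --- that the VCR condition in terms of $|S|$ is \emph{implied by} an ISR condition in terms of $|I|$ --- is the converse of what the argument requires. If $|S|\gg|I|$, the adversary may choose $k=\lfloor\varepsilon_0|S|\rfloor\geq|I|$, making the complemented ISR instance trivial, so no hardness transfers; your stated sufficient condition really is insufficient. (Incidentally your triangle padding does force $|S|/|I|\leq 3$, so the construction could be salvaged, but not by the justification you give.) The normalization step is also unsubstantiated: dominating uncovered vertices yields maximality, not maximumness, and ``creating no new reconfiguration moves'' is asserted without proof. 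It is in fact unnecessary: the hard instances behind \Cref{ISR_PSPACEcom_largek} (via \Cref{MaxminISRPSPACE-h}) already have $|I|=|J|=\alpha(G)$ with $\alpha(G)/|V(G)|\in[\frac{1}{3},\frac{1}{2}]$, and that density clause is exactly the needed bound $|S|\leq 2|I|$ --- no gadgets or padding required.

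For comparison, the paper never passes through ISR: it stays on the vertex-cover side, invoking the gap hardness of \prb{MinmaxVCR} on degree-$3$ graphs (\Cref{MinmaxVCRPSPACE-h}, whose statement already builds in $|S|=|T|=\beta(G)$ and $\beta(G)/|V(G)|\in[\frac{1}{2},\frac{2}{3}]$) together with \Cref{MinmaxtoVCR}: a single $\kTJ$ step between covers can be realized under $\TAR$ through covers of size at most $|S|+k$, while a $\TAR$ sequence never exceeding $|S|+1$ yields a $\TJ$, hence $\kTJ$, sequence. This is essentially your fallback ``rerun the reduction for vertex covers'' route; note, however, that deriving the \prb{MinmaxVCR} gap from the \prb{MaxminISR} gap by complementation is itself not free --- the multiplicative gaps only translate because $\alpha(G)$ and $\beta(G)$ are both $\Theta(|V(G)|)$ on these instances --- so the same density bookkeeping you glossed over reappears there and must be stated explicitly.
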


\paragraph{Outline.}
The remainder of this paper is organized as follows.  
We begin with preliminaries in \Cref{sec:Preliminaries}.  
In \Cref{sec:guaranvalue}, we study the problems \prb{ISR} and \prb{VCR} under $\kTJ$, focusing on the guaranteed value $\mu$.  
In \Cref{sec:k_const}, we demonstrate the \(\PSPACE\)-completeness of these problems when \(k\) is fixed.  
Finally, in \Cref{sec:large_k}, we analyze the case where \(k\) is superconstant.

\section{Preliminaries}\label{sec:Preliminaries}
For a positive integer $i$, we write $[i] = \{1,2,\ldots,i\}$. 
Let $G=(V,E)$ be a finite, simple, and undirected graph with the vertex set $V$ and the edge set $E$. 
We use $V(G)$ and $E(G)$ to denote the vertex set and the edge set of $G$, respectively. 
For a vertex $v$ of $G$, $N_G(v)$ and $N_G[v]$ denote the \emph{open neighborhood} and the \emph{closed neighborhood} of $v$, respectively, that is, $N_G(v)=\{u\in V\colon uv\in E\}$ and $N_G[v]=N_G(v)\cup \{v\}$. 
For a vertex set $X\subseteq V(G)$, we define $N_G(X)=\{v\notin X\colon uv\in E(G), u\in X\}$ and $N_G[X]=N_G(X)\cup X$.
A subgraph of $G$ is a graph $G'$ such that $V(G')\subseteq V(G)$ and $E(G') \subseteq E(G)$.
For a subset $S\subseteq V(G)$, $G[S]$ denotes the subgraph induced by $S$.
The \emph{degree} of a vertex $v$ in a graph $G$ is the number of edges incident to $v$.  
The \emph{maximum degree} of $G$ is the largest degree among all vertices in $G$.
A sequence $\langle v_0, v_1, \ldots, v_\ell \rangle_{G}$ of vertices of a graph $G$, where $v_{i-1}v_i \in E$ for each $i \in [\ell]$, is called a \emph{path} if all the vertices are distinct. It is called a \emph{cycle} if $v_0, v_1, \ldots, v_{\ell-1}$ are distinct and $v_0 = v_\ell$. The value $\ell$ is referred to as the \emph{length} of the path or cycle.
A graph $G$ is said to be \emph{connected} if there exists a path between $u$ and $v$ for any pair of vertices $u,v \in V(G)$. 
The \emph{chromatic number} of $G$ is the smallest positive integer $c$ such that $G$ has a \emph{$c$-coloring}, where a $c$-coloring of a graph $G$ is a mapping $f\colon V(G)\rightarrow[c]$ such that $f(u)\neq f(v)$ if $uv\in E(G)$.
The \emph{bandwidth} of a graph \( G = (V, E) \) is the minimum integer \( b \) such that there exists a bijection \( f\colon V \to \{1, \dots, |V|\} \) satisfying \( |f(u) - f(v)| \leq b \) for every edge \( uv \in E \).

We conclude this section with the following simple remark. 
We can observe that the problems \prb{ISR} and \prb{VCR} can be solved in nondeterministic polynomial space. 
Thus, by applying Savitch's Theorem~\cite{S70}, these problems are in $\PSPACE$. 
Consequently, to establish $\PSPACE$-completeness of \prb{ISR} and \prb{VCR}, it suffices to prove $\PSPACE$-hardness.

\section{When Parameterized by a Guaranteed Value}\label{sec:guaranvalue}
In this section, we discuss \prb{ISR} and \prb{VCR} under $\kTJ$, focusing on the guaranteed value.

\subsection{ISR}\label{ISRguaranteed}
We \rev{investigate} the $\NP$-completeness of \prb{ISR} under $\kTJ$ when $k=|I|-\mu$, \rev{where $I$ is the initial independent set of an \prb{ISR} instance, and $\mu$ is a parameter called the \emph{guaranteed value}.
Hereafter, we use $I$ to denote the initial independent set of an \prb{ISR} instance.}
For any fixed positive integer $\mu$, it is known that \prb{ISR} under $\kTJ$ is $\NP$-complete when $k = |I| - \mu$~\cite{ISIsoRsuga25}.  
In \Cref{subsubsec:ISR_NPhard}, we show that \prb{ISR} under $\kTJ$ remains $\NP$-hard even when the input graph is restricted to graphs of maximum degree 3, or to planar graphs of maximum degree 4, where $k = |I| - \mu$ for any fixed positive integer $\mu$.
Then, in \Cref{subsubsec:ISRmemberNP}, we prove that the problem remains in $\NP$ even when $\mu = O(\log |I|)$ and an input graph is a graph of bounded chromatic number and maximum degree $o(\frac{n}{\log n})$, \rev{where $n$ is the number of vertices in the input graph}.

\subsubsection{NP-hardness}\label{subsubsec:ISR_NPhard}
\rev{We show the $\NP$-hardness of \prb{ISR} under $\kTJ$ for graphs of maximum degree $3$ and planar graphs of maximum degree $4$ when $k = |I| - \mu$ with any fixed positive integer $\mu$.}
\rev{To this end, we construct a chain of reductions.}
As a source problem of our reduction, we will use \prb{Exactly 3-SAT} (\prb{E3-SAT} for short), which is $\NP$-complete~\cite{GJ79}.
Firstly, \prb{E3-SAT} is reduced to \prb{Internal Exactly 3-SAT} (\prb{IntE3-SAT} for short), a new variant of \prb{E3-SAT} (to the best of our knowledge).
Afterward, \prb{IntE3-SAT} is reduced to our problem.

Here, let us define \prb{E3-SAT} and \prb{IntE3-SAT}. 
Let $\phi=C_1\land C_2\land \cdots \land C_m$ be a Boolean formula in conjunctive normal form (CNF), and $X=\{x_1,\ldots,x_n\}$ be \rev{the} variable set of $\phi$. 
Each clause $C_i$ \rev{for $i\in[m]$ in $\phi$} is a disjunction of literals and each literal appears as either a positive form $x_j$ or a negative form $\overline{x_j}$ for a variable $x_j\in X$. 
A variable assignment is a mapping $b\colon X \to \{\True, \False\}$.
We \rev{say} that a variable assignment \emph{satisfies} $\phi$ if $\phi$ evaluates to $\True$.
A CNF formula $\phi$ is called an E$k$-CNF formula if any clause in $\phi$ consists of exactly $k$ literals. Given an E3-CNF \rev{formula} $\phi$, \prb{E3-SAT} asks whether there exists a variable assignment that satisfies $\phi$.

A variable assignment $b$ is called an \emph{all-$\True$ assignment} if $b(x) = \True$ for all $x\in X$, and an \emph{all-$\False$ assignment} if $b(x) = \False$ for all $x\in X$.
Otherwise, it is called \emph{mixed}.
An E$k$-CNF formula $\phi$ is called \emph{sandwiched} if all-$\True$ and all-$\False$ assignments satisfy $\phi$.
In other words, any clause of a sandwiched E$k$-CNF formula $\phi$ contains both positive and negative literals.
Given a sandwiched \rev{E$k$-CNF} formula $\phi$, \rev{\prb{IntE$k$-SAT}} asks whether there exists a mixed variable assignment that satisfies $\phi$.
\rev{We first show that \prb{IntE3-SAT} is $\NP$-complete by reducing \prb{E3-SAT}.}

\begin{lemma}\label{IntSAT_NPcomp}
    \prb{IntE3-SAT} is $\NP$-complete.
\end{lemma}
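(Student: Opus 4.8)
The plan is to reduce from \prb{E3-SAT}. Membership in $\NP$ is immediate: a mixed satisfying assignment is a polynomial-size certificate that can be verified in polynomial time (we check that it satisfies every clause and that it is neither all-$\True$ nor all-$\False$). So the work is in the hardness reduction.

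Given an arbitrary E3-CNF formula $\phi$ over variables $x_1, \ldots, x_n$, I want to build a *sandwiched* E3-CNF formula $\psi$ (one in which every clause contains at least one positive and at least one negative literal, so that both the all-$\True$ and all-$\False$ assignments satisfy it) such that $\psi$ has a mixed satisfying assignment if and only if $\phi$ is satisfiable.
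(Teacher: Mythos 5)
Your proposal establishes $\NP$-membership correctly, but the hardness part stops exactly where the real work begins: you state that you \emph{want} to build a sandwiched E3-CNF formula $\psi$ whose mixed satisfying assignments correspond to satisfying assignments of $\phi$, but you give no construction and no argument. As written there is nothing to verify, so this is a genuine gap rather than a different route.

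Moreover, the target equivalence you announce (``$\psi$ has a mixed satisfying assignment iff $\phi$ is satisfiable'') quietly skips the obstacles that the actual proof has to overcome. First, $\phi$ might be satisfiable only by the all-$\True$ or all-$\False$ assignment; the paper deals with this by first restricting \prb{E3-SAT} to instances not satisfied by either trivial assignment (which is harmless, since those two assignments can be checked in polynomial time), so that satisfiability coincides with mixed satisfiability. Second, sandwichedness forbids the usual padding tricks: you cannot force a variable to a fixed value with a unit-style clause, because every clause must contain both a positive and a negative literal. The paper's construction $\phi^*=\bigwedge_{h,i,j}(C_h\lor x_i\lor\overline{x_j})$ achieves sandwichedness but produces clauses of width $5$, so a further width-reduction step is needed; and that Tseitin-style step must itself preserve sandwichedness \emph{and} preserve mixedness in both directions (one has to check that if the restricted assignment were constant, the definition $z\leftrightarrow x\lor y$ would force the extended assignment to be constant as well). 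None of these issues are addressed, or even identified, in your proposal, so the reduction you sketch cannot yet be evaluated for correctness.
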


\revcr{For the proof, refer to Section~\ref{appendixsubsec:IntSAT_NPcomp}.}
By the polynomial-time reduction from \prb{IntE3-SAT}, we prove the following \Cref{ISR_NPcom_maxdeg3}.

\begin{figure}[t]
    \centering
    \scalebox{0.85}{
\begin{tikzpicture}[scale=0.9]
\node[] at (0.4,0) {$x_1$};
\node[fill=red!100, draw=black, circle, minimum size=2.5mm, label=left:$t^1_{x_1}$] (x11) at (0,0.4){};
\node[fill=blue!100, draw=black, circle, minimum size=2.5mm, label=left:$f^1_{x_1}$] (x12) at (0,-0.4){};
\node[fill=red!100, draw=black, circle, minimum size=2.5mm, label=right:$t^2_{x_1}$] (x13) at (0.8,-0.4){};
\node[fill=blue!100, draw=black, circle, minimum size=2.5mm, label=above:$f^2_{x_1}$] (x14) at (0.8,0.4){};
\draw[very thick] (x11)--(x12);
\draw[very thick] (x12)--(x13);
\draw[very thick] (x14)--(x13);
\draw[very thick] (x11)--(x14);

\node[] at (3+0.4,0) {$x_2$};
\node[fill=red!100, draw=black, circle, minimum size=2.5mm, label=left:$t^1_{x_2}$] (x21) at (3+0,0.4){};
\node[fill=blue!100, draw=black, circle, minimum size=2.5mm, label=left:$f^1_{x_2}$] (x22) at (3+0,-0.4){};
\node[fill=red!100, draw=black, circle, minimum size=2.5mm, label=right:$t^2_{x_2}$] (x23) at (3+0.8,-0.4){};
\node[fill=blue!100, draw=black, circle, minimum size=2.5mm, label=right:$f^2_{x_2}$] (x24) at (3+0.8,0.4){};
\draw[very thick] (x21)--(x22);
\draw[very thick] (x22)--(x23);
\draw[very thick] (x24)--(x23);
\draw[very thick] (x21)--(x24);

\node[] at (6+0.4,0) {$x_3$};
\node[fill=red!100, draw=black, circle, minimum size=2.5mm, label=left:$t^1_{x_3}$] (x31) at (6+0,0.4){};
\node[fill=blue!100, draw=black, circle, minimum size=2.5mm, label=left:$f^1_{x_3}$] (x32) at (6+0,-0.4){};
\node[fill=red!100, draw=black, circle, minimum size=2.5mm, label=below:$t^2_{x_3}$] (x33) at (6+0.8,-0.4){};
\node[fill=blue!100, draw=black, circle, minimum size=2.5mm, label=right:$f^2_{x_3}$] (x34) at (6+0.8,0.4){};
\draw[very thick] (x31)--(x32);
\draw[very thick] (x32)--(x33);
\draw[very thick] (x34)--(x33);
\draw[very thick] (x31)--(x34);

\node[] at (9.8,0) {$x_4$};
\node[fill=red!100, draw=black, circle, minimum size=2.5mm, label=left:$t^1_{x_4}$] (x41) at (9+0,0.4){};
\node[fill=blue!100, draw=black, circle, minimum size=2.5mm, label=left:$f^1_{x_4}$] (x42) at (9+0,-0.4){};
\node[fill=red!100, draw=black, circle, minimum size=2.5mm, label=below:$t^2_{x_4}$] (x43) at (9+0.8,-0.4){};
\node[fill=blue!100, draw=black, circle, minimum size=2.5mm, label=right:$f^2_{x_4}$] (x44) at (9+1.6,-0.4){};
\node[fill=red!100, draw=black, circle, minimum size=2.5mm, label=right:$t^3_{x_4}$] (x45) at (9+1.6,0.4){};
\node[fill=blue!100, draw=black, circle, minimum size=2.5mm, label=above:$f^3_{x_4}$] (x46) at (9+0.8,0.4){};
\draw[very thick] (x41)--(x42);
\draw[very thick] (x42)--(x43);
\draw[very thick] (x44)--(x43);
\draw[very thick] (x45)--(x44);
\draw[very thick] (x45)--(x46);
\draw[very thick] (x46)--(x41);

\node[] at (6,2.4) {$C_1$};
\node[fill=red!100, draw=black, circle, minimum size=2.5mm] (c11) at (5,2){};
\node[] at (4.8,1.6) {$x_1$};
\node[fill=blue!100, draw=black, circle, minimum size=2.5mm, label=below:$\overline{x_2}$] (c12) at (6,2){};
\node[fill=white!10, draw=black, circle, minimum size=2.5mm, label=below:$\overline{x_4}$] (c13) at (7,2){};
\draw[very thick] (c11)--(c12)--(c13) to[out=110, in=70] (c11);

\node[] at (3,-2.65) {$C_2$};
\node[fill=white!100, draw=black, circle, minimum size=2.5mm, label=above:$\overline{x_1}$] (c21) at (2,-2.25){};
\node[fill=blue!100, draw=black, circle, minimum size=2.5mm, label=above:$\overline{x_3}$] (c22) at (3,-2.25){};
\node[fill=red!100, draw=black, circle, minimum size=2.5mm, label=above:$x_4$] (c23) at (4,-2.25){};
\draw[very thick] (c21)--(c22)--(c23) to[out=-110, in=-70] (c21);

\node[] at (8.5,-2.65) {$C_3$};
\node[fill=red!100, draw=black, circle, minimum size=2.5mm, label=above:$x_2$] (c31) at (7.5,-2.25){};
\node[fill=white!100, draw=black, circle, minimum size=2.5mm, label=above:$x_3$] (c32) at (8.5,-2.25){};
\node[fill=blue!100, draw=black, circle, minimum size=2.5mm] (c33) at (9.5,-2.25){};
\node[] at (9.4, -1.8) {$\overline{x_4}$};
\draw[very thick] (c31)--(c32)--(c33) to[out=-110, in=-70] (c31);

\draw[very thick] (x12)--(c11);
\draw[very thick] (x21)--(c12);
\draw[very thick] (x41)--(c13);

\draw[very thick] (x13)--(c21);
\draw[very thick] (x31)--(c22);
\draw[very thick] (x44)--(c23);

\draw[very thick] (x24)--(c31);
\draw[very thick] (x34)--(c32);
\draw[very thick] (x45)--(c33);

\end{tikzpicture}}
    \caption{Illustration of the construction of $G$ from a boolean formula $\phi'$ used in the proof of \Cref{ISR_NPcom_maxdeg3}, where $\phi'=(x_1\lor \overline{x_2}\lor\overline{x_4})\land(\overline{x_1}\lor \overline{x_3}\lor x_4)\land (x_2\lor x_3\lor \overline{x_4})$. The blue marked tokens are on $I$, and the red marked tokens are on $J$.}
    \label{fig:ISRmaxdg3}
\end{figure}

\begin{theorem}\label{ISR_NPcom_maxdeg3}
    Let $\mu$ be any fixed positive integer. \prb{ISR} under $\kTJ$ is $\NP$-hard for graphs $G$ of maximum degree $3$ when $k=|I|-\mu\geq 1$, where $I$ is an initial independent set of $G$.
\end{theorem}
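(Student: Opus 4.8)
The plan is to reduce from \prb{IntE3-SAT}, which is $\NP$-complete by \Cref{IntSAT_NPcomp}. Given a sandwiched E3-CNF formula $\phi' = C_1 \wedge \cdots \wedge C_m$ over $x_1,\dots,x_n$ in which $x_j$ occurs $d_j \ge 1$ times, I would build the graph $G$ of \Cref{fig:ISRmaxdg3} (the figure shows $\mu=1$): for each $x_j$ take an even cycle on $2m_j$ vertices, $m_j := \max(d_j,\mu)$, alternately labelled $t^1_{x_j}, f^1_{x_j}, t^2_{x_j}, f^2_{x_j},\dots$, so this \emph{variable gadget} has exactly two maximum independent sets, the disjoint classes $T_j := \{t^i_{x_j}\}_i$ and $F_j := \{f^i_{x_j}\}_i$; for each clause $C_i$ take a triangle with one vertex per literal occurrence; and for each literal occurrence join its triangle vertex to a private vertex of $F_j$ if the literal is $x_j$, or of $T_j$ if it is $\overline{x_j}$ (distinct occurrences get distinct vertices, available since $m_j \ge d_j$). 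Let the initial set $I$ take $F_j$ in every gadget and, in each clause, the triangle vertex of one negative literal (one exists since $\phi'$ is sandwiched), and let $J$ take $T_j$ everywhere and one positive-literal vertex per clause. The \prb{ISR} instance is $(G,I,J,\kTJ)$ with $k = |I|-\mu = \sum_j m_j + m - \mu \ge 1$.

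I would then record two structural facts. (i) Since $C_{2m_j}$ has maximum independent set size $m_j$, attained only by $T_j$ or $F_j$, and a triangle has it $1$, any independent set of $G$ of size $|I| = \sum_j m_j + m$ must take a whole class in each variable gadget and exactly one triangle vertex in each clause; moreover a positive/negative literal's triangle vertex is adjacent precisely to the gadget vertex selected when that literal is false, so the chosen triangle vertex for $C_i$ avoids all chosen gadget vertices exactly when its literal is true under the assignment read off from the chosen classes. Hence size-$|I|$ independent sets of $G$ correspond to satisfying assignments of $\phi'$, with $I$ the all-$\False$ one and $J$ the all-$\True$ one. (ii) $I \cap J = \emptyset$, because $F_j \cap T_j = \emptyset$ and a negative-literal vertex is never a positive-literal vertex; as $|I\cap J| = 0 < \mu$, the sets are not adjacent under $\kTJ$, so the instance is non-trivial.

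Next I would show that $\phi'$ has a mixed satisfying assignment iff $I,J$ are reconfigurable under $\kTJ$. For ``only if'', a mixed satisfying assignment $b$ induces via (i) a size-$|I|$ independent set $I_b$; since $b$ is mixed it sets some $x_{j'}$ to $\False$ and some $x_{j''}$ to $\True$, so $F_{j'}\subseteq I\cap I_b$ and $T_{j''}\subseteq I_b\cap J$, giving $|I\cap I_b|\ge m_{j'}\ge\mu$ and $|I_b\cap J|\ge m_{j''}\ge\mu$; as for equal-size sets $|A\symdif B|\le 2k$ is equivalent to $|A\cap B|\ge |A|-k=\mu$, the sequence $\langle I, I_b, J\rangle$ is valid. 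For ``if'', take any reconfiguration sequence $\langle I=I_0,\dots,I_\ell=J\rangle$; by (i) each $I_a$ induces a satisfying assignment $b_a$ with $b_0$ all-$\False$ and $b_\ell$ all-$\True$. Consecutive sets satisfy $|I_{a-1}\cap I_a|\ge |I|-k=\mu\ge1$, whereas any independent set inducing all-$\False$ is disjoint from any one inducing all-$\True$ (gadget parts $\bigcup_j F_j$ vs.\ $\bigcup_j T_j$, clause parts negative- vs.\ positive-literal vertices). So if $a^\ast$ is the least index with $b_{a^\ast}\ne$ all-$\False$, then $1\le a^\ast\le\ell$, $b_{a^\ast-1}=$ all-$\False$, and $b_{a^\ast}$ cannot be all-$\True$, hence $b_{a^\ast}$ is a mixed satisfying assignment.

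Finally, $G$ has maximum degree $3$: each variable-gadget vertex has degree $2$ in its cycle plus at most one clause edge, and each triangle vertex has degree $2$ in its triangle plus exactly one gadget edge; and $G,I,J$ are computable in time polynomial in $|\phi'|$ for any fixed $\mu$. This proves the $\NP$-hardness (and, combined with $\NP$ membership for such graphs via \Cref{ISRinNPlogmu}, $\NP$-completeness). I expect the main difficulty to be the issue flagged in the introduction: forcing $I\cap J=\emptyset$ while keeping $G$ of degree $3$. The usual complete-bipartite ``selector'' is unavailable; the even-cycle variable gadget---whose two maximum independent sets are disjoint---is the replacement, and it only works because the sandwiched promise of \prb{IntE3-SAT} makes the all-$\False$ and all-$\True$ configurations simultaneously realizable as disjoint independent sets of size $|I|$. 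Verifying (i)---that no ``irregular'' size-$|I|$ independent set exists---and discharging the degree bookkeeping are the routine but delicate parts.
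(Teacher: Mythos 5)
Your proposal is correct and follows essentially the same reduction as the paper: the same \prb{IntE3-SAT} source (via \Cref{IntSAT_NPcomp}), the same alternating even-cycle variable gadgets and triangle clause gadgets with the same literal-to-gadget wiring, the same choice of $I$ and $J$, and the same two-directional argument (a mixed satisfying assignment yields the three-step sequence $\langle I, I_b, J\rangle$; conversely the first step leaving the all-$\False$ configuration must yield a mixed satisfying assignment). The one substantive difference is the treatment of general $\mu\geq 2$: the paper keeps the gadgets at size $2a_x$ and adds $\mu-1$ isolated vertices whose tokens are frozen because $I$ is maximum, whereas you absorb the slack into the gadgets by padding each cycle to $2\max(d_j,\mu)$ vertices so that a shared class already guarantees intersection at least $\mu$; both devices work, yours avoiding isolated vertices at the cost of slightly larger gadgets. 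Two minor points to add for completeness: when $m_j=1$ (possible only if $\mu=1$ and $x_j$ occurs once) the ``cycle on two vertices'' should be taken as a single edge $t^1_{x_j}f^1_{x_j}$, exactly as the paper's parenthetical remark does, and the fact that an even cycle $C_{2m}$ has exactly the two alternating classes as its maximum independent sets deserves the one-line gap-counting justification.
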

\begin{proof}
    We use a polynomial-time reduction from \prb{IntE3-SAT}. 
    Let $\phi'$ be an instance of \prb{IntE3-SAT} and $X'$ be \rev{the} set of variables of $\phi'$.
    We will construct an instance $(G,I,J,\kTJ)$ of \prb{ISR} under $\kTJ$ where $k=|I|-1$ (see \Cref{fig:ISRmaxdg3} for an illustration), and then modify for any fixed $\mu\geq1$.
    
    For each variable $x\in X'$, let $a_x$ denote the number of clauses of $\phi'$ in which $x$ appears as a literal.
    For each variable $x\in X'$, we set up a \emph{variable gadget} $Y_x$, which is defined as a cycle with $2a_x$ vertices~(note that if $x$ appears only once, then $Y_x$ is a path with two vertices).
    The vertices in $Y_x$ are labeled with $t_x^1, f_x^1,t_x^2, f_x^2,\ldots, t_x^{a_x}, f_x^{a_x}$ in a counterclockwise order. 
    \rev{Intuitively,} $t_x^i$ and $f_x^i$ for each $i\in [a_x]$ correspond to $\True$ and $\False$ assignments for $x$, respectively.
    We refer to a vertex \rev{of $Y_x$} corresponding to $\True$ (resp.\ $\False$) assignment to $x$ as a \emph{true vertex} (resp.\ \emph{false vertex}).
    For each clause $C$, \rev{we set up a \emph{clause gadget} $K_C$, which is defined as a complete graph with three vertices.}
    The vertices in $K_C$ correspond to \rev{the three} literals in $C$. 
    We refer to a vertex \rev{of $K_C$} corresponding to a positive literal (resp.\ negative literal) \rev{of $C$} as a \emph{positive vertex} (resp.\ \emph{negative vertex}).
    Finally, we connect variable gadgets and clause gadgets with edges as follows:
    For a vertex $v$ in a clause gadget corresponding to a literal $l$ of a variable $x$ with occurrence $i \in [a_x]$, connect $v$ to $t_x^{i}$ if the literal is negative; otherwise, connect $v$ to $f_x^{i}$.
    Let $G$ be the obtained graph.
    Observe that $G$ is a graph of maximum degree~$3$.
    Let $I$ be a set of all false vertices in variable gadgets and a \rev{negative} vertex arbitrarily chosen from each clause gadget. 
    \rev{Let} $J$ be a set of all true vertices in variable gadgets and a \rev{positive} vertex arbitrarily chosen from each clause gadget. 
    Note that such $I$ and $J$ always exist as each clause has both positive and negative literals due to the definition of \prb{IntE3-SAT}. 
    Finally, we set $k=|I|-1$.
    
    If $\mu \geq 2$, then we also add $\mu-1$ isolated vertices to $G$.
    Let $V^*$ be the set of all added vertices.
    Then let $G^*$ be an obtained graph, $I^* = I \cup V^*$, and $J^* = J \cup V^*$.
    We also set $k=|I^*|-\mu$.
    Note that $I$ is a maximum independent set of $G$ and thus any independent set with size $|I^*|=|I|+|V^*|$ of $G^*$ contains $V^*$.
    This allows us to move at most $k=|I^*|-\mu =  |I|+|V^*| -\mu = |I|-1$ tokens simultaneously only on $G$.
    Therefore, $(G,I,J,(|I|-1)\text{-}\TJ)$ is a yes-instance if and only if $(G^*,I^*,J^*,(|I^*|-\mu)\text{-}\TJ)$ is a yes-instance.
    
    For this reason, we discuss only the case when $\mu=1$ here.
    The instance $(G,I,J,\kTJ)$ of \prb{ISR} under $\kTJ$ is clearly obtained in polynomial time.
    To complete our reduction, we will show that $\phi'$ is a yes-instance of \prb{IntE3-SAT} if and only if $(G, I, J,\kTJ)$ is a yes-instance of \prb{ISR} under $\kTJ$ where $k=|I|-1$. 

    Assume that there is a variable assignment $b'$ that is mixed and satisfies $\phi'$. 
    We construct an independent set $I'$ of $G$ as follows.
    For a variable $x_i$ with $i\in[n]$, if $x_i$ is assigned $\True$, then all true vertices of $Y_{x_i}$ are contained in $I'$.
    Similarly, if $x_i$ is assigned $\False$, then all false vertices of $Y_{x_i}$ are contained in $I'$.
    Since those vertices are chosen according to $b'$, one vertex in each clause gadget that corresponds to a literal evaluating $\True$ can also be contained in $I'$.
    Furthermore, since $b'$ is mixed,
    we have $|I\cap I'|\geq 1$ and $|I'\cap J|\geq 1$.
    Therefore, a reconfiguration sequence $\sigma=\langle I, I', J \rangle$ exists.

    Conversely, assume that there is a reconfiguration sequence $\sigma=\langle I=I_0, I_1,\ldots, I_\ell=J\rangle$. 
    Consider an integer $i\in [\ell]$ such that tokens on $I_i$ are placed on true vertices for the first time.
    In other words, all tokens of $I_{i-1}$ in variable gadgets are on false vertices.
    Since any positive vertex of any clause gadget is adjacent to a false vertex of a variable gadget, all positive vertices are not in $I_{i-1}$.
    Then, we claim that not all true vertices are in $I_i$.
    For the sake of contradiction, assume that all true vertices are in $I_i$.
    Then, since each negative vertex is adjacent to a true vertex, each token of $I_i$ in clause gadgets is on a positive vertex. 
    Thus, since $I_{i-1}$ contains only false vertices and negative vertices, $I_{i-1}\cap I_i=\emptyset$. 
    However, $|I_{i-1}\cap I_{i}|\geq 1$ must hold because they are adjacent under $\kTJ$, which is a contradiction.
    Therefore, we conclude that $I_i$ contains both true and false vertices.
    From our construction of the variable gadgets, either true or false vertices are in $I_i$ for each variable gadget.
    For each $j\in[n]$, let $b'$ be a variable assignment such that $x_j$ is assigned $\True$ if and only if true vertices of $Y_{x_j}$ are in $I_i$. 
    Then, literals of clauses in $\phi'$ corresponding to vertices in $I_i$ evaluate $\True$ from our construction, that is, $b'$ satisfies $\phi'$.
    Therefore, $\phi'$ is a yes-instance of \prb{IntE3-SAT}.
\end{proof}

By \Cref{ISR_NPcom_maxdeg3}, we have proven the $\NP$-hardness of \prb{ISR} under $\kTJ$ on graphs of maximum degree 3 unless we can move all tokens. 
We now proceed to the $\NP$-hardness of \prb{ISR} under $\kTJ$ on planar graphs of maximum degree $4$.
The constructed graph $G$ in the proof of \Cref{ISR_NPcom_maxdeg3} may have some edges crossing on a plane.
We will eliminate these crossings by replacing them with a \emph{crossover gadget}~(as shown in \Cref{fig:CrossGadget}).
A crossover gadget consists of eight vertices $u'_1, u'_2, v'_1, v'_2, w_1, \ldots, w_4$ and twelve edges: $u'_1w_1$, $u'_1w_4$, $u'_2w_2$, $u'_2w_3$, $v'_1w_1$, $v'_1w_2$, $v'_2w_3$, $v'_2w_4$, $w_1w_2$, $w_2w_3$, $w_3w_4$, and $w_4w_1$.
For two crossing edges $u_1u_2$ and $v_1v_2$ of $G$, \emph{replacing} this crossing with a crossover gadget means removing the edges $u_1u_2$ and $v_1v_2$, inserting a crossover gadget, and adding the edges $u_1u'_1$, $u_2u'_2$, $v_1v'_1$, and $v_2v'_2$.
The crossover gadget always contains exactly three tokens corresponding to a maximum independent set of the gadget.

\begin{figure}[t]
    \centering
    \scalebox{0.75}{\begin{tikzpicture}[scale=1.1]

\node[fill=white!10, draw=black, circle, minimum size=4.5mm, label=90:\normalsize {$u_1$}] (Au1) at (-4,0){};
\node[fill=white!10, draw=black, circle, minimum size=4.5mm, label=270:\normalsize {$u_2$}] (Au2) at (-2,0){};
\node[fill=white!10, draw=black, circle, minimum size=4.5mm, label=180:\normalsize {$v_1$}] (Av1) at (-3,1){};
\node[fill=white!10, draw=black, circle, minimum size=4.5mm, label=0:\normalsize {$v_2$}] (Av2) at (-3,-1){};
\draw[very thick] (Au1)--(Au2);
\draw[very thick] (Av1)--(Av2);

\draw[very thick, dashed] (Au1)--(-4.75,0);
\draw[very thick, dashed] (Au2)--(-1.25,0);
\draw[very thick, dashed] (Av1)--(-3,1.75);
\draw[very thick, dashed] (Av2)--(-3,-1.75);

\node[fill=black!100, draw=black, circle, minimum size=2.5mm] (Au1token) at (-4,0){};
\node[fill=black!100, draw=black, circle, minimum size=2.5mm] (Av1token) at (-3,1){};

\draw (-3,0) [dotted] circle [radius=0.7];

\node[fill=white!10, draw=black, circle, minimum size=4.5mm, label=90:\normalsize {$u_1$}] (Out1) at (2,0){};
\node[fill=white!10, draw=black, circle, minimum size=4.5mm, label=270:\normalsize {$u_2$}] (Out2) at (6.5,0){};
\node[fill=white!10, draw=black, circle, minimum size=4.5mm, label=180:\normalsize {$v_1$}] (Out3) at (4.25,2.25){};
\node[fill=white!10, draw=black, circle, minimum size=4.5mm, label=0:\normalsize {$v_2$}] (Out4) at (4.25,-2.25){};

\node[fill=white!10, draw=black, circle, minimum size=4.5mm, label=90:\normalsize {$u'_1$}] (Mid1) at (2.75,0){};
\node[fill=white!10, draw=black, circle, minimum size=4.5mm, label=270:\normalsize {$u'_2$}] (Mid2) at (5.75,0){};
\node[fill=white!10, draw=black, circle, minimum size=4.5mm, label=180:\normalsize {$v'_1$}] (Mid3) at (4.25,1.5){};
\node[fill=white!10, draw=black, circle, minimum size=4.5mm, label=0:\normalsize {$v'_2$}] (Mid4) at (4.25,-1.5){};

\node[fill=white!10, draw=black, circle, minimum size=4.5mm, label=135:\normalsize {$w_1$}] (In1) at (3.75,0.5){};
\node[fill=white!10, draw=black, circle, minimum size=4.5mm, label=225:\normalsize {$w_4$}] (In2) at (3.75,-0.5){};
\node[fill=white!10, draw=black, circle, minimum size=4.5mm, label=45:\normalsize {$w_2$}] (In3) at (4.75,0.5){};
\node[fill=white!10, draw=black, circle, minimum size=4.5mm, label=315:\normalsize {$w_3$}] (In4) at (4.75,-0.5){};

\draw[very thick, dashed] (Out1)--(1.25,0);
\draw[very thick, dashed] (Out2)--(7.25,0);
\draw[very thick, dashed] (Out3)--(4.25,3);
\draw[very thick, dashed] (Out4)--(4.25,-3);

\draw[very thick] (Out1)--(Mid1);
\draw[very thick] (Out2)--(Mid2);
\draw[very thick] (Out3)--(Mid3);
\draw[very thick] (Out4)--(Mid4);

\draw[very thick] (Mid1)--(In1);
\draw[very thick] (Mid1)--(In2);
\draw[very thick] (Mid2)--(In3);
\draw[very thick] (Mid2)--(In4);
\draw[very thick] (Mid3)--(In1);
\draw[very thick] (Mid3)--(In3);
\draw[very thick] (Mid4)--(In2);
\draw[very thick] (Mid4)--(In4);

\draw[very thick] (In1)--(In3);
\draw[very thick] (In2)--(In4);
\draw[very thick] (In4)--(In3);
\draw[very thick] (In1)--(In2);

\node[fill=black!100, draw=black, circle, minimum size=2.5mm] (Out1token) at (2,0){};
\node[fill=black!100, draw=black, circle, minimum size=2.5mm] (Out3token) at (4.25,2.25){};
\node[fill=black!100, draw=black, circle, minimum size=2.5mm] (In1token) at (3.75,0.5){};
\node[fill=black!100, draw=black, circle, minimum size=2.5mm] (Mid2token) at (5.75,0){};
\node[fill=black!100, draw=black, circle, minimum size=2.5mm] (Mid4token) at (4.25,-1.5){};
\draw [->, >={Triangle[width=10mm, length=5mm]}, line width=6mm, gray] (-0.5,0) -- (0.5,0);

\draw (4.25,0) [dotted] circle [radius=1.9];






\end{tikzpicture}}
    \caption{An illustration of replacing the crossing edges $u_1u_2$ and $v_1v_2$ with a crossover gadget. The gadget contains eight vertices and three tokens, and its maximum degree is $4$.}
    \label{fig:CrossGadget}
\end{figure}
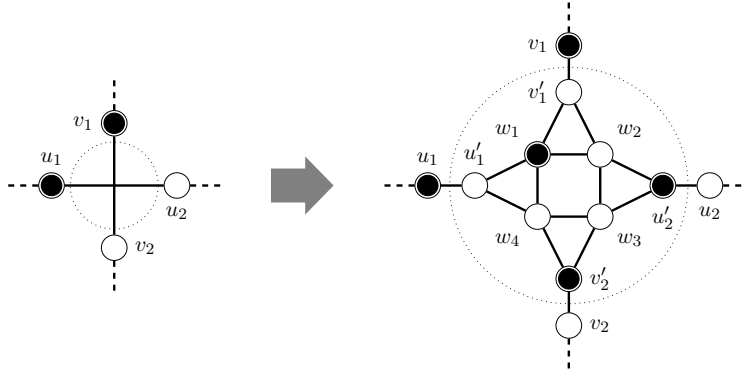
\begin{figure}[t]
    \centering
    \begin{tikzpicture}[scale=1]

\draw[dotted] (0.5,0.5)--(10,0.5);
\draw[dotted] (0.5,1)--(10,1);
\draw[dotted] (0.5,1.5)--(10,1.5);
\draw[dotted] (0.5,2)--(10,2);
\draw[dotted] (0.5,2.5)--(10,2.5);
\draw[dotted] (0.5,3)--(10,3);
\draw[dotted] (0.5,3.5)--(10,3.5);
\draw[dotted] (0.5,4)--(10,4);
\draw[dotted] (0.5,4.5)--(10,4.5);
\draw[dotted] (0.5,5)--(10,5);
\draw[dotted] (0.5,5.5)--(10,5.5);

\foreach \i in {1,...,11} {
    \node[] at (0.25,\i*0.5) {$\i$};
}


\draw[dotted] (1,0)--(1,6);
\draw[dotted] (1.5,0)--(1.5,6);
\draw[dotted] (2,0)--(2,6);
\draw[dotted] (2.5,0)--(2.5,6);
\draw[dotted] (3,0)--(3,6);
\draw[dotted] (3.5,0)--(3.5,6);
\draw[dotted] (4,0)--(4,6);
\draw[dotted] (4.5,0)--(4.5,6);
\draw[dotted] (5,0)--(5,6);
\draw[dotted] (5.5,0)--(5.5,6);
\draw[dotted] (6,0)--(6,6);
\draw[dotted] (6.5,0)--(6.5,6);
\draw[dotted] (7,0)--(7,6);
\draw[dotted] (7.5,0)--(7.5,6);
\draw[dotted] (8,0)--(8,6);
\draw[dotted] (8.5,0)--(8.5,6);
\draw[dotted] (9,0)--(9,6);
\draw[dotted] (9.5,0)--(9.5,6);

\foreach \i in {1,...,18} {
  \node at (0.5*\i+0.5, -0.25) {$\i$};
}


\node[] at (1.5,0.75) {$C_1$};
\node[fill=white!10, draw=black, circle, minimum size=2.5mm] (C11) at (1,1){};
\node[fill=white!10, draw=black, circle, minimum size=2.5mm] (C12) at (2,1){};
\node[fill=white!10, draw=black, circle, minimum size=2.5mm] (C13) at (3,1){};
\draw[very thick] (C11)--(C12);
\draw[very thick] (C12)--(C13);
\draw[very thick] (C11)--(1,0.5)--(3,0.5)--(C13);

\node[fill=red!100, draw=black, circle, minimum size=2.5mm] at (1,1){};
\node[fill=blue!100, draw=black, circle, minimum size=2.5mm] at (2,1){};

\node[] at (4.5,0.75) {$C_2$};
\node[fill=white!10, draw=black, circle, minimum size=2.5mm] (C21) at (4,1){};
\node[fill=white!10, draw=black, circle, minimum size=2.5mm] (C22) at (5,1){};
\node[fill=white!10, draw=black, circle, minimum size=2.5mm] (C23) at (6,1){};
\draw[very thick] (C21)--(C22);
\draw[very thick] (C22)--(C23);
\draw[very thick] (C21)--(4,0.5)--(6,0.5)--(C23);

\node[fill=red!100, draw=black, circle, minimum size=2.5mm] at (6,1){};
\node[fill=blue!100, draw=black, circle, minimum size=2.5mm] at (5,1){};

\node[] at (7.5,0.75) {$C_3$};
\node[fill=white!10, draw=black, circle, minimum size=2.5mm] (C31) at (7,1){};
\node[fill=white!10, draw=black, circle, minimum size=2.5mm] (C32) at (8,1){};
\node[fill=white!10, draw=black, circle, minimum size=2.5mm] (C33) at (9,1){};
\draw[very thick] (C31)--(C32);
\draw[very thick] (C32)--(C33);
\draw[very thick] (C31)--(7,0.5)--(9,0.5)--(C33);

\node[fill=red!100, draw=black, circle, minimum size=2.5mm] (C31) at (7,1){};
\node[fill=blue!100, draw=black, circle, minimum size=2.5mm] (C33) at (9,1){};

\filldraw[fill=red, fill opacity=0.3] (1,1.5)--(1,2.5)--(4.5,2.5)--(4.5,1.5)--(4,1.5)--(4,2)--(1.5,2)--(1.5,1.5)--(1,1.5);
\filldraw[fill=blue, fill opacity=0.3] (2,1.5)--(2,3.5)--(7.5,3.5)--(7.5,1.5)--(7,1.5)--(7,3)--(2.5,3)--(2.5,1.5)--(2,1.5);
\filldraw[fill=green, fill opacity=0.3]  (5,1.5)--(5,4.5)--(8.5,4.5)--(8.5,1.5)--(8,1.5)--(8,4)--(5.5,4)--(5.5,1.5)--(5,1.5);
\filldraw[fill=yellow, fill opacity=0.3] (3,1.5)--(3,5.5)--(9.5,5.5)--(9.5,1.5)--(9,1.5)--(9,5)--(6.5,5)--(6.5,1.5)--(6,1.5)--(6,5)--(3.5,5)--(3.5,1.5)--(3,1.5);


\node[] at (1.25,2.25) {$x_1$};
\node[fill=white!10, draw=black, circle, minimum size=2.5mm] (x1C1t) at (1,1.5){};
\node[fill=white!10, draw=black, circle, minimum size=2.5mm] (x1C1f) at (1.5,1.5){};
\node[fill=white!10, draw=black, circle, minimum size=2.5mm] (x1C2t) at (4,1.5){};
\node[fill=white!10, draw=black, circle, minimum size=2.5mm] (x1C2f) at (4.5,1.5){};

\node[fill=red!100, draw=black, circle, minimum size=2.5mm]  at (1,1.5){};
\node[fill=blue!100, draw=black, circle, minimum size=2.5mm]  at (1.5,1.5){};
\node[fill=red!100, draw=black, circle, minimum size=2.5mm]  at (4,1.5){};
\node[fill=blue!100, draw=black, circle, minimum size=2.5mm]  at (4.5,1.5){};

\draw[very thick] (x1C1t)--(1,2.5)--(4.5,2.5)--(x1C2f)--(x1C2t)--(4,2)--(1.5,2)--(x1C1f)--(x1C1t);

\node[] at (7.25,3.25) {$x_2$};
\node[fill=white!10, draw=black, circle, minimum size=2.5mm] (x2C1t) at (2,1.5){};
\node[fill=white!10, draw=black, circle, minimum size=2.5mm] (x2C1f) at (2.5,1.5){};
\node[fill=white!10, draw=black, circle, minimum size=2.5mm] (x2C3t) at (7,1.5){};
\node[fill=white!10, draw=black, circle, minimum size=2.5mm] (x2C3f) at (7.5,1.5){};

\node[fill=red!100, draw=black, circle, minimum size=2.5mm] at (2,1.5){};
\node[fill=blue!100, draw=black, circle, minimum size=2.5mm] at (2.5,1.5){};
\node[fill=red!100, draw=black, circle, minimum size=2.5mm] at (7,1.5){};
\node[fill=blue!1000, draw=black, circle, minimum size=2.5mm] at (7.5,1.5){};

\draw[very thick] (x2C1t)--(2,3.5)--(7.5,3.5)--(x2C3f)--(x2C3t)--(7,3)--(2.5,3)--(x2C1f)--(x2C1t);
\node[] at (5.25,4.25) {$x_3$};
\node[fill=white!10, draw=black, circle, minimum size=2.5mm] (x3C2t) at (5,1.5){};
\node[fill=white!10, draw=black, circle, minimum size=2.5mm] (x3C2f) at (5.5,1.5){};
\node[fill=white!10, draw=black, circle, minimum size=2.5mm] (x3C3t) at (8,1.5){};
\node[fill=white!10, draw=black, circle, minimum size=2.5mm] (x3C3f) at (8.5,1.5){};

\node[fill=red!100, draw=black, circle, minimum size=2.5mm] at (5,1.5){};
\node[fill=blue!100, draw=black, circle, minimum size=2.5mm] at (5.5,1.5){};
\node[fill=red!100, draw=black, circle, minimum size=2.5mm] at (8,1.5){};
\node[fill=blue!100, draw=black, circle, minimum size=2.5mm] at (8.5,1.5){};

\draw[very thick] (x3C2t)--(5,4.5)--(8.5,4.5)--(x3C3f)--(x3C3t)--(8,4)--(5.5,4)--(x3C2f)--(x3C2t);
\node[] at (6.25,5.25) {$x_4$};
\node[fill=white!10, draw=black, circle, minimum size=2.5mm] (x4C1t) at (3,1.5){};
\node[fill=white!10, draw=black, circle, minimum size=2.5mm] (x4C1f) at (3.5,1.5){};
\node[fill=white!10, draw=black, circle, minimum size=2.5mm] (x4C2t) at (6,1.5){};
\node[fill=white!10, draw=black, circle, minimum size=2.5mm] (x4C2f) at (6.5,1.5){};
\node[fill=white!10, draw=black, circle, minimum size=2.5mm] (x4C3t) at (9,1.5){};
\node[fill=white!10, draw=black, circle, minimum size=2.5mm] (x4C3f) at (9.5,1.5){};

\node[fill=red!100, draw=black, circle, minimum size=2.5mm] at (3,1.5){};
\node[fill=blue!100, draw=black, circle, minimum size=2.5mm] at (3.5,1.5){};
\node[fill=red!100, draw=black, circle, minimum size=2.5mm] at (6,1.5){};
\node[fill=blue!100, draw=black, circle, minimum size=2.5mm] at (6.5,1.5){};
\node[fill=red!100, draw=black, circle, minimum size=2.5mm] at (9,1.5){};
\node[fill=blue!100, draw=black, circle, minimum size=2.5mm] at (9.5,1.5){};

\draw[very thick] (x4C1t)--(3,5.5)--(9.5,5.5)--(x4C3f)--(x4C3t)--(9,5)--(6.5,5)--(x4C2f)--(x4C2t)--(6,5)--(3.5,5)--(x4C1f)--(x4C1t);

\draw[very thick] (C11)--(x1C1f);
\draw[very thick] (C12)--(x2C1t);
\draw[very thick] (C13)--(x4C1t);

\draw[very thick] (C21)--(x1C2t);
\draw[very thick] (C22)--(x3C2t);
\draw[very thick] (C23)--(x4C2f);

\draw[very thick] (C31)--(x2C3f);
\draw[very thick] (C32)--(x3C3f);
\draw[very thick] (C33)--(x4C3t);




\end{tikzpicture}
    \caption{
    \revcr{Another drawing of $G$ (shown in \Cref{fig:ISRmaxdg3}), corresponding to the Boolean formula $\phi'=(x_1\lor \overline{x_2}\lor\overline{x_4})\land(\overline{x_1}\lor \overline{x_3}\lor x_4)\land (x_2\lor x_3\lor \overline{x_4})$, on a grid with $11$ rows and $18$ columns. The intersection of a dotted horizontal line labeled $i \in \{1,\ldots,11\}$ and a dotted vertical line labeled $j \in \{1,\ldots,18\}$ represents the coordinate $(i,j)$. Thick lines represent the edges of the graph $G$, and all edge crossings occur only between edges belonging to distinct variable gadgets.}
    }
    \label{fig:GraphRedrawed}
\end{figure}
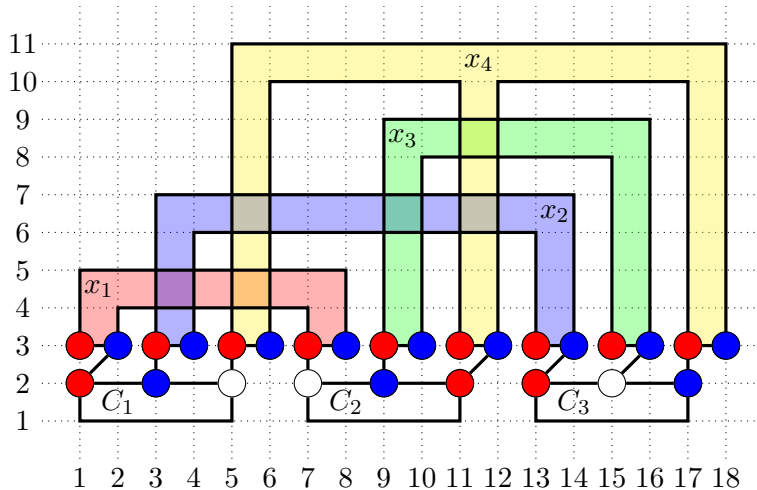
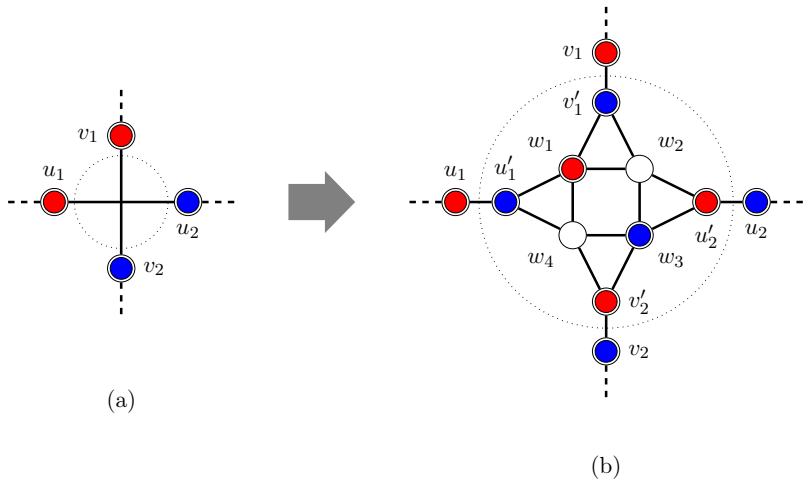
\begin{figure}[t]
    \centering
    \scalebox{0.8}{\begin{tikzpicture}[scale=1.1]
\node[] at (-3,-3) {(a)};

\node[fill=white!10, draw=black, circle, minimum size=4.5mm, label=90:\normalsize {$u_1$}] (Au1) at (-4,0){};
\node[fill=white!10, draw=black, circle, minimum size=4.5mm, label=270:\normalsize {$u_2$}] (Au2) at (-2,0){};
\node[fill=white!10, draw=black, circle, minimum size=4.5mm, label=180:\normalsize {$v_1$}] (Av1) at (-3,1){};
\node[fill=white!10, draw=black, circle, minimum size=4.5mm, label=0:\normalsize {$v_2$}] (Av2) at (-3,-1){};
\draw[very thick] (Au1)--(Au2);
\draw[very thick] (Av1)--(Av2);

\draw[very thick, dashed] (Au1)--(-4.75,0);
\draw[very thick, dashed] (Au2)--(-1.25,0);
\draw[very thick, dashed] (Av1)--(-3,1.75);
\draw[very thick, dashed] (Av2)--(-3,-1.75);

\node[fill=red!100, draw=black, circle, minimum size=2.5mm] (Au1token) at (-4,0){};
\node[fill=red!100, draw=black, circle, minimum size=2.5mm] (Av1token) at (-3,1){};

\node[fill=blue!100, draw=black, circle, minimum size=2.5mm] (Au2token) at (-2,0){};
\node[fill=blue!100, draw=black, circle, minimum size=2.5mm] (Av2token) at (-3,-1){};

\draw (-3,0) [dotted] circle [radius=0.7];

\node[] at (4.25,-4) {(b)};

\node[fill=white!10, draw=black, circle, minimum size=4.5mm, label=90:\normalsize {$u_1$}] (Out1) at (2,0){};
\node[fill=white!10, draw=black, circle, minimum size=4.5mm, label=270:\normalsize {$u_2$}] (Out2) at (6.5,0){};
\node[fill=white!10, draw=black, circle, minimum size=4.5mm, label=180:\normalsize {$v_1$}] (Out3) at (4.25,2.25){};
\node[fill=white!10, draw=black, circle, minimum size=4.5mm, label=0:\normalsize {$v_2$}] (Out4) at (4.25,-2.25){};

\node[fill=white!10, draw=black, circle, minimum size=4.5mm, label=90:\normalsize {$u'_1$}] (Mid1) at (2.75,0){};
\node[fill=white!10, draw=black, circle, minimum size=4.5mm, label=270:\normalsize {$u'_2$}] (Mid2) at (5.75,0){};
\node[fill=white!10, draw=black, circle, minimum size=4.5mm, label=180:\normalsize {$v'_1$}] (Mid3) at (4.25,1.5){};
\node[fill=white!10, draw=black, circle, minimum size=4.5mm, label=0:\normalsize {$v'_2$}] (Mid4) at (4.25,-1.5){};

\node[fill=white!10, draw=black, circle, minimum size=4.5mm, label=135:\normalsize {$w_1$}] (In1) at (3.75,0.5){};
\node[fill=white!10, draw=black, circle, minimum size=4.5mm, label=225:\normalsize {$w_4$}] (In2) at (3.75,-0.5){};
\node[fill=white!10, draw=black, circle, minimum size=4.5mm, label=45:\normalsize {$w_2$}] (In3) at (4.75,0.5){};
\node[fill=white!10, draw=black, circle, minimum size=4.5mm, label=315:\normalsize {$w_3$}] (In4) at (4.75,-0.5){};

\draw[very thick, dashed] (Out1)--(1.25,0);
\draw[very thick, dashed] (Out2)--(7.25,0);
\draw[very thick, dashed] (Out3)--(4.25,3);
\draw[very thick, dashed] (Out4)--(4.25,-3);

\draw[very thick] (Out1)--(Mid1);
\draw[very thick] (Out2)--(Mid2);
\draw[very thick] (Out3)--(Mid3);
\draw[very thick] (Out4)--(Mid4);

\draw[very thick] (Mid1)--(In1);
\draw[very thick] (Mid1)--(In2);
\draw[very thick] (Mid2)--(In3);
\draw[very thick] (Mid2)--(In4);
\draw[very thick] (Mid3)--(In1);
\draw[very thick] (Mid3)--(In3);
\draw[very thick] (Mid4)--(In2);
\draw[very thick] (Mid4)--(In4);

\draw[very thick] (In1)--(In3);
\draw[very thick] (In2)--(In4);
\draw[very thick] (In4)--(In3);
\draw[very thick] (In1)--(In2);

\node[fill=red!100, draw=black, circle, minimum size=2.5mm] (Out1redtoken) at (2,0){};
\node[fill=red!100, draw=black, circle, minimum size=2.5mm] (Out3redtoken) at (4.25,2.25){};
\node[fill=red!100, draw=black, circle, minimum size=2.5mm] (In1redtoken) at (3.75,0.5){};
\node[fill=red!100, draw=black, circle, minimum size=2.5mm] (Mid2redtoken) at (5.75,0){};
\node[fill=red!100, draw=black, circle, minimum size=2.5mm] (Mid4redtoken) at (4.25,-1.5){};

\node[fill=blue!100, draw=black, circle, minimum size=2.5mm] (Out1bluetoken) at (2.75,0){};
\node[fill=blue!100, draw=black, circle, minimum size=2.5mm] (Out3bluetoken) at (4.25,1.5){};
\node[fill=blue!100, draw=black, circle, minimum size=2.5mm] (In1bluetoken) at (4.75,-0.5){};
\node[fill=blue!100, draw=black, circle, minimum size=2.5mm] (Mid2bluetoken) at (6.5,0){};
\node[fill=blue!100, draw=black, circle, minimum size=2.5mm] (Mid4bluetoken) at (4.25,-2.25){};
\draw [->, >={Triangle[width=10mm, length=5mm]}, line width=6mm, gray] (-0.5,0) -- (0.5,0);

\draw (4.25,0) [dotted] circle [radius=1.9];






\end{tikzpicture}}
    \caption{
    \revcr{
    An illustration of adding tokens to $I$ and $J$, yielding $I^*$ and $J^*$. (a) Vertices $u_1$ and $v_1$ belong to $I$ (red tokens), and $u_2$ and $v_2$ belong to $J$ (blue tokens). (b) In $I^*$, the new vertices $w_1$, $u'_2$, and $v'_2$ are added, while in $J^*$, the new vertices $w_3$, $u'_1$, and $v'_1$ are added. The two independent sets $I^*$ and $J^*$ remain disjoint.
    }
    }
    \label{fig:CrossGadgetIniTar}
\end{figure}

\begin{figure}[t]
    \centering
    \scalebox{0.8}{\input{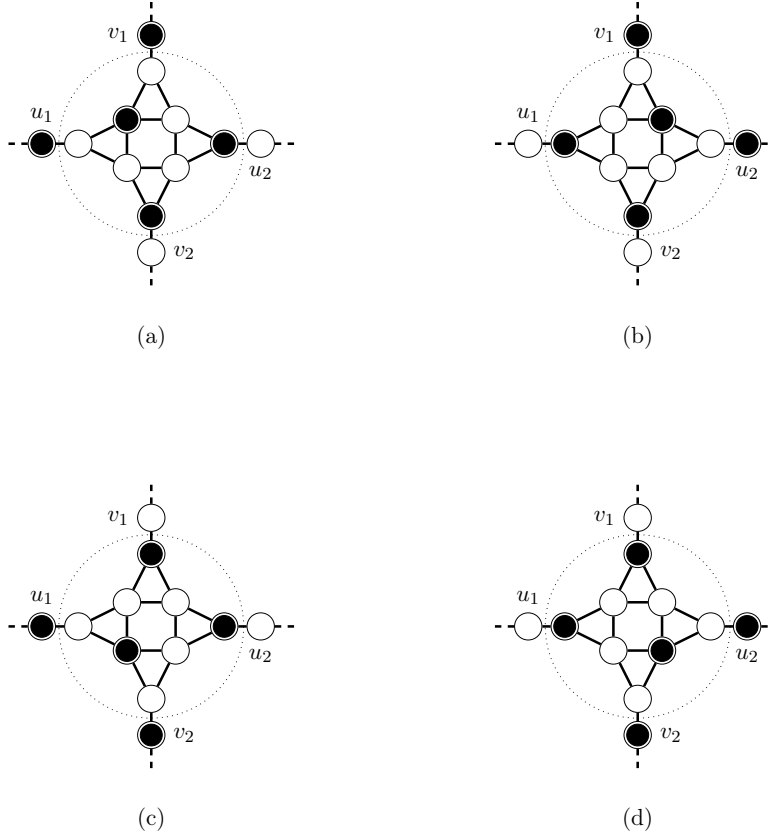}}
    \caption{
    \revcr{An illustration of how the crossover gadget works. The crossing edges $u_1u_2$ and $v_1v_2$, where one endpoint of each edge is occupied by a token, are replaced by a crossover gadget:
    (a) $u_1$ and $v_1$ are occupied; (b) $u_2$ and $v_1$ are occupied; (c) $u_1$ and $v_2$ are occupied; (b) $u_2$ and $v_2$ are occupied.
    }
    }
    \label{fig:HowWorkCrossGadget}
\end{figure}

\begin{theorem}\label{ISR_NPcom_planarmaxdeg4}
    Let $\mu$ be any fixed positive integer. \prb{ISR} under $\kTJ$ is $\NP$-hard for planar graphs $G$ of maximum degree $4$ when $k=|I|-\mu\geq 1$, where $I$ is an initial independent set of $G$.
\end{theorem}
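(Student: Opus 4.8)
The plan is to reduce from the maximum-degree-$3$ instance $(G,I,J,\kTJ)$ built in the proof of \Cref{ISR_NPcom_maxdeg3}, exploiting the fact that $G$ admits the planar grid drawing of \Cref{fig:GraphRedrawed} in which every crossing is a crossing of two cycle edges of \emph{distinct} variable gadgets. First I would replace each crossing, successively, by a copy of the crossover gadget (see \Cref{fig:CrossGadget}), embedded in a small disk around the crossing point with $u'_1,v'_1,u'_2,v'_2$ on its outer face so that no new crossing is created; this yields a planar graph $G^*$, and since only the four $w_i$-vertices of each gadget acquire degree $4$ while every original vertex keeps degree at most $3$, the graph $G^*$ has maximum degree $4$. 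Because every crossed cycle edge of a variable gadget joins a true and a false vertex, in $I$ exactly the false endpoint of each crossed edge is occupied and in $J$ exactly the true endpoint is; accordingly I would set $I^*$ (resp.\ $J^*$) to be $I$ (resp.\ $J$) together with the three gadget vertices $\{w_1,u'_2,v'_2\}$ (resp.\ $\{w_3,u'_1,v'_1\}$) in each crossover gadget, as in \Cref{fig:CrossGadgetIniTar}. These triples are disjoint, so together with $I\cap J=\emptyset$ we get $I^*\cap J^*=\emptyset$, which keeps the instance non-trivial when $\mu=1$. Finally I would set $k=|I^*|-1$; for an arbitrary fixed $\mu\ge 1$, add $\mu-1$ isolated vertices and set $k=|I^*|-\mu$, exactly as in the proof of \Cref{ISR_NPcom_maxdeg3}. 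The reduction is clearly polynomial-time.

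The correctness hinges on three structural facts about the crossover gadget that I would establish first. (i) If an independent set $R$ of $G^*$ contains both $u_1$ and $u_2$, then $u'_1,u'_2\notin R$ and the remaining six gadget vertices span an independent set of size at most $2$; the same holds for $v_1,v_2$. Hence any independent set that places three tokens in every crossover gadget occupies at most one endpoint of each crossed edge, so its restriction to $V(G)$ is a genuine independent set of $G$. (ii) Consequently $\alpha(G^*)=\alpha(G)+3q$, where $q$ is the number of crossings, and every independent set of $G^*$ of this size uses exactly three vertices in each crossover gadget, one vertex in each clause gadget, and a maximum independent set of each original variable cycle $Y_x$ — so its restriction to $Y_x$ is \emph{uniform}, i.e.\ consists entirely of true vertices or entirely of false vertices. (iii) When exactly $u_1,v_1$ (the ``false side'') are occupied near a gadget, the unique way to place three tokens in it is $\{w_1,u'_2,v'_2\}$; when exactly $u_2,v_2$ (the ``true side'') are, it is $\{w_3,u'_1,v'_1\}$; these two triples are disjoint. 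I expect the main obstacle to be the upper bound in (ii): the isolated gadget has independence number $4$ (witnessed by $\{u'_1,u'_2,v'_1,v'_2\}$), and the point is that spending a fourth token there forbids all of $u_1,u_2,v_1,v_2$ and hence forces the loss of a token in each of the two variable cycles meeting that crossing — a net loss. Turning this into a clean exchange argument, for variable cycles that may be cut by several gadgets at once, together with the case analysis behind (iii), is the technical core.

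Granting (i)--(iii), both directions follow the proof of \Cref{ISR_NPcom_maxdeg3}. For the ``if'' direction, from a mixed satisfying assignment of $\phi'$ take the independent set $I'$ of $G$ with $I\cap I'\ne\emptyset$ and $I'\cap J\ne\emptyset$ and extend it to an independent set ${I'}^*$ of $G^*$ of size $|I^*|$ by choosing three compatible tokens in each gadget (which is possible for every admissible external configuration); since $I^*\cap{I'}^*\supseteq I\cap I'$ and ${I'}^*\cap J^*\supseteq I'\cap J$ are nonempty and $|I^*|=|{I'}^*|=|J^*|$, consecutive members of $\langle I^*,{I'}^*,J^*\rangle$ differ in at most $2(|I^*|-1)=2k$ vertices, so this is a reconfiguration sequence under $\kTJ$. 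For the ``only if'' direction, in any reconfiguration sequence $\langle I^*=R_0,\dots,R_\ell=J^*\rangle$ every $R_i$ has size $\alpha(G^*)$, so by (i)--(ii) its restriction to $V(G)$ is a size-$|I|$ independent set of $G$ with uniform variable cycles and a clause token on a satisfied literal. Take the first $i$ for which a true vertex is occupied in $R_i$; then $R_{i-1}$ restricts to the all-false assignment, forcing all of its clause tokens onto negative vertices. If $R_i$ restricted to $V(G)$ were the all-true assignment, all of its clause tokens would lie on positive vertices, so $R_{i-1}$ and $R_i$ would be disjoint on $V(G)$; and by (iii) every crossover gadget would flip from $\{w_1,u'_2,v'_2\}$ in $R_{i-1}$ to $\{w_3,u'_1,v'_1\}$ in $R_i$, so $R_{i-1}$ and $R_i$ would be disjoint inside every gadget as well, contradicting $|R_{i-1}\symdif R_i|\le 2(|I^*|-1)$, which forces a common token. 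Hence $R_i$ restricts to a uniform, mixed assignment satisfying $\phi'$, so $\phi'$ is a yes-instance of \prb{IntE3-SAT}.
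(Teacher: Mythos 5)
Your reduction is exactly the one in the paper: the same grid drawing in which all crossings are between edges of distinct variable gadgets (\Cref{fig:GraphRedrawed}), the same crossover gadget (\Cref{fig:CrossGadget}), the same disjoint triples $\{w_1,u'_2,v'_2\}$ and $\{w_3,u'_1,v'_1\}$ defining $I^*$ and $J^*$ (\Cref{fig:CrossGadgetIniTar}), and the same isolated-vertex trick for $\mu\ge 2$; your facts (i) and (iii) are correct and are what \Cref{fig:HowWorkCrossGadget} records. Where you diverge is the correctness argument: the paper never states your claim (ii). It argues crossing by crossing, using only that every crossing of the chosen drawing is \emph{good} in $G$ (every size-$|I|$ independent set of $G$ hits each crossed edge in exactly one endpoint) together with the uniqueness of the three-token completion of the gadget for each admissible configuration of $u_1,u_2,v_1,v_2$, and concludes that each single replacement preserves yes/no. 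You instead replace all crossings at once and route everything through the global statement $\alpha(G^*)=\alpha(G)+3q$ with ``exactly three tokens per gadget'' in every maximum independent set.

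That global statement is precisely where your proposal is incomplete, and the ``net loss'' exchange you sketch does not close it. A gadget holding four tokens ($\{u'_1,u'_2,v'_1,v'_2\}$) does ban $u_1,u_2,v_1,v_2$, but the loss this inflicts on the two incident variable cycles is not always $2$ in total once those cycles are crossed several times: if a second gadget between the \emph{same} two variable cycles holds only two tokens (two opposite $w$-vertices), both of its crossed edges may be doubly occupied, and a cycle of length $2a$ from which one edge is banned (both endpoints forbidden) and one edge is deleted with both endpoints taken can still carry $a$ tokens whenever the two arcs between the two crossing points have odd length. In that situation the $+1$ of the four-token gadget is exactly cancelled by the $-1$ of the two-token gadget, so simple counting does not exclude a size-$|I^*|$ independent set whose restriction to $V(G)$ is neither independent in $G$ nor uniform on the variable cycles; and since in the grid drawing two variable cycles that cross at all cross an even number of times (indeed a single cycle edge can be crossed more than once, e.g.\ $x_1$ and $x_2$ in \Cref{fig:GraphRedrawed} cross four times), this compensation scenario is exactly the case your exchange argument must handle. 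It matters for your ``only if'' direction, because such a hybrid state would intersect both $I^*$ and $J^*$ and hence, under $(|I^*|-1)$-$\mathsf{TJ}$, would serve as a one-step bridge independently of any satisfying assignment. You flag this as ``the technical core'' but leave it unproved; as it stands the proposal therefore has a genuine gap at its central lemma. To repair it you must either prove (ii) by an argument that rules out the $4$/$2$ compensation (this needs the arc parities, the clause gadgets, or some finer structural property of the drawing), or abandon the global counting and follow the paper's per-crossing simulation, which only ever invokes goodness of crossings in $G$ and the unique three-token completions of a single gadget.
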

\begin{proof}
\revcr{As the same reason in the proof of \Cref{ISR_NPcom_maxdeg3}, we provide the proof when $\mu=1$ (that is, $k=|I|-1$).}
Let \((G, I, J, \kTJ)\) be the instance of \prb{ISR} constructed from \(\phi'\) in the proof of \Cref{ISR_NPcom_maxdeg3}, \revcr{where $k=|I|-1$}.  
Consider any drawing of $G$ on the plane, which may have crossings.

To make $G$ into a planar graph of maximum degree $4$, we replace each crossing in the drawing with a crossover gadget, which is shown in \Cref{fig:CrossGadget}.
Suppose that we replaced a crossing of edges \(u_1u_2\) and \(v_1v_2\) with a crossover gadget.
We claim that the crossover gadget correctly ``simulates'' the crossing if the crossing is \emph{good}, that is, every independent set \(I'\) of $G$ with size \(|I|\) satisfies \(|I' \cap \{u_1, u_2\}| = |I' \cap \{v_1, v_2\}| = 1\).
Although not every drawing of $G$ meets this condition\footnote{For example, the drawing in \Cref{fig:ISRmaxdg3} contains a non-good crossing of two edges \revcr{$f^2_{x_3}v_{x_3}$ and $f^2_{x_4}u_{x_4}$, where the two vertices $v_{x_3}$ and $u_{x_4}$ correspond to the literal $x_3$ in clause $C_3$ and the literal $x_4$ in clause $C_2$, respectively. In this crossing, $f^2_{x_3}$ and $v_{x_3}$ are both not in the independent set $J$.}}, we claim that $G$ always admits a drawing where all crossings are good.
To this end, we construct a drawing of $G$ in which all edge crossings occur only between edges belonging to distinct variable gadgets.

Let $m$ be the number of clauses and $n$ be the number of variables in $\phi'$, respectively.
We will represent all variable gadgets and clause gadgets on a grid with $(2n+3)$ rows and $6m$ columns such that edges between variable gadgets and clause gadgets have no crossing (see also \Cref{fig:GraphRedrawed}).
Let $(i,j)$ be the coordinates of a point on the Euclidean plane, where $i \in [6m]$ and $j \in [2n+3]$.
For each clause gadget corresponding to a clause \(C_i\) of \(\phi'\), the three literal vertices of the gadget are positioned at \((6(i-1)+1,2)\), \((6(i-1)+3,2)\), and \((6(i-1)+5,2)\).
Furthermore, the edges are embedded to minimize the sum of their length, with one of the three edges utilizing the bottom border.
\revcr{
If a literal vertex, corresponding to the $j$-th occurrence of a literal of variable $x$, is placed at $(i',2)$, then the true vertex $t^j_x$ and the false vertex $f^j_x$ are positioned at $(i',3)$ and $(i'+1,3)$, respectively, and are connected by a shortest edge.
}

Additionally, each literal vertex is joined with either of them, which follows the construction of $G$ in the proof of \Cref{ISR_NPcom_maxdeg3}: \revcr{for a vertex $v$ in a clause gadget corresponding to a literal $l$ of a variable $x$ with occurrence $j \in [a_x]$, connect $v$ to $t_x^{j}$ if $l$ is negative, and to $f_x^{j}$ otherwise.}
Then, for a variable gadget corresponding to a variable \(x_p\) with \(p \in [n]\), we embed the remaining edges in the gadget using only the vertical lines where its vertices are located and the \((2p+2)\)-th and \((2p+3)\)-th horizontal lines, minimizing the total length. (If the variable gadget is a path with 2 vertices, there is nothing to do.)

Now, a new embedding of $G$ is obtained and denoted by $D(G)$.
Although $D(G)$ and $G$ are isomorphic, $D(G)$ only has crossing edges in variable gadgets.
\revcr{Since every variable gadget forms a cycle of even length and has tokens placed alternately on its vertices, exactly one endpoint of each edge in the variable gadgets belongs to any independent set of $G$ of size exactly $|I|$.}
Thus, all crossings of $D(G)$ are good.

We pick a crossing of two edges, say $u_1u_2$ and $v_1v_2$. 
Since both $u_1u_2$ and $v_1v_2$ are edges of variable gadgets, we may assume without loss of generality that $u_1, v_1 \in I$ and $u_2, v_2 \in J$. 
We then replace this crossing with our crossover gadget.
Let $G^*$ denote the resulting graph. 
\revcr{
For the initial (resp.\ target) independent set $I$ (resp.\ $J$) of $G$, adding three tokens on $w_1$, $u'_2$, and $v'_2$ (resp.\ $w_3$, $u'_1$, and $v'_1$) in the crossover gadget (as shown in \Cref{fig:CrossGadgetIniTar}) results in the independent set $I^*$ (resp.\ $J^*$).
}
Let $(G^*, I^*, J^*, \kTJ)$ be a new instance of \prb{ISR} under $\kTJ$, where $k=|I^*|-1$.
\revcr{Note that, since the sets of vertices added to $I$ and $J$ are disjoint, we have $I^* \cap J^* = \emptyset$.
}

\revcr{For each configuration of tokens on $u_1$, $u_2$, $v_1$, and $v_2$ in $G$, there exists a unique configuration of three tokens on the crossover gadgets in $G^*$ as shown in \Cref{fig:HowWorkCrossGadget}.
Thus, the configuration of tokens on the vertices of the crossover gadget can be changed in $G^*$ if and only if the configuration of tokens on $u_1$, $u_2$, $v_1$, and $v_2$ is changed in $G$.
}
Therefore, by faithfully simulating token moves along the edges $u_1u_2$ and $v_1v_2$ of $G$ using the token arrangements shown in \Cref{fig:HowWorkCrossGadget}, one can show that $(G, I, J, \kTJ)$ is a yes-instance if and only if $(G^*, I^*, J^*, \kTJ)$ is a yes-instance.

The graph $G^*$ has one less edge crossing than $G$.
Consequently, we can obtain a plane graph $G'$ by repeating the above replacement for each crossing of $D(G)$.
As $D(G)$ has $O(mn)$ crossings, the construction of $G'$ can be done in polynomial time.
Since a crossover gadget is a graph of maximum degree $4$, $G'$ is a plane graph of maximum degree $4$.
This completes our proof.
\end{proof}

\paragraph{Proof of \Cref{IntSAT_NPcomp}}\label{appendixsubsec:IntSAT_NPcomp}
\begin{proof}
    It is obvious that \prb{IntE3-SAT} is in $\NP$.
    To prove the $\NP$-completeness, we use a polynomial-time reduction from \prb{E3-SAT}.
    
    Let $\phi=C_1\land C_2\land \cdots \land C_m$ be an instance of \prb{E3-SAT}, and $X=\{x_1,\ldots,x_n\}$ be the variable set of $\phi$. We will construct a sandwiched E3-CNF formula $\phi'$ with $7mn^2$ clauses and $n+2mn^2$ variables at the end of our construction.
    Before starting our construction, we restrict $\phi$ so that neither the all-$\True$ assignment nor the all-$\False$ assignment satisfies $\phi$.
    \prb{E3-SAT} remains $\NP$-hard with this restriction; otherwise, we could solve \prb{E3-SAT} without the restriction by checking whether the all-$\True$ or all-$\False$ assignment satisfies $\phi$ at first.
    We obtain a new CNF formula $\phi^*$ as follows:
    \begin{align*}
        \phi^*&=\phi \lor \bigwedge_{i=1}^{n}x_i \lor \bigwedge_{i=1}^{n}\overline{x_i}=\bigwedge_{h=1}^{m}C_h \lor \bigwedge_{i=1}^{n}x_i \lor \bigwedge_{i=1}^{n}\overline{x_i}
        =\bigwedge_{h=1}^{m}\bigwedge_{i=1}^{n}\bigwedge_{j=1}^{n}(C_h\lor x_i\lor \overline{x_j}).
    \end{align*}
    Since $C_h$ consists of exactly three literals, $C_h\lor x_i \lor \overline{x_j}$ is a disjunction of five literals. Thus, $\phi^*$ is an E5-CNF formula. Furthermore, $\phi^*$ is a sandwiched E5-CNF formula, that is, each clause of $\phi^*$ has both a positive literal $x_i$ and a negative literal $\overline{x_j}$.
    From our conversion, $\phi^*$ has $mn^2$ clauses and $n$ variables.
    It is observed that for any mixed variable assignment $b$, $\phi$ evaluates $\True$ if and only if $\phi^*$ evaluates $\True$.
    Thus, we can immediately say that $\phi$ is a yes-instance of \prb{E3-SAT} without all-$\True$ and all-$\False$ assignments if and only if $\phi'$ is a yes-instance of \prb{IntE5-SAT}.

    Then, we explain how to convert a sandwiched E5-CNF formula $\phi^*$ to a sandwiched E3-CNF formula $\phi'$ with $7mn^2$ clauses and $n+2mn^2$ variables.
    We repeat the following operation \revcr{on} $\phi^*$ until all clauses have size exactly $3$.
    Let $\phi^*_0 = \phi^*$ and $\phi^*_i$ for a positive integer $i$ be a formula obtained from a sandwiched CNF formula $\phi^*_{i-1}$.
    
    Let $C$ be a clause in $\phi^*_{i-1}$ with size at least~$4$.
    Then $C$ contains $x \lor y$ or $\overline{x} \lor \overline{y}$ for variables $x,y$.
    If $C$ contains $x \lor y$, then we replace $x \lor y$ with a new variable $z$ and combine the modified formula with the formula $x \lor y \leftrightarrow z$ using the $\land$ operator.
    Furthermore, $x\lor y\leftrightarrow z$ is transformed as follows:
    \begin{align}
        x \lor y \leftrightarrow z
        &= (x\lor y \lor \overline{z})\land (\overline{(x\lor y)} \lor z)\notag\\
        &= (x\lor y \lor \overline{z})\land ((\overline{x}\land \overline{y}) \lor z)\notag\\
        &= (x\lor y \lor \overline{z})\land (\overline{x}\lor z)\land (\overline{y}\lor z)\notag\\
        &= (x\lor y \lor \overline{z})\land (\overline{x}\lor \overline{x}\lor z)\land (\overline{y}\lor \overline{y}\lor z)\label{equationtocnf}.
    \end{align}

    We claim that $\phi^*_i$ obtained from $\phi^*_{i-1}$ by the above operation is a sandwiched CNF formula. 
    Clearly, each clause in $\phi^*_i$ that remains unchanged from $\phi^*_{i-1}$ contains both positive and negative literals.
    The clause in $\phi^*_i$ obtained from $C$ by replacing $x \lor y$ with a positive literal $z$ contains a negative literal because $C$ also has a negative literal.
    Combined with \Cref{equationtocnf}, $\phi^*_i$ is a sandwiched CNF formula.
    Consequently, if there is a clause in $\phi^*_{i-1}$ that contains $x \lor y$, then a sandwiched CNF formula $\phi^*_i$ is obtained.
    
    Similarly, if $C$ contains $\overline{x}\lor \overline{y}$, then we replace $\overline{x}\lor \overline{y}$ with the negative literal of a new variable $z$ and combine the modified formula with the formula $\overline{x}\lor \overline{y}\leftrightarrow \overline{z} =(\overline{x}\lor\overline{y}\lor z)\land(x\lor x\lor \overline{z})\land (y\lor y\lor \overline{z})$ using the $\land$ operator.
    As with the previous argument, we can say that $\phi^*_i$ is a sandwiched CNF formula.

    Let $\phi'$ be the sandwiched CNF formula obtained from $\phi^* = \phi^*_0$ by repeating the above operation until all clauses have size exactly $3$.
    Since exactly two replacements occur per clause in $\phi^*$, we have $\phi' = \phi^*_{2mn^2}$.
    Moreover, six new clauses and two new variables are added for each clause in $\phi^*$.
    Therefore, $\phi'$ has $7mn^2$ clauses and $n+2mn^2$ variables.

    To complete our reduction, for each $i\in [2mn^2]$, we will show that there exists a mixed variable assignment that satisfies $\phi^*_{i-1}$ if and only if there exists a mixed variable assignment that satisfies $\phi^*_{i}$. 
    This immediately implies that $\phi^* = \phi^*_0$ is a yes-instance of \prb{IntE5-SAT} if and only if $\phi' = \phi^*_{2mn^2}$ is a yes-instance of \prb{IntE3-SAT}.
    Let $X_i$ be the set of variables in $\phi^*_{i}$.
    
    Suppose that there is a mixed variable assignment $b_{i-1}$ for $X_{i-1}$ that satisfies $\phi^*_{i-1}$. 
    Consider a variable assignment $b_{i}$ for $X_i = X_{i-1} \cup \{ z\}$ such that $b_{i}(x)=b_{i-1}(x)$ for each $x\in X_{i-1}$.
    Moreover, set $b_i(z)= x \lor y$ if $z$ replaces $x\lor y$, and set $b_i(z)= \overline{\overline{x} \lor \overline{y}}$ if $\overline{z}$ replaces $\overline{x} \lor \overline{y}$.
    Since $b_{i-1}$ is a mixed variable assignment, $b_i$ is also a mixed variable assignment.
    Furthermore, it is easy to see that $b_i$ satisfies $\phi^*_i$.

    Conversely, suppose that there is a mixed variable assignment $b_i$ for $X_i$ that satisfies $\phi^*_i$. 
    Due to $x \lor y \leftrightarrow z$ or $\overline{x} \lor \overline{y} \leftrightarrow \overline{z}$, the variable assignment $b_{i-1}$ such that $b_{i-1}(x)=b_i(x)$ for each $x\in X_{i-1}$ satisfies $\phi^*_{i-1}$.
    We claim that the variable assignment $b_{i-1}$ is mixed.
    For the sake of contradiction, assume that $b_{i-1}$ is not mixed, that is, $b_{i-1}$ is either the all-$\True$ assignment or all-$\False$ assignment. 
    If $x \lor y \leftrightarrow z$ is added into $\phi^*_{i-1}$, then $b_i(x) = b_i(y) = b_i(z)$ holds.
    Thus, $b_{i}$ is also not mixed, a contradiction.
    Similarly, if $\overline{x} \lor \overline{y} \leftrightarrow \overline{z}$ is added into $\phi^*_{i-1}$, then $b_i(x) = b_i(y) = b_i(z)$ holds, a contradiction.
    Therefore, we conclude that $b_{i-1}$ is a mixed variable assignment for $X_{i-1}$ that satisfies $\phi^*_{i-1}$.
    This completes the proof.
\end{proof}

\subsubsection{Membership in NP}\label{subsubsec:ISRmemberNP}
Next, we show that \prb{ISR} under $\kTJ$ with $k=|I|-\mu$ is in $\NP$ not only when $\mu$ is constant but also $\mu$ is at most $O(\log|I|)$ for graphs of bounded maximum degree and planar graphs of maximum degree $o(\frac{n}{\log n})$, where $n$ is the number of vertices in the input graph. 

\begin{theorem}\label{ISRinNPlogmu}
    Let $G$ be an input graph with $n$ vertices, chromatic number $O(1)$ and maximum degree $o(\frac{n}{\log n})$, and let $I$ be an initial independent set of $G$.
    \prb{ISR} under $\kTJ$ is in $\NP$ when $k=|I|-\mu\geq 1$ with any non-negative integer $\mu$ at most $O(\log|I|)$.
\end{theorem}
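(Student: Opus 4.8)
The plan is to reduce $\NP$-membership to a polynomial upper bound on the length of a \emph{shortest} reconfiguration sequence: given such a bound, a nondeterministic algorithm guesses the sequence and checks in polynomial time that each set is a size-$s$ independent set (where $s:=|I|$) and that consecutive sets have symmetric difference at most $2k$. The useful local fact for $\kTJ$ is that two size-$s$ sets $A,B$ are adjacent iff $|A\cap B|\ge s-k=\mu$; hence in a shortest sequence $\langle I_0,\dots,I_\ell\rangle$ any two \emph{non-consecutive} sets satisfy $|I_a\cap I_b|\le\mu-1$, since otherwise we could delete the sets strictly between them. I would first dispose of degenerate cases: $\mu=0$ is trivial, and if $n$ is below an absolute constant, or $s$ is below a constant $s_1$ chosen so that $\mu\le\gamma\log s$ forces $2\mu\le s$ for every $s\ge s_1$, then $\binom{n}{s}$ is polynomially bounded and a shortest path in the reconfiguration graph already has polynomial length. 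So assume $\mu\ge 1$, $s\ge s_1$, $n$ large, and split on the size of $s$.

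\textbf{Large $s$ (say $s^2>2\mu n$).} Look at the even-indexed subsequence $\mathcal G=\{I_0,I_2,I_4,\dots\}$: by the shortcut property it is a family of $\lfloor\ell/2\rfloor+1$ many $s$-subsets of $V(G)$ with all pairwise intersections at most $\mu-1$, i.e.\ a bounded-intersection family of the kind studied in extremal set theory (a natural place for the intersecting-family input the introduction refers to). With $d_v$ the number of members of $\mathcal G$ containing $v$, one has $\sum_v d_v=|\mathcal G|\,s$ and $\sum_v\binom{d_v}{2}=\sum_{A\ne B\in\mathcal G}|A\cap B|\le(\mu-1)\binom{|\mathcal G|}{2}$, while convexity gives $\sum_v\binom{d_v}{2}\ge n\binom{|\mathcal G|s/n}{2}$, the left side being minimized, for a fixed sum spread over at most $n$ nonzero terms, when the $d_v$ are as equal as possible. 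Comparing the two estimates forces $|\mathcal G|s/n<2$ once $s^2>2\mu n$, hence $\ell=O(n/s)=O(\sqrt{n/\mu})=O(\sqrt n)$, which is polynomial (and $O(1)$ when $s=\Theta(n)$, as happens when the given independent set is near-maximum in a graph of bounded chromatic number).

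\textbf{Small $s$ (say $s^2\le 2\mu n$).} Here I claim \emph{every} instance is a yes-instance via an explicit length-$3$ sequence, using only ``roominess''. Pick arbitrary $\mu$-subsets $P_I\subseteq I$ and $P_J\subseteq J$. Because $\chi(G)=O(1)$, every induced subgraph of $G$ on $m$ vertices contains an independent set of size $\ge m/\chi(G)$, and deleting the closed neighbourhood of any set of at most $2\mu$ vertices removes only $2\mu(\Delta+1)=o(n)$ vertices, since $\mu=O(\log s)=O(\log n)$ and $\Delta=o(n/\log n)$. So one can choose a $\mu$-element independent set $W$ in $G-N[P_I\cup P_J]$ and then pad $P_I\cup W$ and $P_J\cup W$ up to size $s$ with independent sets $R_X\subseteq G-N[P_I\cup W]$ and $R_Y\subseteq G-N[P_J\cup W]$; each such subgraph still has independence number $\Omega(n)\gg s$, so the padding exists. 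Setting $X:=P_I\cup W\cup R_X$ and $Y:=P_J\cup W\cup R_Y$ gives size-$s$ independent sets with $|I\cap X|\ge\mu$, $|X\cap Y|\ge\mu$, $|Y\cap J|\ge\mu$, so $\langle I,X,Y,J\rangle$ is a valid reconfiguration sequence.

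The step I expect to be hardest is the large-$s$ counting: a naive Ray-Chaudhuri--Wilson-type bound on a bounded-intersection family of $s$-sets only yields $|\mathcal G|\le\binom{n}{\mu}=n^{O(\log n)}$, which is quasi-polynomial and useless here; the genuinely polynomial bound on $\ell$ comes from pairing that extremal-set-theory constraint with the observation that a long sequence forces total token mass $\Theta(\ell\,s)$ into $n$ vertices, so some vertex is heavily reused, which contradicts bounded intersections via convexity. The remaining delicate point is calibrating the $s$-threshold so that the counting regime and the roominess regime together cover the full range $\mu=O(\log|I|)$, $\Delta=o(n/\log n)$, $\chi(G)=O(1)$ — and here the hypothesis $\mu=O(\log|I|)$ rather than $\mu=O(\log n)$ is essential, as it is precisely what guarantees $2\mu\le s$ (so that the length-$3$ construction makes sense) outside the brute-forceable range.
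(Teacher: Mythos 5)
Your proof is correct, and while its skeleton matches the paper's (bound the length of a shortest sequence, using the shortcut fact that non-consecutive sets in a shortest sequence intersect in at most $\mu-1$ vertices, plus an ``always-yes'' length-$3$ sequence built from roominess), the key quantitative step and the case split are genuinely different. The paper proves a general lemma via the known packing bound for bounded-intersection families, $m(n,|I|,\{0,\ldots,\mu-1\})\le\binom{n}{\mu}/\binom{|I|}{\mu}\le(n/k)^{\mu}$, and then splits on whether $(n-2\mu(\Delta+1))/\chi\ge k$: if yes, a simple length-$3$ sequence (your $\langle I, A\cup I^*, B\cup I^*, J\rangle$-type construction) always exists; if no, then $n/k\le\chi(1+o(1))$, so $(n/k)^{\mu}=O(1)^{O(\log n)}$ is polynomial. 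You instead split on $s$ versus $\sqrt{2\mu n}$ and, in the dense regime $s^2>2\mu n$, replace the packing bound by a self-contained Fisher-type double-counting/convexity argument ($\sum_v\binom{d_v}{2}$ bounded above by pairwise intersections and below by Jensen), which yields the much stronger $\ell=O(n/s)=O(\sqrt{n})$ without using $\chi$ or $\Delta$ at all; your sparse regime $s^2\le 2\mu n$ forces $s=O(\sqrt{n\log n})=o(n)$, so the chromatic-number/degree roominess gives the length-$3$ certificate exactly as in the paper. Each route has its merits: the paper's argument is shorter once the cited extremal bound is granted and isolates a reusable lemma ($\ell=O((n/k)^{\mu})$ for arbitrary graphs), making transparent why $\mu=O(\log|I|)$ and $\chi=O(1)$ give a polynomial $\chi^{\mu}$; yours is elementary and self-contained, gives a sharper length bound in the dense regime, and cleanly localizes where the hypotheses on $\chi$ and $\Delta$ are actually needed (only when $s\le\sqrt{2\mu n}$). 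Your closing observation about $\mu=O(\log|I|)$ (rather than $O(\log n)$) being what makes $2\mu\le s$ automatic outside the brute-forceable range mirrors the paper's handling of the case $2\mu\ge|I|$. The only cosmetic gaps are trivial: note that the even-indexed sets are pairwise distinct (immediate since their pairwise intersections are at most $\mu-1<s$), and that soundness of the NP verifier is independent of which regime the instance falls in.
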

To prove \Cref{ISRinNPlogmu}, we will evaluate the length of a shortest reconfiguration sequence between any two independent sets of an input graph.
Firstly, we introduce a lemma on the maximum size of intersecting families of sets.
Let $N,r$ be positive integers with $N\geq r$, and let $L\subseteq \{0,1,\ldots,r-1\}$.
We say that a family $\mathcal{F}$ of $r$-element subsets of $[N]$ is an \emph{$(N,r,L)$-system} if $|F\cap F'|\in L$ holds for all distinct $F,F'\in \mathcal{F}$.
Let $m(N,r,L)$ denote the maximum size of $(N,r,L)$-systems.

For any positive integers $N,r,p$ with $N\geq r > p \geq 1$, the following upper bound is known~(see, for example, \cite{DBLP:FranklT16a,GyulaTM64, DBLP:Rodl85}):
\begin{align}\label{align:upper_tau}
    m(N,r,\{0,\ldots,p-1\}) \leq \tbinom{N}{p}/\tbinom{r}{p} .
\end{align}

Using \Cref{align:upper_tau}, we prove \Cref{inNPfortheta}, which provides an upper bound on the length of any shortest reconfiguration sequence in the general case.

\begin{lemma}\label{inNPfortheta}
    Let $G$ be an input graph with $n=|V(G)|$, $I$ and $J$ be initial and target independent sets of $G$, and $\mu<|I|$ be any non-negative integer.
    If $I$ and $J$ are reconfigurable under $\kTJ$, then the length of a shortest reconfiguration sequence between $I$ and $J$ under $\kTJ$, where $k=|I|-\mu \geq 1$, is at most $O((\frac{n}{k})^\mu)$.
\end{lemma}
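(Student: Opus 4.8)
The plan is to bound the length $\ell$ of a \emph{shortest} reconfiguration sequence $\sigma=\langle I=I_0,I_1,\dots,I_\ell=J\rangle$ under $\kTJ$ with $k=|I|-\mu$ by relating $\sigma$ to intersecting set systems and invoking~\eqref{align:upper_tau}. First I would use minimality to show that sets lying at least two positions apart in $\sigma$ intersect in few vertices: I claim $|I_i\cap I_j|<\mu$ whenever $|i-j|\ge 2$. Indeed, if $|I_i\cap I_j|\ge\mu$ for some $i<j$ with $j\ge i+2$, then, since $|I_i|=|I_j|=|I|$, we get $|I_i\symdif I_j|=2|I|-2|I_i\cap I_j|\le 2(|I|-\mu)=2k$, so $I_i$ and $I_j$ are adjacent under $\kTJ$; replacing $\langle\dots,I_i,I_{i+1},\dots,I_{j-1},I_j,\dots\rangle$ by $\langle\dots,I_i,I_j,\dots\rangle$ deletes at least one set yet keeps a valid reconfiguration sequence from $I$ to $J$, contradicting minimality. (In particular the sets of $\sigma$ are pairwise distinct.)

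Next I would split $\sigma$ into its even-indexed subfamily $\mathcal F_{\mathrm{even}}=\{I_0,I_2,I_4,\dots\}$ and its odd-indexed subfamily $\mathcal F_{\mathrm{odd}}=\{I_1,I_3,I_5,\dots\}$. Any two distinct sets inside $\mathcal F_{\mathrm{even}}$ (resp.\ $\mathcal F_{\mathrm{odd}}$) are at least two positions apart in $\sigma$, hence by the claim intersect in at most $\mu-1$ vertices; each such set has exactly $|I|$ vertices and lies in $V(G)$ with $|V(G)|=n$. So, assuming $\mu\ge 1$, both $\mathcal F_{\mathrm{even}}$ and $\mathcal F_{\mathrm{odd}}$ are $(n,|I|,\{0,1,\dots,\mu-1\})$-systems; since $n\ge|I|>\mu\ge 1$, the bound~\eqref{align:upper_tau} applies with $N=n$, $r=|I|$, $p=\mu$ and yields $|\mathcal F_{\mathrm{even}}|,\,|\mathcal F_{\mathrm{odd}}|\le m(n,|I|,\{0,\dots,\mu-1\})\le \binom{n}{\mu}\big/\binom{|I|}{\mu}$. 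As $\ell+1=|\mathcal F_{\mathrm{even}}|+|\mathcal F_{\mathrm{odd}}|$, this gives $\ell+1\le 2\binom{n}{\mu}\big/\binom{|I|}{\mu}$. The remaining case $\mu=0$ is immediate: then $k=|I|$, any two independent sets of size $|I|$ have symmetric difference at most $2|I|=2k$ and are thus adjacent under $\kTJ$, so $\ell\le 1=O\big((n/k)^{0}\big)$.

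Finally I would convert the binomial bound into the stated form with the elementary estimate $\binom{n}{\mu}\big/\binom{|I|}{\mu}=\prod_{t=0}^{\mu-1}\frac{n-t}{|I|-t}\le\big(\tfrac{n}{|I|-\mu}\big)^{\mu}=(n/k)^{\mu}$, using $n-t\le n$ and $|I|-t\ge|I|-\mu=k$ for every $t\in\{0,\dots,\mu-1\}$; hence $\ell\le 2(n/k)^{\mu}=O\big((n/k)^{\mu}\big)$. I do not expect a real obstacle here. The only point requiring a little care is that $\sigma$ itself is \emph{not} an intersecting system of the needed type (adjacent sets share at least $\mu$ vertices), so one must pass to the even/odd subfamilies before~\eqref{align:upper_tau} can be used; the minimality/shortcut observation is the crux, and everything downstream of it is routine.
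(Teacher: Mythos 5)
Your proposal is correct and follows essentially the same route as the paper: the shortcut argument from minimality showing that sets two or more positions apart intersect in fewer than $\mu$ vertices, viewing the resulting (parity-restricted) subfamily as an $(n,|I|,\{0,\ldots,\mu-1\})$-system, and applying~\eqref{align:upper_tau} together with the same binomial estimate. The only cosmetic difference is that the paper bounds just the even-indexed subfamily of size $\lfloor \ell/2\rfloor+1$, whereas you bound both parity classes; the resulting bound is the same up to a constant factor.
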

\begin{proof}
    When $\mu=0$, there exists a reconfiguration sequence $\langle I,J\rangle$ of length $1$. 
    Therefore, we assume that $\mu\geq 1$.
    Consider any shortest reconfiguration sequence $\sigma=\langle I=I_0, I_1,\ldots,I_\ell = J \rangle$ between $I$ and $J$.
    Let $\mathcal{I}=\{I_i \colon i\in\{0,1,\ldots,\ell\}, i \text{ is even}\}$.
    Note that $|\mathcal{I}|=\lfloor \frac{\ell}{2}\rfloor+1$.
    Since $\sigma$ is the shortest reconfiguration sequence, for any two independent sets $I_i$ and $I_j$ in $\mathcal{I}$ with $i < j$, we have $|I_i\cap I_j|<\mu$; otherwise, $I_i$ and $I_j$ are adjacent under $\kTJ$ and we can obtain a shorter reconfiguration sequence $\sigma'$ from $\sigma$ by removing all independent sets $I_{i+1},\ldots,I_{j-1}$, which is a contradiction.
    Then, we observe that $\mathcal{I}$ is an $(n,|I|,\{0,\ldots,\mu-1\})$-system.
    By using \Cref{align:upper_tau}, we have
    \begin{align*}
        |\mathcal{I}|=\biggr\lfloor\frac{\ell}{2}\biggr\rfloor+1 \leq m(n,|I|,\{0,\ldots,\mu-1\}) \leq \tbinom{n}{\mu}/\tbinom{|I|}{\mu} \leq \biggl(\frac{n}{|I|-\mu}\biggr)^{\mu}.
    \end{align*}
    Therefore, the length $\ell$ of $\sigma$ is at most $O((\frac{n}{|I|-\mu})^\mu)=O((\frac{n}{k})^\mu)$.
\end{proof}

Now, we can prove \Cref{ISRinNPlogmu}.

\begin{proof}[Proof of \Cref{ISRinNPlogmu}]
    It is trivial when $\mu=0$, thus we assume that $\mu\geq 1$.
    Suppose first that $2\mu\geq|I|$. 
    Then, there is some constant $c$ such that $|I| < c$ since $\mu=O(\log |I|)$. 
    We can solve \prb{ISR} under $\kTJ$ by enumerating all independent sets of $G$ with constant size in polynomial time.
    
    Suppose next that $2\mu<|I|$ for sufficiently large $|I|$.
    Let $A$ and $B$ be two independent sets such that $A\subseteq I$ and $B\subseteq J$ with size exactly $\mu$. 
    Let $G^*$ be the subgraph of $G$ obtained by removing all vertices in $N[A\cup B]$.
    If $G^*$ has an independent set $I^*$ with size $k$, then there is a reconfiguration sequence $\langle I, A\cup I^*, B\cup I^*, J \rangle$ between $I$ and $J$.
    We say that this reconfiguration sequence is a \emph{simple reconfiguration sequence}.
    Since the vertex set of $G^*$ is $V(G)\setminus N[A\cup B]$, the size of $V(G^*)$ is at least $n-2{\mu}(\Delta+1)$, where $\Delta$ is the maximum degree of $G$.
    Let $\chi$ and $\chi^*$ be the chromatic numbers of $G$ and $G^*$, respectively.
    Then, we observe that $\chi^*\leq \chi$.
    By the relationship between the chromatic number and the independence number, $G^*$ has an independent set with size at least $|V(G^*)|/\chi^*\geq(n-2{\mu} (\Delta+1))/{\chi^*}$.
    Thus, if \((n - 2\mu (\Delta + 1)) / \chi^* \geq  k\), then \(I\) and \(J\) are always reconfigurable, as a simple reconfiguration sequence exists.
    Note that the length of a simple reconfiguration sequence is $3$.

    It remains to consider the case where \((n - 2\mu (\Delta + 1)) / \chi^* < k\), in which a simple reconfiguration sequence between \(I\) and \(J\) may not exist.
    Combined with \Cref{inNPfortheta}, the length $\ell$ of a shortest reconfiguration sequence between $I$ and $J$ satisfies
    \begin{align}\label{master}
        \ell =O\biggr(\biggr(\frac{n}{k}\biggl)^\mu\biggl) =  O\biggr(\biggr(\frac{n\chi^*}{n - 2\mu (\Delta + 1)}\biggl)^\mu\biggl) = O\biggr((\chi^*)^\mu
        \biggr(\frac{1}{1-\frac{2{\mu} (\Delta+1)}{n}}\biggl)^\mu\biggl).
    \end{align}
    Since $\mu=O(\log |I|)=O(\log n)$ and $\Delta=o(\frac{n}{\log n})$, we have $\mu\Delta=o(n)$.
    Hence, we have $(2\mu(\Delta+1))/n=o(1)$.
    Furthermore, $\chi^*\leq \chi=O(1)$.
    Thus, from \Cref{master}, we have $$\ell=O((\chi^*)^\mu(\frac{1}{1-o(1)})^\mu)=O(1)^{O(\log n)}=O(n^{O(1)}).$$
    Therefore, $\ell$ is polynomially bounded in $n$, and hence the problem belongs to $\NP$.
    This completes the proof.
\end{proof}

It is known that the chromatic number of $G$ is at most $\Delta + 1$, where $\Delta$ is the maximum degree of $G$~\cite{Brooks_1941}.
In addition, the chromatic number of any planar graph is at most 4~\cite{10.1215/ijm/1256049011,10.1215/ijm/1256049012}.
Therefore, \Cref{ISRinNPlogmu} gives the results including graphs of bounded maximum degree and planar graphs of maximum degree $o(\frac{n}{\log n})$.

\subsection{VCR}\label{VCRguaranteed}
In \Cref{ISRguaranteed}, we showed that \prb{ISR} under $\kTJ$ with $k = |I| - \mu$, where $\mu$ is any fixed positive integer, is $\NP$-complete even for graphs of maximum degree 3 and for planar graphs of maximum degree 4.
In contrast to this intractability, \prb{VCR} under $\kTJ$ is in {\XP} for general graphs when parameterized by $\mu = |S|-k>0$, where $S$ is an initial vertex cover of an input graph.

\begin{theorem}\label{VCR_XP}
    \prb{VCR} under $\kTJ$ is in {\XP} for general graphs $G$ when parameterized by $\mu=|S|-k\geq 0$, where $S$ is an initial vertex cover of $G$.
\end{theorem}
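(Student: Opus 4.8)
The plan is to reduce reconfigurability to a reachability question in an auxiliary graph of size $n^{O(\mu)}$, in the spirit of the \emph{clique-compressed reconfiguration graph} of~\cite{ISIsoRsuga25}. Write $s=|S|=|T|$, so that two size-$s$ vertex covers $A,B$ of $G$ are adjacent under $\kTJ$ exactly when $|A\cap B|\ge s-k=\mu$. Hence, for every $\mu$-element set $W\subseteq V(G)$ that is contained in at least one size-$s$ vertex cover, the family $K_W$ of all size-$s$ vertex covers containing $W$ is a clique of the reconfiguration graph, since any two of its members share $W$; as there are at most $\binom{n}{\mu}=n^{O(\mu)}$ candidate sets $W$, the reconfiguration graph is covered by $n^{O(\mu)}$ (possibly huge) cliques. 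Before anything subtle, I would dispose of the degenerate regimes. If $\mu=0$ the instance is trivially a yes-instance; if $2\mu\ge s$ then $s=O(\mu)$ and one can enumerate all $\le\binom{n}{s}=n^{O(\mu)}$ size-$s$ vertex covers and do a direct BFS; and if $n-s\le(n-\mu)/2$, then every two size-$s$ vertex covers $A,B$ satisfy $|A\cap B|=n-|(V(G)\setminus A)\cup(V(G)\setminus B)|\ge n-2(n-s)\ge\mu$, so again the instance is a yes-instance. This leaves the ``tight'' regime $2\mu<s$ and $n-s>(n-\mu)/2$, in which $V(G)\setminus S$ is large.

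In this regime I would form the clique-compressed reconfiguration graph $\mathcal{C}$: its vertices are the \emph{realizable} $\mu$-sets (those contained in some size-$s$ vertex cover), and two of them $W,W'$ are adjacent iff $W\cup W'$ is contained in a common size-$s$ vertex cover. The core lemma to prove is: \emph{$S$ and $T$ are reconfigurable under $\kTJ$ if and only if some $\mu$-subset of $S$ and some $\mu$-subset of $T$ lie in the same connected component of $\mathcal{C}$.} For the ``only if'' direction, given a reconfiguration sequence $\langle S=S_0,S_1,\ldots,S_\ell=T\rangle$, picking a $\mu$-element subset $W_i$ of each $S_i\cap S_{i+1}$ yields a walk in $\mathcal{C}$ (consecutive $W_{i-1},W_i$ both lie in $S_i$, which witnesses the edge), whose endpoints are a $\mu$-subset of $S$ and one of $T$; here one also uses that all $\mu$-subsets of a single vertex cover lie in one component of $\mathcal{C}$. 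For the ``if'' direction, a path $W_S=U_0,U_1,\ldots,U_r=W_T$ in $\mathcal{C}$ with edges witnessed by size-$s$ vertex covers $B_1,\ldots,B_r$ splices into the reconfiguration sequence $\langle S,B_1,\ldots,B_r,T\rangle$, since $U_{j-1}\subseteq B_j$ forces $|B_{j-1}\cap B_j|\ge\mu$, and $W_S\subseteq S\cap B_1$, $W_T\subseteq B_r\cap T$. Granting the lemma, it suffices to build $\mathcal{C}$ and run a BFS from the $\mu$-subsets of $S$, testing whether a $\mu$-subset of $T$ is reached; since $|V(\mathcal{C})|\le\binom{n}{\mu}=n^{O(\mu)}$, everything downstream of constructing $\mathcal{C}$ costs $n^{O(\mu)}$.

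The main obstacle is the construction of $\mathcal{C}$ itself: deciding whether a $\mu$-set $W$ is realizable, and whether two realizable $\mu$-sets $W,W'$ are $\mathcal{C}$-adjacent, both amount to deciding, for a vertex set $Z$ with $|Z|\le 2\mu$, whether $G-Z$ admits a vertex cover of size $s-|Z|$ --- a question that is $\NP$-hard for arbitrary graphs, so the given cover $S$ must be exploited. Here I would invoke the structural description of size-$s$ vertex covers relative to $S$: because $V(G)\setminus S$ is independent, every size-$s$ vertex cover has the form $(S\setminus X)\cup Y$ with $X\subseteq S$ an independent set of $G$, $Y\subseteq V(G)\setminus S$, $|X|=|Y|$, and $N_G(X)\setminus S\subseteq Y$. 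This rephrases ``$W$ is realizable'' and ``$K_W\cap K_{W'}\neq\emptyset$'' as the existence of an independent set $X$ inside $S$ that avoids a prescribed set of at most $2\mu$ vertices, has bounded size, and has enough surplus $|X|-|N_G(X)\setminus S|$ over its neighbourhood outside $S$ --- a deficiency/flow-type condition on the bipartite graph between $S$ and $V(G)\setminus S$. Showing that such an $X$ can be found (or ruled out) in polynomial time on general graphs --- so that all $n^{O(\mu)}$ vertices and edges of $\mathcal{C}$ are produced in $n^{O(\mu)}$ total time --- using the tight-regime structure, is where I expect the argument to be most delicate, and it is the crux that separates \prb{VCR} from \prb{ISR} here.
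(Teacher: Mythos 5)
Your reduction to reachability in the clique-compressed reconfiguration graph, and the lemma relating $\kTJ$-reconfigurability of $S,T$ to connectivity between a $\mu$-subset of $S$ and a $\mu$-subset of $T$, match the paper's \Cref{claim:clique_nodes} and are argued correctly. However, the proposal stops exactly at the step that carries the whole theorem: you never establish that the adjacency test for two clique-nodes --- ``does $G$ have a vertex cover of size exactly $|S|$ containing a prescribed set $Z$ with $|Z|\le 2\mu$?'' --- can be decided in time $n^{O(\mu)}$. Your suggested route, rewriting size-$|S|$ covers as $(S\setminus X)\cup Y$ and searching for an independent set $X\subseteq S$ with a surplus condition $|X|\ge |(N_G(X)\cup Z)\setminus S|$, uses only the single cover $S$; since $G[S]$ is an arbitrary graph, this is essentially a ``find a cover strictly smaller than the one you hold'' question, and you give no polynomial-time procedure for it (you explicitly flag it as the unresolved crux). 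So as written the proposal is an incomplete proof, not a different one.

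The idea you are missing is that the paper exploits \emph{both} given covers $S$ and $T$, not just $S$. After the trivial check $|S\cap T|\ge\mu$ (immediate yes-instance), one may assume $|S\cap T|<\mu$. Then in $G'=G[V(G)\setminus Z]$ with $S'=S\setminus Z$, $T'=T\setminus Z$, the sets $S'\setminus T'$, $T'\setminus S'$ and $U'=V(G')\setminus(S'\cup T')$ are independent, and $U'$ has no edges to $S'\symdif T'$; hence after guessing the intersection of the sought cover with the small set $S'\cap T'$ (at most $2^{|S'\cap T'|}\le 2^{\mu}$ guesses) and deleting accordingly, the residual graph is bipartite, where a minimum vertex cover is computable in polynomial time by K\H{o}nig's theorem; padding then yields a cover of size exactly $|S|-|Z|$ whenever one of size at most that exists, because supersets of vertex covers are vertex covers. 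This is precisely the point where \prb{VCR} separates from \prb{ISR}, and without it (or some substitute argument) your construction of the compressed graph is not known to run in $\XP$ time.
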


In the proof of \Cref{VCR_XP}, we present an \(\XP\) algorithm for the problem.

Let \((G, S, T, \kTJ)\) be an instance of \prb{VCR}. 
We consider the \emph{reconfiguration graph} \(\mathcal{C} = (\mathcal{V}, \mathcal{E})\) for the instance.
In a reconfiguration graph $\mathcal{C=(V,E)}$ of $G$, each vertex in $\mathcal{V}$ corresponds to a vertex cover $S'$ of $G$ with size exactly $|S|$. 
We call each vertex of $\mathcal{C}$ a \emph{node} in order to distinguish it from a vertex of $G$.
We use $w_{S'}$ to denote a node corresponding to a vertex cover $S'$.
Then, two nodes $w_{S'}$ and $w_{T'}$ are joined by an edge in $\mathcal{C}$ if and only if the corresponding two vertex covers $S'$ and $T'$ of $G$ with size exactly $|S|$ are adjacent under $\kTJ$. 
Since the number of such vertex covers can be superpolynomial, explicitly constructing \(\mathcal{C}\) in polynomial time may be infeasible, in general.

Our approach builds on the clique-compressed reconfiguration graph technique introduced in~\cite{ISIsoRsuga25}, which compactly represents \(\mathcal{C}\) by grouping cliques into single nodes.  
This compressed graph has at most \(O(n^\mu)\) nodes and preserves essential connectivity, making it sufficient for solving the problem if it is constructed efficiently.

Now, let us define a compressed version of a reconfiguration graph, \emph{clique-compressed reconfiguration graph} $\mathcal{C'=(V',E')}$~\cite{ISIsoRsuga25}.
We call each vertex of $\mathcal{C'}$ a \emph{clique-node} in order to distinguish it from a node of $\mathcal{C}$ and a vertex of $G$. 
There is a one-on-one correspondence between clique-nodes and vertex subsets of $V(G)$ with size exactly $\mu$. 
We use $w'_{X}$ to denote the clique-node assigned a vertex set $X\subseteq V(G)$ with size exactly $\mu$. 
Then two clique-nodes $w'_{X}$ and $w'_{Y}$ in $\mathcal{C'}$ are joined by an edge if and only if there exists a vertex cover $S'$ in $G$ with size exactly $|S'|$ such that $X\cup Y\subseteq S'$.

The following \Cref{claim:clique_nodes} asserts that the reconfiguration graph and its corresponding clique-compressed reconfiguration graph are essentially equivalent in terms of solution reachability.
\begin{lemma}\label{claim:clique_nodes}
    Let $(G, S, T, \kTJ)$ be an instance of $\prb{VCR}$, where $k = |S| - \mu$. 
    Consider the reconfiguration graph $\mathcal{C}$ and the clique-compressed reconfiguration graph $\mathcal{C'}$ for this instance. 
    Let $w_S$ and $w_T$ denote the nodes in $\mathcal{C}$ corresponding to $S$ and $T$, respectively, and let $w'_X$ and $w'_Y$ denote the clique-nodes in $\mathcal{C'}$ corresponding to subsets $X \subseteq S$ and $Y \subseteq T$ of size exactly $\mu$, respectively. 
    Then, there exists a path in $\mathcal{C}$ connecting $w_S$ and $w_T$ if and only if there exists a path in $\mathcal{C'}$ connecting $w'_X$ and $w'_Y$.
\end{lemma}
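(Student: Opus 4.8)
The plan is to reduce both directions to a single clean characterization of adjacency in $\mathcal{C}$. First I would record the following observation: for any two vertex covers $P$ and $Q$ of $G$ with $|P|=|Q|=|S|$ we have $|P\symdif Q| = 2(|S|-|P\cap Q|)$, so $w_P$ and $w_Q$ are adjacent in $\mathcal{C}$ (i.e.\ $|P\symdif Q|\le 2k = 2(|S|-\mu)$) exactly when $|P\cap Q|\ge\mu$. Hence an edge of $\mathcal{C}$ is ``witnessed'' by a common $\mu$-element subset of its two endpoints, whereas by definition an edge of $\mathcal{C'}$ is witnessed by a single vertex cover of size $|S|$ containing both of the corresponding $\mu$-sets. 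Each direction of the lemma then amounts to translating one type of witness into the other along a walk, using the standard fact that a walk between two (clique-)nodes exists if and only if a path does.

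For the direction ``$\mathcal{C}$-path $\Rightarrow$ $\mathcal{C'}$-path'', I would take a path $\langle w_S=w_{S_0}, w_{S_1},\dots, w_{S_\ell}=w_T\rangle$ in $\mathcal{C}$ and, for each $i\in\{0,\dots,\ell-1\}$, choose a $\mu$-set $Z_i\subseteq S_i\cap S_{i+1}$, which exists by the observation above. Then $Z_i\cup Z_{i+1}\subseteq S_{i+1}$, so $S_{i+1}$ witnesses the edge $w'_{Z_i}w'_{Z_{i+1}}$ of $\mathcal{C'}$; moreover $X\cup Z_0\subseteq S_0=S$ and $Z_{\ell-1}\cup Y\subseteq S_\ell=T$ (here using $X\subseteq S$ and $Y\subseteq T$), so $S$ witnesses $w'_Xw'_{Z_0}$ and $T$ witnesses $w'_{Z_{\ell-1}}w'_Y$. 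Thus $\langle w'_X, w'_{Z_0},\dots, w'_{Z_{\ell-1}}, w'_Y\rangle$ is a walk in $\mathcal{C'}$. The degenerate case $\ell=0$ gives $S=T$, whence $X\cup Y\subseteq S$ and $w'_X, w'_Y$ coincide or are adjacent.

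For the converse, I would take a path $\langle w'_X=w'_{X_0},\dots, w'_{X_p}=w'_Y\rangle$ in $\mathcal{C'}$ and, for each edge, pick a witnessing vertex cover $R_i$ of size $|S|$ with $X_i\cup X_{i+1}\subseteq R_i$. Consecutive covers $R_{i-1}$ and $R_i$ share the $\mu$-set $X_i$ and are therefore adjacent in $\mathcal{C}$ by the observation; likewise $S$ and $R_0$ share $X_0=X$, and $R_{p-1}$ and $T$ share $X_p=Y$. Hence $\langle w_S, w_{R_0},\dots, w_{R_{p-1}}, w_T\rangle$ is a walk in $\mathcal{C}$, with the case $p=0$ again trivial since then $X=Y\subseteq S\cap T$.

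I do not anticipate a genuinely hard step: once the adjacency characterization $|P\cap Q|\ge\mu$ is in place, the argument is pure bookkeeping with $\mu$-sets and witnessing vertex covers. The only points requiring a little care are the two ``end'' edges of each constructed walk --- which rely on $X\subseteq S$, $Y\subseteq T$, and on $S$ and $T$ themselves being vertex covers of size $|S|$ --- together with the length-zero degenerate cases; so if anything counts as the main obstacle, it is merely being precise about these boundary situations.
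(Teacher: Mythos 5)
Your proposal is correct and follows essentially the same route as the paper's proof: both directions extract size-$\mu$ subsets of consecutive intersections (resp.\ pick witnessing vertex covers containing $X_i\cup X_{i+1}$), use the adjacency characterization $|P\cap Q|\ge\mu$, and patch the endpoints to $w'_X,w'_Y$ (resp.\ $w_S,w_T$) via the extra edges witnessed by $S$ and $T$. The only differences are cosmetic: you phrase the endpoint fix as explicit extra walk edges and handle the length-zero cases explicitly, where the paper uses a case distinction.
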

\begin{proof}
    First, we show the only-if direction. Suppose that there exists a path $P_{\mathcal{C}}=\langle w_{S}=w_{S_0},w_{S_1},\ldots,w_{S_\ell}=w_{T} \rangle_{\mathcal{C}}$ in $\mathcal{C}$ connecting $w_{S}$ and $w_{T}$, where $\ell$ is the length of $P_{\mathcal{C}}$. 
    Consider two vertex sets $S_{i-1}$ and $S_i$ for $i\in [\ell]$. Since $S_{i-1}$ and $S_i$ are adjacent, we have $|S_{i-1} \setminus S_i| \le k = |S| - \mu$, which implies $|S_{i-1}\cap S_i| \ge \mu$. 
    For each $i\in [\ell]$, let $X_i$ be a vertex set such that $X_i\subseteq S_{i-1}\cap S_i$ with size exactly $\mu$.
    For each $i \in [\ell-1]$, since $X_{i}\cup X_{i+1}\subseteq S_{i}$, the two corresponding clique-nodes $w'_{X_{i}}$ and $w'_{X_{i+1}}$ are joined by an edge in $\mathcal{C'}$. 
    Thus, there exists a path $P_{\mathcal{C'}}$ in $\mathcal{C'}$ connecting $w'_{X_1}$ and $w'_{X_\ell}$, where $X_1\subseteq S_0 = S$ and $X_\ell \subseteq S_\ell = T$. 
    If $X_1=X$ and $X_\ell=Y$, then we have the path between $w'_{X}$ and $w'_{Y}$ in $\mathcal{C'}$.
    Otherwise, consider $X_1 \neq X$ or $X_\ell \neq Y$.
    Since $X_1$ and $X$ (resp.\ $X_\ell$ and $Y$) are contained in $S$ (resp.\ $T$), $w'_{X_1}$ and $w'_{X}$ (resp.\ $w'_{X_\ell}$ and $w'_{Y}$) are joined by an edge in $\mathcal{C'}$.
    Thus, we have the path between $w'_{X}$ and $w'_{Y}$ in $\mathcal{C'}$.
    Therefore, there is a path connecting two clique-nodes $w'_{X}$ and $w'_{Y}$ in $\mathcal{C'}$.
		
	Next, we show the if direction, and hence suppose that there exists a path $P_{\mathcal{C'}}=\langle w'_X=w'_{X_0},w'_{X_1},\ldots,w'_{X_{\ell}}=w'_{Y} \rangle_{\mathcal{C'}}$ in $\mathcal{C'}$ connecting two clique-nodes $w'_{X}$ and $w'_{Y}$. 
    For each $i\in[\ell]$, let $S_{i-1}$ be a vertex cover such that $X_{i-1}\cup X_i\subseteq S_{i-1}$. 
    For each $i\in[\ell-1]$, since $|S_{i-1}\cap S_{i}|\geq|X_{i}|=\mu$, $S_{i-1}$ and $S_i$ are adjacent under $\kTJ$.
    Thus two nodes $w_{S_{i-1}}$ and $w_{S_{i}}$ are joined by an edge in $\mathcal{C}$. 
    Hence, there exists a path connecting $w_{S_0}$ and $w_{S_{\ell-1}}$. 
    If $S_0=S$ and $S_{\ell-1}=T$, we have the path between $w_S$ and $w_T$ in $\mathcal{C}$.
    Otherwise, consider $S_0\neq S$ or $S_{\ell-1}\neq T$.
    Since $S_0$ and $S$ (resp.\ $S_{\ell-1}$ and $T$) contain $X_0=X$ (resp.\ $X_\ell=Y$), $w_{S_0}$ and $w_S$ (resp.\ $w_{S_{\ell}}$ and $w_T$) are joined by an edge in $\mathcal{C}$.
    Thus, we have the path between $w_S$ and $w_T$ in $\mathcal{C}$, as claimed.
\end{proof}

Now, we prove \Cref{VCR_XP}.
\begin{proof}[Proof of \Cref{VCR_XP}]
    We give an {\XP} algorithm that solves the problem.
    If $|S\cap T|\geq \mu$, then the algorithm immediately returns the output YES, since $S$ and $T$ are adjacent under $\kTJ$. 
    Therefore, we consider an instance with $|S\cap T|< \mu$.
    
    The basic technique is the same as that provided in \cite{ISIsoRsuga25}.
    In our algorithm, we first construct a clique-compressed reconfiguration graph for a given instance, and then run the breadth-first search algorithm on the graph to determine whether the initial configuration and the target one are reconfigurable or not. 
    Let $(G,S,T,\kTJ)$ be an instance of \prb{VCR} under $\kTJ$ parameterized by $\mu=|S|-k$.
    We explain how to construct the clique-compressed reconfiguration graph $\mathcal{C'=(V',E')}$ without knowing the reconfiguration graph $\mathcal{C=(V,E)}$.
    
    For each vertex subset $X$ of $G$ with size exactly $\mu$, we create a clique-node in $\mathcal{C'}$ to which $X$ is assigned.
    We then construct the edge set $\mathcal{E'}$ of $\mathcal{C'}$. 
    Consider two clique-nodes $w'_{X}$ and $w'_{Y}$, where $X$ and $Y$ are vertex subsets of $G$ with $|X|=|Y|=\mu$.
    Let $Z = X\cup Y$.
    To check the existence of an edge between $w'_{X}$ and $w'_{Y}$, we will determine whether there is a vertex cover $W$ with size exactly $|S|$ such that $Z \subseteq W$.

    Since $W$ contains $Z$ if it exists, we delete all vertices in $Z$ from $G$, and then check whether there exists a vertex cover $W'$ with size $t=|S|-|Z|$ in the remaining graph $G'=G[V(G)\setminus Z]$.
    We observe that $S'= S\setminus Z$ and $T'= T\setminus Z$ remain vertex covers of $G'$.

    Firstly, we partition vertices in $V(G')$ into four vertex sets $S'\setminus T'$, $T'\setminus S'$, $S'\cap T'$, and $U' = V(G')\setminus (S'\cup T')$ (as shown in \Cref{fig:VCRdecomposition}). 
    Since $S'$ and $T'$ are vertex covers, $S'\setminus T'$, $T'\setminus S'$, and $U'$ are independent sets, and there is no edge between $S'\setminus T'$ and $U'$, and $T'\setminus S'$ and $U'$.
    Thus, if we remove all vertices in $S'\cap T'\subseteq S\cap T$ with size at most $\mu - 1$, the remaining graph is bipartite. 
    This is a key property in our algorithm, which is referenced later.
    
    Secondly, we guess a vertex set $A = W' \cap (S'\cap T')$.
    There are at most $O(2^{|S'\cap T'|})=O(2^\mu)$ possible types of such guesses.
    If $A$ is not a vertex cover of $G'[S'\cap T']$, then we abort the current choice $A$ and consider the next one.
    If $v\in S'\cap T'$ is in $A$, we delete $v$; otherwise, we suppose that $N(v) \subseteq W'$, and delete $N[v]$.
    Let $G^*$ be the obtained graph.
    Now, it is supposed that $A\cup N(\notinA)$ is in $W'$, where $\notinA=(S'\cap T')\setminus A$. 
    Since $G^*$ has no vertex in $S'\cap T'$, $G^*$ is bipartite.
    It is known that a minimum vertex cover on a bipartite graph can be computed in polynomial time~\cite{DBLP:journals/jacm/HopcroftPV77,konig1931}.
    Let $B$ be a minimum vertex cover in $G^*$.
    If $|A\cup N(\notinA)|+|B| > t$, then we abort the current choice $A$ and consider the next one.
    Otherwise, there exists a vertex cover $W'\supseteq A \cup N(\notinA) \cup B$ with size exactly $t$ of $G'$, and a vertex cover $W=W' \cup Z$ with size exactly $|S|$ of $G$.
    At least one guess $A$ succeeds if and only if we add an edge between $w'_X$ and $w'_Y$ to $\mathcal{E'}$.
    
    At the end of this algorithm, by the breadth search algorithm, we check whether there is a path between two clique-nodes $w_{X'}$ and $w_{Y'}$ for arbitrarily chosen vertex sets $X'\subseteq S$ and $Y'\subseteq T$ of $G$ with size exactly $\mu$.
    The correctness follows from \Cref{claim:clique_nodes}.

    We estimate the running time of this algorithm. 
    Let $n=|V(G)|$ and $m=|E(G)|$. The constructed clique-compressed reconfiguration graph has exactly $\binom{n}{\mu}=O(n^\mu)$ clique-nodes. For each pair of two distinct clique-nodes, we decide whether there is an edge between them.
    Firstly, we partition the vertices in $V(G')$ into four vertex sets. 
    This can be done in linear time. 
    Then, for at most $O(2^\mu)$ possible guesses regarding a vertex set $S'\cap T'$, we delete vertices of $G^*$ in linear time and run the algorithm that finds a minimum vertex cover of $G^*$ in time $O(\sqrt{n}(n+m))$~\cite{DBLP:journals/jacm/HopcroftPV77,konig1931}. Therefore, we can construct the edge set with size at most $O(n^{2\mu})$ in time $O(n^{2\mu}(n+2^\mu (n+\sqrt{n}(n+m))))=O(2^\mu n^{2\mu+1/2}(n+m))$.
    At last, we can determine whether $(G,S,T,\kTJ)$ is a yes-instance or not by the breadth-first search algorithm in time $O(|\mathcal{V'}|+|\mathcal{E'}|)=O(n^{2\mu})$.
    The total running time of this algorithm is $O(2^\mu n^{2\mu+1/2}(n+m))$.
    Therefore, \prb{VCR} under $\kTJ$ is in {\XP} for general graphs when parameterized by $\mu=|S|-k$.
\end{proof}
\begin{figure}[h]
    \centering
\begin{tikzpicture}[scale=1]

\node[fill=white!10, draw=black, circle, minimum size=3mm] (st0) at (0,0){};
\node[fill=white!10, draw=black, circle, minimum size=3mm] (st1) at (0,0.5){};
\node[fill=white!10, draw=black, circle, minimum size=3mm] (st2) at (0,-0.5){};
\draw[rounded corners] (-0.3,0.8) rectangle(0.3,-0.8);
\node[] at (0,1) {$S'\cap T'$};

\node[fill=white!10, draw=black, circle, minimum size=3mm] (s0) at (1,-1.5){};
\node[fill=white!10, draw=black, circle, minimum size=3mm] (s1) at (2,-1.5){};
\node[fill=white!10, draw=black, circle, minimum size=3mm] (s2) at (3,-1.5){};
\draw (4,-1.5) node{$\cdots$};
\node[fill=white!10, draw=black, circle, minimum size=3mm] (s3) at (5,-1.5){};
\draw[rounded corners] (0.7,-1.8) rectangle(5.3,-1.2);
\node[] at (6,-1.5) {$S'\setminus T'$};

\node[fill=white!10, draw=black, circle, minimum size=3mm] (t0) at (1,1.5){};
\node[fill=white!10, draw=black, circle, minimum size=3mm] (t1) at (2,1.5){};
\node[fill=white!10, draw=black, circle, minimum size=3mm] (t2) at (3,1.5){};
\draw (4,1.5) node{$\cdots$};
\node[fill=white!10, draw=black, circle, minimum size=3mm] (t3) at (5,1.5){};
\draw[rounded corners] (0.7,1.8) rectangle(5.3,1.2);
\node[] at (6,1.5) {$T'\setminus S'$};

\node[fill=white!10, draw=black, circle, minimum size=3mm] (o0) at (-1.5,1.4){};
\node[fill=white!10, draw=black, circle, minimum size=3mm] (o1) at (-1.5,0.7){};
\node[fill=white!10, draw=black, circle, minimum size=3mm] (o2) at (-1.5,0){};
\draw (-1.5,-0.6) node{$\vdots$};
\node[fill=white!10, draw=black, circle, minimum size=3mm] (o3) at (-1.5,-1.4){};
\draw[rounded corners] (-1.8,1.7) rectangle(-1.2,-1.7);
\node[] at (-2,0) {$U'$};

\draw[very thick] (o0)--(st0);
\draw[very thick] (o1)--(st0);
\draw[very thick] (o0)--(st2);
\draw[very thick] (o2)--(st1);
\draw[very thick] (o3)--(st2);

\draw[very thick] (s0)--(st0);
\draw[very thick] (s1)--(st0);
\draw[very thick] (s0)--(st2);
\draw[very thick] (s2)--(st1);
\draw[very thick] (s3)--(st2);

\draw[very thick] (t0)--(st0);
\draw[very thick] (t1)--(st0);
\draw[very thick] (t0)--(st2);
\draw[very thick] (t2)--(st1);
\draw[very thick] (t3)--(st2);

\draw[very thick] (t3)--(s3);
\draw[very thick] (t1)--(s3);
\draw[very thick] (t0)--(s2);
\draw[very thick] (t2)--(s1);
\draw[very thick] (t3)--(s2);

\draw[very thick] (st0)--(st1);

\end{tikzpicture}
    \caption{The partition of the vertices in the graph $G'$. The vertex sets $S'\setminus T'$, $T'\setminus S'$, and $U'=V(G')\setminus (S'\cup T')$ are independent sets, and there is no edge between $U'$ and $S'\bigtriangleup T'$.}
    \label{fig:VCRdecomposition}
\end{figure}
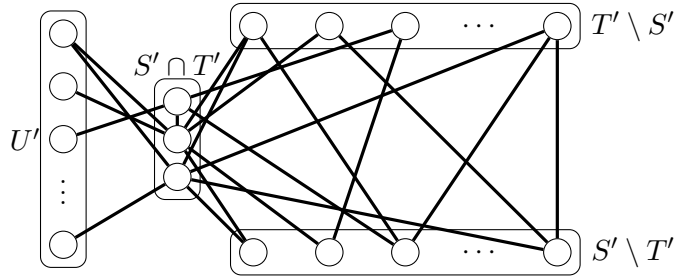

\section{PSPACE-completeness When $k$ is Constant}\label{sec:k_const}
In this section, we investigate the $\PSPACE$-completeness of \prb{ISR} under $\kTJ$ and $\kTS$ when $k$ is fixed.
Note that the computational complexity of \prb{ISR} and \prb{VCR} under $\kTJ$ and $\kTS$ is the same when $k$ is fixed, due to their complementary relationship.

\subsection{Planar Graphs}
This subsection is dedicated to the following results.
\begin{theorem}\label{PSPACE-comp-planar}
    Let $k\geq 2$ be any fixed positive integer. \prb{ISR} under $\Rule\in\{\kTS,\kTJ\}$ is $\PSPACE$-complete for planar graphs of maximum degree $3$ and bounded bandwidth.
\end{theorem}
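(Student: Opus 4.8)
Membership in \PSPACE\ is already established by the remark at the end of \Cref{sec:Preliminaries}, so it suffices to prove \PSPACE-hardness for both $\Rule=\kTJ$ and $\Rule=\kTS$. The plan is to reduce from \prb{ISR} under $1$-$\TS$ (i.e.\ $\TS$), which is \PSPACE-complete already on planar graphs of maximum degree $3$ and bounded bandwidth~\cite{HD05,DBLP:conf/iwpec/Zanden15,W18}. The key idea is to make each token ``heavy'': a single token of the source instance is represented, in the constructed graph, by a \emph{bundle} of exactly $k$ tokens that can only be relocated as one indivisible unit. Then two bundles can never be relocated in the same step, since doing so would alter $2k>k$ vertices, which is forbidden under both $\kTJ$ and $\kTS$; hence a single step of the new instance corresponds to ``moving at most one token'' in the source instance, which is exactly the power of $1$-$\TS$.

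Concretely, given an instance $(G,I,J)$ of \prb{ISR} under $1$-$\TS$, I build $(G',I',J',\Rule)$ by substituting, for each vertex $v$ of $G$, a constant-size planar \emph{vertex gadget} $B_v$ of maximum degree $3$ that is either ``full'' (holding a bundle of $k$ tokens in a prescribed arrangement) or ``empty'', and, for each edge $uv$ of $G$, a constant-size planar \emph{edge gadget} $E_{uv}$ built from $k$ vertex-disjoint parallel tracks along which a bundle can be slid one step per $\kTS$-move (the required perfect matching being the $k$ parallel track-edges). The gadgets are wired at ``ports'' so that (i) $E_{uv}$ in its neutral state carries exactly one parked bundle and, together with the states of $B_u$ and $B_v$, faithfully encodes the independence constraint ``not both $u$ and $v$ are in the solution'', and (ii) the $1$-$\TS$ step ``slide the token on $v$ to an empty neighbour $u$'' corresponds to a short sequence of bundle-slides pushing the bundle out of $B_v$, through $E_{uv}$, and into $B_u$, available precisely when $B_u$ is empty and no $G$-neighbour of $u$ is full. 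I let $I'$ be the bundles in $\{v\colon v\in I\}$ (placed in the corresponding gadgets) together with the neutral fillings of all edge gadgets, and $J'$ likewise from $J$; then $|I'|=|J'|$ since $|I|=|J|$, and the reduction runs in polynomial time. Since each gadget is planar of maximum degree $3$ and is substituted locally, $G'$ is planar of maximum degree $3$; and since $G'$ is obtained from the bounded-bandwidth graph $G$ by local substitution of gadgets of bounded size, a linear layout of $G$ blown up by the maximum gadget size witnesses that $G'$ has bounded bandwidth. The same $G'$, $I'$, $J'$ serve for both $\Rule=\kTS$ and $\Rule=\kTJ$.

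The forward direction of correctness is routine: a $1$-$\TS$ reconfiguration sequence from $I$ to $J$ in $G$ is simulated step by step by the corresponding bundle-pushing sequences, all of whose moves are $\kTS$-moves (hence $\kTJ$-moves). The backward direction is the technical heart and the main obstacle. I would prove a \emph{projection lemma}: every independent set of $G'$ of size $|I'|$ that is reachable from $I'$ under $\kTJ$ ``decodes'' to a unique independent set of $G$ of size $|I|$ (read off from which vertex gadgets are full and from the positions of the edge-gadget bundles), with $I'$ and $J'$ decoding to $I$ and $J$; and two consecutive sets of a $\kTJ$-reconfiguration sequence decode either to the same $G$-configuration or to two differing by a legal $1$-$\TS$ step, so the induced $G$-steps form a $1$-$\TS$ reconfiguration sequence from $I$ to $J$. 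The subtle part is ruling out ``cheating'': under $\kTJ$ a token may jump to a distant gadget, and even under $\kTS$ a token may drift away from its bundle along the tracks of an edge gadget, so one must design the gadgets rigidly enough that (a) every independent set of size $|I'|$ within symmetric difference $2k$ of a decodable one is again decodable, (b) any ``broken'' configuration (a stray token, or a gadget missing part of its bundle) admits no move increasing a suitable cohesion potential, so broken detours never reach a new decodable configuration, and (c) the only decodable-to-decodable transitions are the intended bundle pushes. Establishing (a)--(c) by a careful case analysis of the local moves of the gadgets --- made harder by the fact that maximum degree $3$ rules out clique-based rigidity, so rigidity must instead be engineered from path- and cycle-like substructures --- while simultaneously keeping the construction planar, of degree $3$, and of bounded bandwidth, is the real work of the proof.
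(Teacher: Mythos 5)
Your overall strategy (reduce from a known \PSPACE-hard reconfiguration problem on planar, degree-$3$, bounded-bandwidth graphs and make each logical unit ``heavy'' so that a $\kTJ$/$\kTS$ step can change only one unit) is in the same spirit as the paper, which reduces from \NCL{} and represents each NCL edge by a cycle on $2k$ vertices carrying exactly $k$ tokens. However, your write-up has a genuine gap: the gadgets are never actually constructed, and the projection/rigidity lemma with properties (a)--(c) --- which you yourself identify as ``the real work of the proof'' --- is only postulated, not proved. The indivisibility of a bundle does not follow from the move-size bound alone: a $\kTJ$ (or $\kTS$) step may move \emph{fewer} than $k$ tokens, so nothing stated prevents a step from detaching part of a bundle, mixing tokens of a vertex gadget with tokens of the ``parked bundle'' in an edge gadget, or reaching size-$|I'|$ independent sets that decode to nothing; your appeal to a ``cohesion potential'' is not defined, and it is also unverified that $k$ vertex-disjoint parallel tracks attached at ports can be realized with maximum degree $3$, planarity, and bounded bandwidth simultaneously. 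As it stands, the argument establishes the easy forward simulation but not the backward direction, so the reduction is not complete.

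The paper avoids exactly this difficulty with a counting (tightness) argument rather than engineered rigidity: in its construction the total number of tokens equals $\alpha(G')$, each NCL-edge gadget is a $2k$-cycle that can hold at most $k$ tokens, and each \textsc{and}/\textsc{or} gadget can hold at most one internal token, so \emph{every} independent set of the right size, reachable or not, automatically has the canonical structure and decodes to a valid NCL configuration; moreover, moving the $k$ tokens of an edge gadget around its cycle is simultaneously a $\kTS$- and $\kTJ$-move, which gives both rules at once. If you want to salvage your route via \prb{ISR} under $\TS$, you would need an analogous global tightness property (all gadgets simultaneously saturated, with no slack anywhere) or a fully worked-out case analysis of the local moves; without one of these, the claimed projection lemma is unsupported and the proof is incomplete.
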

To prove \Cref{PSPACE-comp-planar}, we construct a reduction from \prb{Nondeterministic Constraint Logic}, which was introduced by Hearn and Demaine~\cite{HD05} and has been used to prove the $\PSPACE$-hardness of reconfiguration problems, including \prb{ISR} under $\TS$~\cite{HD05}.

In this section, we first present the definition of Nondeterministic Constraint Logic in \Cref{subsubsec:DefOfNCL}. We then introduce the gadgets used in our reduction in \Cref{subsubsec:NCLgadget}. Finally, we describe our polynomial-time reduction in \Cref{subsubsec:NCLtoISR} and prove its correctness in \Cref{subsubsec:CorrectnessOfNCLtoISR}.

\subsubsection{Definition of Nondeterministic Constraint Logic}\label{subsubsec:DefOfNCL}

An NCL ``machine'' is defined by an undirected graph together with an assignment of edge weights drawn from $\{1,2\}$.
An \emph{(NCL) configuration} of this machine is an orientation of the edges such that, at every vertex, the total weight of the incoming arcs is at least two.
\Cref{fig:NCL}(a) illustrates an example configuration of an NCL machine, where weight-2 edges are shown as thick (blue) lines and weight-1 edges as thin (red) lines.
Two NCL configurations are said to be \emph{adjacent} if they differ only in the orientation of a single edge.
In the {\NCL} problem, we are given an NCL machine together with two of its configurations.
It is known that determining whether there exists a sequence of adjacent configurations that transforms one into the other is $\PSPACE$-complete~\cite{HD05}.

An NCL machine is called an \textsc{and/or} \emph{constraint graph} if it consists solely of two types of vertices, namely \emph{NCL \textsc{and} vertices} and \emph{NCL \textsc{or} vertices}.
A vertex of degree $3$ is defined as an \emph{NCL \textsc{and} vertex} if its three incident edges have weights $1$, $1$, and $2$ (see \Cref{fig:NCL}(b)).
An NCL \textsc{and} vertex $u$ behaves analogously to a logical \textsc{and}: the weight-2 edge can be directed outward from $u$ if and only if both weight-1 edges are directed inward to $u$.
Note, however, that the weight-2 edge is not necessarily directed outward even when both weight-1 edges are directed inward.
Similarly, a vertex of degree $3$ is defined as an \emph{NCL \textsc{or} vertex} if all three incident edges have weight $2$ (see \Cref{fig:NCL}(c)).
An NCL \textsc{or} vertex $v$ behaves like a logical \textsc{or}: one of its three edges can be directed outward from $v$ if and only if at least one of the other two edges is directed inward.

For example, the NCL machine in \Cref{fig:NCL}(a) is an \textsc{and/or} constraint graph. From now on, we call an \textsc{and/or} constraint graph simply an \emph{NCL machine}, and call an edge in an NCL machine an \emph{NCL edge}.
\prb{Nondeterministic Constraint Logic} remains $\PSPACE$-complete even if an input NCL machine is a planar graph of bounded bandwidth and maximum degree 3~\cite{W18}.

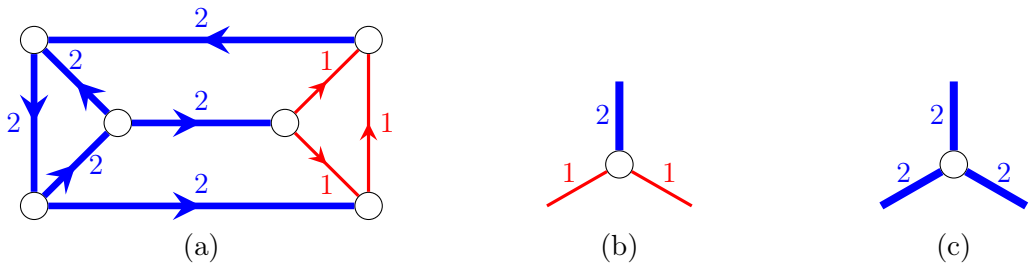
\begin{figure}[t]
	\centering
\begin{tikzpicture}[scale=1.1]
	\node[fill=white!10, draw=black, circle, minimum size=3.5mm] (N1) at (-9,1.5){};
	\node[fill=white!10, draw=black, circle, minimum size=3.5mm] (N2) at (-9,-0.5){};
	\node[fill=white!10, draw=black, circle, minimum size=3.5mm] (N3) at (-8,0.5){};
	\node[fill=white!10, draw=black, circle, minimum size=3.5mm] (N4) at (-5,1.5){};
	\node[fill=white!10, draw=black, circle, minimum size=3.5mm] (N5) at (-5,-0.5){};
	\node[fill=white!10, draw=black, circle, minimum size=3.5mm] (N6) at (-6,0.5){};
	
	\draw[line width=2.5pt, blue, >=stealth, postaction={decorate}, decoration={markings,mark=at position 0.5 with {\arrow[scale=1.3]{>}}}] (N1)  -- (N2) node[midway, left] {2};
	\draw[line width=2.5pt, blue, >=stealth, postaction={decorate}, decoration={markings,mark=at position 0.5 with {\arrow[scale=1.3]{>}}}] (N3)  -- (N1) node[midway, above] {2};
	\draw[line width=2.5pt, blue, >=stealth, postaction={decorate}, decoration={markings,mark=at position 0.5 with {\arrow[scale=1.3]{>}}}] (N4)  -- (N1) node[midway, above] {2};
	\draw[line width=2.5pt, blue, >=stealth, postaction={decorate}, decoration={markings,mark=at position 0.5 with {\arrow[scale=1.3]{>}}}] (N2)  -- (N3) node[midway, right] {2};
	\draw[line width=2.5pt, blue, >=stealth, postaction={decorate}, decoration={markings,mark=at position 0.5 with {\arrow[scale=1.3]{>}}}] (N2)  -- (N5) node[midway, above] {2};
	\draw[line width=2.5pt, blue, >=stealth, postaction={decorate}, decoration={markings,mark=at position 0.5 with {\arrow[scale=1.3]{>}}}] (N3)  -- (N6) node[midway, above] {2};
	
	\draw[very thick, red, >=stealth, postaction={decorate}, decoration={markings,mark=at position 0.5 with {\arrow[scale=1.3]{>}}}] (N5) -- (N4) node[midway, right] {1};
	\draw[very thick, red, >=stealth, postaction={decorate}, decoration={markings,mark=at position 0.5 with {\arrow[scale=1.3]{>}}}] (N6) -- (N4)  node[midway, above] {1};
	\draw[very thick, red, >=stealth, postaction={decorate}, decoration={markings,mark=at position 0.5 with {\arrow[scale=1.3]{>}}}] (N6) -- (N5)  node[midway, below] {1};
	
	\node[] at (-7,-1) {(a)};
	
	\node[fill=white!10, draw=black, circle, minimum size=3.5mm] (or) at (-2,0){};
	
	\draw[line width=3pt, blue] (or) node at (-2.2,0.6) {2} --(-2, 1);
	\draw[very thick, red] (or) node at (-1.4,-0.1) {1} --(-2+0.866, -0.5);
	\draw[very thick, red] (or) node at (-2.6,-0.1) {1} --(-2-0.866, -0.5);
	
	\node[] at (-2,-1) {(b)};

	\node[fill=white!10, draw=black, circle, minimum size=3.5mm] (and) at (2,0){};
	
	\draw[line width=3pt, blue] (and) node at (1.8,0.6) {2} --(2, 1);
	\draw[line width=3pt, blue] (and) node at (1.4,-0.1) {2} --(2+0.866, -0.5);
	\draw[line width=3pt, blue] (and) node at (2.6,-0.1) {2} --(2-0.866, -0.5);
	
	\node[] at (2,-1) {(c)};
	
\end{tikzpicture}
	\caption{(a) A configuration of an \prb{NCL} machine, (b) an \prb{NCL} \textsc{And} vertex, and (c) an \prb{NCL} \textsc{Or} vertex. 
    Edges of weight $2$ are shown as thick (blue) lines, and edges of weight $1$ as thin (red) lines.
    The NCL machine in (a) is an \textsc{and/or} constraint graph.}
	\label{fig:NCL}
\end{figure}

\subsubsection{Gadgets}\label{subsubsec:NCLgadget}

Suppose that we are given an instance of {\NCL}, that is, an NCL machine and two configurations of the machine.
We will replace each of NCL edges and NCL \textsc{and/or} vertices with its corresponding gadget.
For an NCL edge $e = uv$, the corresponding gadget for $e$ has two specific vertices, namely a \emph{$(u,e)$-connector} and a \emph{$(v,e)$-connector}, as illustrated in \Cref{fig:NCL_connect}.
We sometimes simply refer to them as \emph{connectors}.
The gadget for $e$ shares the $(u,e)$-connector and the $(v,e)$-connector with the corresponding gadgets for $u$ and $v$, respectively.
Thus, each gadget corresponding to a vertex has three connectors.
Note that each pair of gadgets shares at most one connector, and their edges are disjoint.

Our reduction involves constructing three types of gadgets which correspond to NCL edges and NCL \textsc{and/or} vertices, as shown in \Cref{fig:NCLtoISR}.
Below, we explain the construction of each gadget.

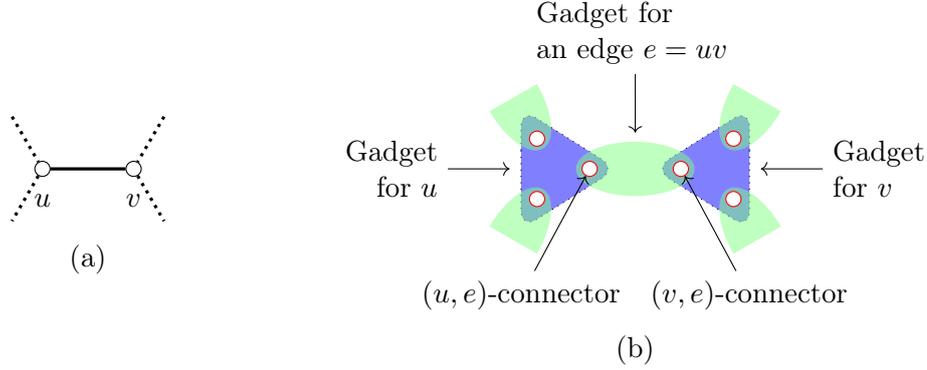
\begin{figure}[t]
	\centering
	\begin{tikzpicture}[scale=1.2]

\node[fill=white!10, draw=black, circle, minimum size=2mm, inner sep=0pt] (u) at (-4,0){};
\node[] at (-4,-0.35) {$u$};
\node[fill=white!10, draw=black, circle, minimum size=2mm, inner sep=0pt] (v) at (-3,0){};
\node[] at (-3,-0.35) {$v$};

\draw[very thick] (u)--(v);
\draw[very thick, dotted] (-4-1/3,1.73/3)--(u)--(-4-1/3,-1.73/3);
\draw[very thick, dotted] (-3+1/3,1.73/3)--(v)--(-3+1/3,-1.73/3);

\node[] at (-3.5, -1) {(a)};


\filldraw[fill=blue!50, dotted, rounded corners=7pt](2-1.73/2.3,0)--(2-1.73/2.3,1/1.5)--(2+0.3,0)--(2-1.73/2.3,-1/1.5)-- cycle;
\filldraw[fill=blue!50, dotted, rounded corners=7pt](3+1.73/2.3,0)--(3+1.73/2.3,1/1.5)--(3-0.3,0)--(3+1.73/2.3,-1/1.5)-- cycle;
\fill[fill=green!50, opacity=0.5] (2.5,0) ellipse [x radius=0.65, y radius=0.3];
\begin{scope}[rotate=-60]
\fill[green!50,opacity=0.5]
  (-0.1,1.1) arc[start angle=-90, end angle=90, x radius=0.65, y radius=0.3] -- cycle;
\end{scope}
\begin{scope}[rotate=60]
\fill[green!50,opacity=0.5]
  (-0.1,-1.7) arc[start angle=-90, end angle=90, x radius=0.65, y radius=0.3] -- cycle;
\end{scope}

\begin{scope}[rotate=60]
\fill[green!50,opacity=0.5]
  (2.6,-2.63) arc[start angle=90, end angle=270, x radius=0.65, y radius=0.3] -- cycle;
\end{scope}
\begin{scope}[rotate=-60]
\fill[green!50,opacity=0.5]
  (2.6,3.23) arc[start angle=90, end angle=270, x radius=0.65, y radius=0.3] -- cycle;
\end{scope}

\node[fill=white!10, draw=red, circle, minimum size=2mm, inner sep=0pt] (u_1) at (2,0){};
\node[fill=white!10, draw=red, circle, minimum size=2mm, inner sep=0pt] (u_2) at (2-1.73/3,1/3){};
\node[fill=white!10, draw=red, circle, minimum size=2mm, inner sep=0pt] (u_3) at (2-1.73/3,-1/3){};

\node[fill=white!10, draw=red, circle, minimum size=2mm, inner sep=0pt] (v_1) at (3,0){};
\node[fill=white!10, draw=red, circle, minimum size=2mm, inner sep=0pt] (v_2) at (3+1.73/3,1/3){};
\node[fill=white!10, draw=red, circle, minimum size=2mm, inner sep=0pt] (v_3) at (3+1.73/3,-1/3){};

\node[text width = 2cm, align=right] (text_u) at (-0.5,0) {Gadget\\for $u$};
\draw[->] (text_u) -- (2-1.73/3-0.3,0); 

\node[] (text_ue_connector) at (1.25,-1.4) {$(u,e)$-connector};
\draw[->] (text_ue_connector) -- (u_1); 

\node[text width = 2cm, align=left] (text_v) at (5.5,0) {Gadget\\for $v$};
\draw[->] (text_v) -- (3+1.73/3+0.3,0); 

\node[] (text_ve_connector) at (3.75,-1.4) {$(v,e)$-connector};
\draw[->] (text_ve_connector) -- (v_1); 

\node[] at (2.5, -2) {(b)};

\node[text width = 2.6cm, align=left] (text_port) at (2.5,1.5) {Gadget for\\ an edge $e=uv$};
\draw[->] (text_port) -- (2.5, 0.4);

\end{tikzpicture}
	\caption{(a) An NCL edge $uv$, and (b) its corresponding gadgets, where the connectors are depicted by (red) circles.}
	\label{fig:NCL_connect}
\end{figure}

\begin{figure}[t]
	\centering
\begin{tikzpicture}[scale=1]

\filldraw[fill=green!50, draw= green, opacity=0.5] (-3.5,0) ellipse [x radius=1.4, y radius=0.8];

\node[fill=white!10, draw=red, circle, minimum size=4mm, inner sep=0pt] (e1) at (-4.5,0){};
\node[fill=black!100, draw=black, circle, minimum size=2.5mm, inner sep=0pt] (te1) at (-4.5,0){};

\node[fill=white!10, draw=black, circle, minimum size=4mm, inner sep=0pt] (e2) at (-3.9,0.4){};

\node[fill=white!10, draw=black, circle, minimum size=4mm, inner sep=0pt] (e3) at (-3.1,0.4){};
\node[fill=black!100, draw=black, circle, minimum size=2.5mm, inner sep=0pt] (te1) at (-3.1,0.4){};

\node[fill=white!10, draw=red, circle, minimum size=4mm, inner sep=0pt] (e4) at (-2.5,0){};

\draw[very thick] (e1) to [out=-45, in=225] (e4);
\draw[very thick] (e1) -- (e2);
\draw[ultra thick, draw=red] (e2) -- (e3);
\draw[very thick] (e3) -- (e4);

\node[] (text_u) at (-4.5,-1.25) {$(u,e)$-connector};
\draw[->] (text_u) -- (e1); 

\node[] (text_u) at (-2.5,1.25) {$(v,e)$-connector};
\draw[->] (text_u) -- (e4);

\node[] at (-3.5,-2.5) {(a)};

\filldraw[dashed, fill =blue!30, thick, rounded corners=15pt] (0,-1)--(1.5,1.5)--(3,-1)--cycle;

\node[fill=white!10, draw=red, circle, minimum size=4mm, inner sep=0pt] (v2) at (1.5,0.75){};

\node[fill=white!10, draw=black, circle, minimum size=4mm, inner sep=0pt] (v3) at (1.5,0.2){};

\node[fill=white!10, draw=black, circle, minimum size=4mm, inner sep=0pt] (v4) at (1.5,-0.35){};
\node[fill=black!100, draw=black, circle, minimum size=2.5mm, inner sep=0pt] (tv4) at (1.5,-0.35){};

\node[fill=white!10, draw=red, circle, minimum size=4mm, inner sep=0pt] (v5) at (0.6,-0.65){};

\node[fill=white!10, draw=red, circle, minimum size=4mm, inner sep=0pt] (v6) at (2.4,-0.65){};

\draw[very thick] (v2)--(v3);
\draw[very thick] (v3)--(v4);
\draw[very thick] (v4)--(v5);
\draw[very thick] (v4)--(v6);

\node[] (text_u) at (1.75,1.75) {$(u,e)$-connector};
\draw[->] (text_u) -- (v2); 

\node[] (text_u) at (0.5,-1.5) {$(u,e_1)$-connector};
\draw[->] (text_u) -- (v5); 

\node[] (text_u) at (2.5,-2) {$(u,e_2)$-connector};
\draw[->] (text_u) -- (v6); 

\node[] at (1.5,-2.5) {(b)};

\filldraw[dashed, fill =blue!30, thick, rounded corners=15pt] (4.5,-1)--(6,1.5)--(7.5,-1)--cycle;

\node[fill=white!10, draw=red, circle, minimum size=4mm, inner sep=0pt] (u2) at (6,0.75){};

\node[fill=white!10, draw=black, circle, minimum size=4mm, inner sep=0pt] (u3) at (6,0.2){};
\node[fill=black!100, draw=black, circle, minimum size=2.5mm, inner sep=0pt] (tu3) at (6,0.2){};

\node[fill=white!10, draw=black, circle, minimum size=4mm, inner sep=0pt] (u4) at (5.6,-0.4){};

\node[fill=white!10, draw=black, circle, minimum size=4mm, inner sep=0pt] (u5) at (6.4,-0.4){};

\node[fill=white!10, draw=red, circle, minimum size=4mm, inner sep=0pt] (u6) at (5.1,-0.65){};

\node[fill=white!10, draw=red, circle, minimum size=4mm, inner sep=0pt] (u7) at (6.9,-0.65){};

\draw[very thick] (u2)--(u3);
\draw[very thick] (u3)--(u4);
\draw[very thick] (u3)--(u5);
\draw[very thick] (u4)--(u5);
\draw[very thick] (u4)--(u6);
\draw[very thick] (u5)--(u7);

\node[] (text_u) at (1.75+4.5,1.75) {$(v,e)$-connector};
\draw[->] (text_u) -- (u2); 

\node[] (text_u) at (0.5+4.5,-1.5) {$(v,e_a)$-connector};
\draw[->] (text_u) -- (u6); 

\node[] (text_u) at (2.5+4.5,-2) {$(v,e_b)$-connector};
\draw[->] (text_u) -- (u7); 

\node[] at (6,-2.5) {(c)};

\node[text width = 1.25cm, align=center] (text_internaltoken) at (3.75,0.75) {internal\\ token};
\draw[->] (text_internaltoken) -- (tv4); 
\draw[->] (text_internaltoken) -- (tu3); 
\end{tikzpicture}
	\caption{(a) An \prb{NCL} edge gadget when $k=2$. For $k\geq 3$, the red edge is replaced by a path with $2(k-2)$ vertices, and additional $k-2$ tokens are placed. (b) An \prb{NCL} \gad{and} gadget and (c) an \prb{NCL} \gad{or} gadget. The connectors are depicted as red circles.}
	\label{fig:NCLtoISR}
\end{figure}
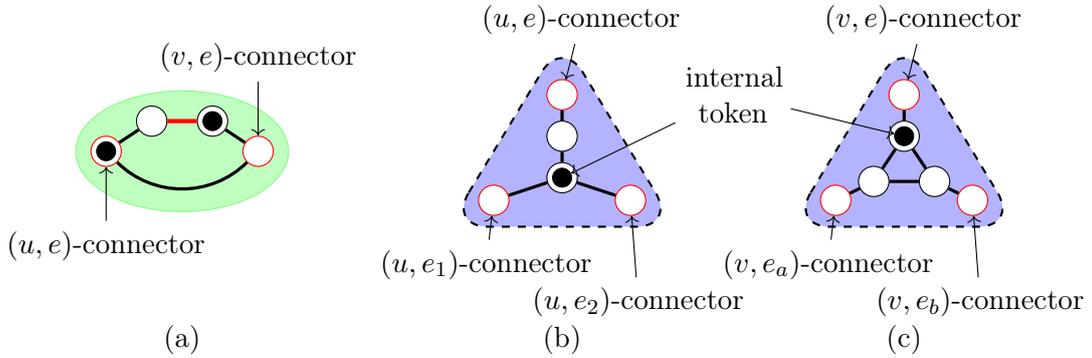

\vspace{5pt}
\noindent \textbf{NCL edge Gadget.}
Our NCL edge gadget consists of a cycle with $2k$ vertices and $k$ tokens.  
The \prb{NCL} edge gadget has two connectors: the $(u,e)$-connector and the $(v,e)$-connector, where $e=uv$ is a corresponding \prb{NCL} edge.
Note that the $(u,e)$-connector and the $(v,e)$-connector are consecutive in the cycle.
\Cref{fig:NCLtoISR}(a) illustrates the \prb{NCL} edge gadget for the case $k = 2$.  
For $k \geq 3$, the red edge is replaced by a path with $2(k - 2)$ vertices, and an additional $k - 2$ tokens are placed accordingly.

 \vspace{5pt}
\noindent \textbf{NCL} \textsc{and} \textbf{Gadget.}
\Cref{fig:NCLtoISR}(b) illustrates our \prb{NCL} \gad{and} gadget for an \prb{NCL} \gad{and} vertex \( u \).  
The gadget has three connectors: the \( (u, e) \)-connector, the \( (u, e_1) \)-connector, and the \( (u, e_2) \)-connector, where \( e_1 \) and \( e_2 \) are weight-1 \prb{NCL} edges, and \( e \) is a weight-2 \prb{NCL} edge.
Furthermore, the gadget has two vertices that are adjacent to each other and distinct from the connectors, referred to as \emph{internal vertices}: one is connected to the $(u,e)$-connector, and the other is connected to both the \( (u, e_1) \)- and \( (u, e_2) \)-connectors.
The \prb{NCL} \gad{and} gadget contains a single token, called an \emph{internal token}, excluding the tokens provided by the \prb{NCL} edge gadgets.

\vspace{5pt}
\noindent \textbf{NCL} \textsc{or} \textbf{Gadget.}
\Cref{fig:NCLtoISR}(c) illustrates our \prb{NCL} \gad{or} gadget for an \prb{NCL} \gad{or} vertex \( v \).  
The gadget has three connectors: the \( (v, e) \)-connector, the \( (v, e_a) \)-connector, and the \( (v, e_b) \)-connector, where all edges \( e \), \( e_a \), and \( e_b \) are weight-2 \prb{NCL} edges.
Furthermore, the gadget has three mutually adjacent vertices that are distinct from the connectors, referred to as \emph{internal vertices}, each connected to a different connector.
The \prb{NCL} \gad{or} gadget also contains an internal token, excluding the tokens provided by the \prb{NCL} edge gadgets.

\subsubsection{Reduction}\label{subsubsec:NCLtoISR}
Let $(M,C_s,C_t)$ be an instance of {\NCL}, where $M$ is an NCL machine, and $C_s$ and $C_t$ are two configurations of $M$.
We will construct an instance $(G,I,J,\Rule\in\{\kTJ,\kTS\})$ of \prb{ISR}.
As illustrated in \Cref{fig:NCL_connect}, we replace each \prb{NCL} edge and each \prb{NCL} \textsc{and/or} vertex with its corresponding gadget.  
The graph \( G \) is constructed from the disjoint collection of gadgets by identifying the \( (u, e) \)-connector in the edge gadget for \( e \) with the \( (u, e) \)-connector in the \prb{NCL} \textsc{and/or} gadget for \( u \).

Note that each gadget has maximum degree \( 3 \).  
In particular, the connectors in the \prb{NCL} edge gadgets have degree \( 2 \), while those in the \prb{NCL} \gad{and/or} gadgets have degree \( 1 \).
Thus, the constructed graph $G$ is a graph of maximum degree $3$.
Moreover, since both the original NCL machine \( M \) and all the gadgets are planar, the resulting graph \( G \) is also planar.  
Similarly, because the bandwidth of \( M \) is bounded by some constant and each gadget has a bounded size, the bandwidth of \( G \) is also bounded by some constant.
Therefore, $G$ is a planar graph of maximum degree $3$ and the bandwidth of $G$ is bounded.

Next, we construct two independent sets \( I \) and \( J \) of \( G \), corresponding to two given \prb{NCL} configurations \( C_s \) and \( C_t \) of the \prb{NCL} machine, respectively.  
In each of \( I \) and \( J \), every \prb{NCL} edge gadget contains exactly \( k \) tokens.  
For each \prb{NCL} edge \( e = uv \), if it is directed toward \( u \), we place the tokens starting from the \((e, v)\)-connector and continuing clockwise, skipping every other position; otherwise, we start from the \((e, u)\)-connector and proceed in the same manner.

For each \prb{NCL} \gad{and/or} gadget, we place an internal token on an internal vertex.
We claim that this token can be placed for the following reasons.
Consider an \prb{NCL} \gad{and} gadget corresponding to an \prb{NCL} \gad{and} vertex $u$, which has one incident weight-2 edge $e$ and two incident weight-1 edges $e_1$ and $e_2$. 
For the sake of contradiction, suppose that we are unable to place the internal token. 
This implies that the $(u,e)$-connector and either the $(u,e_1)$-connector or the $(u,e_2)$-connector are already occupied by tokens placed according to the assignment of tokens to the \prb{NCL} edge gadgets.
However, this situation would mean that the weight-2 edge $e$ and at least one of the weight-1 edges $e_1$ or $e_2$ are directed outward from $u$, contradicting the assumption that $C_s$ (and $C_t$) is a valid configuration of the \prb{NCL} machine $M$.

Similarly, consider an \prb{NCL} \gad{or} gadget corresponding to an \prb{NCL} \gad{or} vertex $v$ with three incident weight-2 edges $e$, $e_a$, and $e_b$. If we cannot place the token, then the $(v,e)$-, $(v,e_a)$-, and $(v,e_b)$-connectors must be occupied by tokens.
This would imply that all three weight-2 edges are directed outward from $v$, contradicting the assumption that $C_s$ (and $C_t$) is a valid configuration of the \prb{NCL} machine $M$.

Therefore, we can place an additional token on an internal vertex in each \prb{NCL} \gad{and/or} gadget. 
Although there may be multiple options, we arbitrarily choose one of them.

This completes the construction of our corresponding instance of \prb{ISR} under $\kTJ$ and $\kTS$. 
The construction can be done in polynomial time.

\subsubsection{Correctness}\label{subsubsec:CorrectnessOfNCLtoISR}
Before proving the correctness of our reduction, we first provide an overview.
Specifically, we explain the correspondence between the orientations of an \prb{NCL} machine and the independent sets of the corresponding graph, as well as the behavior of each gadget.

We interpret the direction of an NCL edge \( e = uv \) as \emph{inward} to \( u \) if the \((u,e)\)-connector is \emph{not} occupied by a token. 
Conversely, we interpret the direction of \( e = uv \) as \emph{outward} from \( u \) if the \((u,e)\)-connector \emph{is} occupied by a token.

Since the \((u,e)\)-connector and \((v,e)\)-connector are adjacent, they cannot be occupied by tokens simultaneously. 
Furthermore, each \prb{NCL} edge gadget contains exactly \( k \) tokens, and each \prb{NCL} \gad{and/or} gadget contains exactly one internal token placed on an internal vertex. 
Thus, there are $n+km$ tokens in total, which is equal to the size of the maximum independent set of $G$, where $n$ denotes the number of vertices and $m$ denotes the number of edges of $M$.
Since each gadget has no space to accommodate additional tokens, every configuration of $n+km$ tokens must preserve this structure: Each \prb{NCL} edge gadget contains \( k \) tokens, and each \prb{NCL} \gad{and/or} gadget contains one internal token on an internal vertex.
Therefore, for every \prb{NCL} edge gadget for \( e = uv \), exactly one of the two connectors \((u,e)\)-connector or \((v,e)\)-connector is occupied by a token at any time.
For this reason, we can construct a configuration of $M$ from independent sets of $G$ with size $n+km$.

Moreover, we have to move $k$ tokens simultaneously to move the tokens in each \prb{NCL} edge gadget.
This corresponds to the reversal of the direction of the corresponding \prb{NCL} edge.

Next, we explain briefly the behavior of each \prb{NCL} \gad{and/or} gadget.

For an \prb{NCL} \gad{and} vertex \( u \), the weight-2 edge $e$ can be directed outward from \( u \) if and only if both weight-1 edges $e_1$ and $e_2$ are directed inward toward \( u \).
Our \prb{NCL} \gad{and} gadget faithfully simulates this behavior as follows:  
The \((u, e)\)-connector can be occupied by a token if and only if the internal token can occupy the degree-3 internal vertex.  
In this case, no token occupies either the \((u, e_1)\)-connector or the \((u, e_2)\)-connector.  
This configuration precisely corresponds to the case where the edge gadgets for \( e_1 \) and \( e_2 \) are directed inward toward \( u \).

For an \prb{NCL} \gad{or} vertex $v$, the only forbidden orientation is when all three incident \prb{NCL} edges \( e \), \( e_a \), and \( e_b \) are simultaneously directed outward from \( v \).  
This corresponds to the case where tokens from the edge gadgets are placed on all three connectors.
Our \prb{NCL} \gad{or} gadget prevents such a configuration, since otherwise the internal token could not be placed while maintaining an independent set.

Lastly, to complete our polynomial-time reduction, we give the following \Cref{lem:NCL_ISR_correctness}.

\begin{lemma}\label{lem:NCL_ISR_correctness}
    $(M,C_s,C_t)$ is a yes-instance of {\NCL} if and only if $(G,I,J,\Rule\in\{\kTJ,\kTS\})$ is a yes-instance of \prb{ISR}.
\end{lemma}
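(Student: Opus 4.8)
The plan is to prove the two directions separately, exploiting that $\kTS$-adjacency implies $\kTJ$-adjacency: it suffices to show that a yes-instance of $\NCL$ yields a yes-instance of \prb{ISR} under $\kTS$, and that a yes-instance of \prb{ISR} under $\kTJ$ yields a yes-instance of $\NCL$. Since a $\kTS$-reconfiguration sequence is in particular a $\kTJ$-reconfiguration sequence, these two implications together give the claimed equivalence for both $\Rule\in\{\kTJ,\kTS\}$. Throughout I will use the structural facts already established above: $G$ has an independent set of size $n+km$ (with $n=|V(M)|$ and $m=|E(M)|$), and every independent set $I'$ of $G$ of this size places $k$ tokens on each \prb{NCL} edge gadget, forming one of the two alternating maximum independent sets of its $2k$-cycle so that exactly one of the gadget's two connectors is occupied, and places one internal token on an internal vertex of each \gad{and/or} gadget. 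Hence every size-$(n+km)$ independent set $I'$ \emph{realizes} a well-defined orientation $D(I')$ of $M$, obtained by orienting each edge $e=uv$ toward the endpoint whose connector is unoccupied in $I'$; as argued above, $D(I')$ satisfies the \gad{and} constraint at every \gad{and} vertex and the \gad{or} constraint at every \gad{or} vertex, so it is a genuine NCL configuration, and by construction $D(I)=C_s$ and $D(J)=C_t$.

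For the implication from \prb{ISR} to $\NCL$, take a $\kTJ$-reconfiguration sequence $\langle I=I_0,\dots,I_\ell=J\rangle$ and put $D_i=D(I_i)$, so that $D_0=C_s$ and $D_\ell=C_t$. In a single step $I_i\to I_{i+1}$ at most $2k$ vertices change. If the occupied connector of some edge gadget changes, then the restrictions of $I_i$ and $I_{i+1}$ to that gadget are the two distinct alternating maximum independent sets of its $2k$-cycle, which already differ in all $2k$ of its vertices and therefore exhaust the entire budget; consequently at most one edge gadget can change per step, and nothing else (in particular no internal token) changes. Thus $D_i$ and $D_{i+1}$ differ in the orientation of at most one NCL edge, so deleting repetitions from $\langle D_0,\dots,D_\ell\rangle$ yields a valid sequence of NCL moves from $C_s$ to $C_t$, witnessing that $(M,C_s,C_t)$ is a yes-instance.

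For the implication from $\NCL$ to \prb{ISR}, let $\langle C_s=D_0,\dots,D_\ell=C_t\rangle$ be a sequence of NCL moves. Two kinds of elementary moves in $G$ are available and valid under $\kTS$ (hence under $\kTJ$): (i) sliding the internal token of a gadget onto an adjacent internal vertex (the internal vertices of each gadget are pairwise adjacent, and a single-token move along an edge is $\kTS$-valid); and (ii) simultaneously sliding all $k$ tokens of one edge gadget by one step around its $2k$-cycle, which is a $\kTS$ move since the $k$ moved tokens can be matched to their images along cycle edges, and which toggles the occupied connector of that gadget, i.e.\ reverses the corresponding NCL edge. Starting from $I_0=I$, which realizes $C_s$, we process the $D_i$ in order: given any size-$(n+km)$ independent set realizing $D_i$, if $D_{i+1}$ is obtained by reversing an edge $e=uv$, we first apply moves of type (i) at the gadgets of $u$ and of $v$ to vacate, if necessary, the internal vertex adjacent to the connector that is about to become occupied, and then apply move (ii) to the edge gadget of $e$; a short case analysis over \gad{and}/\gad{or} gadgets and over which edge of $u$ equals $e$, using only that $D_{i+1}$ is a valid NCL configuration, shows that the relevant internal vertex can always be vacated (for instance, if $e$ becomes outward at an \gad{and} vertex then both its weight-$1$ connectors are already free, and if a weight-$1$ edge of an \gad{and} vertex becomes outward then its weight-$2$ connector is free; the \gad{or} case is analogous). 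The resulting set is a size-$(n+km)$ independent set realizing $D_{i+1}$. After processing all $\ell$ steps we reach some independent set realizing $C_t$; since any two size-$(n+km)$ independent sets realizing the same NCL configuration differ only in the internal-token positions inside \gad{and/or} gadgets, and within each such gadget the admissible internal positions induce a connected configuration graph (the internal vertices form a clique and distinct gadgets are independent), we append moves of type (i) to finish exactly at $J$.

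The main obstacle is the case analysis in the third paragraph: one must check, for every gadget type and every choice of which incident edge is reversed, that validity of the new NCL configuration leaves the internal token room to step aside via a $\kTS$-valid move. Everything else—the rigidity of size-$(n+km)$ independent sets, the correspondence between edge reversals and $k$-token slides, and the connectedness of realizations of a fixed configuration—is routine bookkeeping.
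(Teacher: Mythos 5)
Your proof is correct and takes essentially the same approach as the paper: it exploits the rigidity of independent sets of the maximum size $n+km$ (each edge gadget holds an alternating $k$-token placement with exactly one connector occupied, each \gad{and/or} gadget one internal token) to read off a valid NCL configuration, simulates an edge reversal by a $k$-token rotation of the $2k$-cycle preceded by single-token internal slides, and notes that flipping one edge gadget exhausts the $2k$ budget so each $\kTJ$ step changes at most one NCL edge. Your organization (proving NCL$\Rightarrow$\prb{ISR} under $\kTS$ and \prb{ISR} under $\kTJ$$\Rightarrow$NCL, then using that $\kTS$-moves are $\kTJ$-moves) and your explicit case analysis for vacating the internal vertex are somewhat more careful than the paper's presentation, but the underlying argument is the same.
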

\begin{proof}
    We first focus on $\kTS$.
    Later, we will show that in the constructed graph, any token movement under $\kTJ$ can be interpreted as one under $\kTS$. Hence, our reduction also applies to $\kTJ$.
    
    We first prove the only if direction.
    Suppose that $(M, C_s, C_t)$ is a yes-instance of \prb{NCL}.
    Then, there exists a sequence of \prb{NCL} configurations from $C_s$ to $C_t$.

    Consider any two consecutive configurations \( C_{i-1} \) and \( C_i \) in this sequence.  
    Since \( C_{i-1} \) and \( C_i \) are adjacent, exactly one edge \( uv \) changes its direction between \( C_{i-1} \) and \( C_i \).  
    Let \( I_{i-1} \) and \( I_i \) be independent sets of \( G \) constructed from \( C_{i-1} \) and \( C_i \), respectively, in the same manner as \( I \) and \( J \). Recall that every \prb{NCL} edge gadget contains exactly \( k \) tokens.  
    For each \prb{NCL} edge \( e = uv \), if it is directed toward \( u \), we place the tokens starting from the \((e, v)\)-connector and continuing clockwise, skipping every other position; otherwise, we start from the \((e, u)\)-connector and proceed in the same manner.
    For each \prb{NCL} \gad{and/or} gadget, we place an internal token on an internal vertex.

    By construction, all \prb{NCL} edge gadgets, except the one corresponding to \( uv \), have identical token configurations in both \( I_{i-1} \) and \( I_i \), since only the direction of \( uv \) changes.  
    Thus, we can obtain an independent set \( I'_{i-1} \) from \( I_{i-1} \) such that \( I'_{i-1} \symdif I_i \) contains exactly the vertices of the \prb{NCL} edge gadget for \( uv \), by moving an internal token in each \prb{NCL} \gad{and/or} gadget.  
    This movement of an internal token does not affect any other gadgets.

    Since the edge gadget forms a cycle and $|I'_{i-1} \symdif I_i| \leq 2k$, it follows that $I'_{i-1}$ and $I_i$ are adjacent under $\kTS$ by moving the tokens in $I'_{i-1}$ on the \prb{NCL} edge gadget to their adjacent positions in the clockwise direction.
    Hence, $I_{i-1}$ and $I_i$ are reconfigurable.
    This shows that any single edge reversal in an \prb{NCL} configuration can be simulated by a reconfiguration sequence of independent sets.
    By repeating this step for the entire sequence from $C_s$ to $C_t$, we obtain a reconfiguration sequence from $I$ to $J$.
    Therefore, the instance $(G,I,J,\kTS)$ is a yes-instance of \prb{ISR}.

    Conversely, suppose that \( (G, I, J, \kTS) \) is a yes-instance of \prb{ISR}.  
    Then, there exists a reconfiguration sequence \( \sigma = \langle I = I_0, I_1, \ldots, I_\ell = J \rangle \) between \( I \) and \( J \) under the \(\kTS\) rule.  
    As discussed earlier, in every independent set in the sequence, each \prb{NCL} edge gadget contains exactly \( k \) tokens, and exactly one of its connectors is occupied by a token.  
    By our construction, each independent set \( I_i \) in \( \sigma \) corresponds to a valid configuration \( C_i \) of the \prb{NCL} machine.  
    Consider the sequence \( C_0, C_1, \ldots, C_\ell \) where each \( C_i \) corresponds to the independent set \( I_i \).  
    Since any two consecutive independent sets \( I_{i-1} \) and \( I_i \) are adjacent under the \(\kTS\) rule, and all \prb{NCL} edge gadgets maintain exactly \( k \) tokens, at most one \prb{NCL} edge gadget differs in the token placements corresponding to them.  
    Thus, any two consecutive configurations \( C_{i-1} \) and \( C_i \) are either identical or differ in the direction of exactly one edge \( uv \).  
    By removing consecutive duplicates if necessary, we obtain a desired sequence of configurations of $M$ between \( C_s \) and \( C_t \) such that any two consecutive configurations are adjacent.  
    Therefore, \( (M, C_s, C_t) \) is a yes-instance of \prb{NCL}.

    We conclude the proof by showing that our reduction also applies to $\kTJ$.  
    Let \( I \) and \( I' \) be two independent sets that are adjacent under $\kTJ$.  
    Since each \prb{NCL} edge gadget always contains exactly \( k \) tokens, token movement within an edge gadget under $\kTJ$ coincides with that under $\kTS$.  
    Moreover, in both \( I \) and \( I' \), each \prb{NCL} \gad{and/or} gadget has exactly one token on the internal vertices.  
    Thus, any internal token movement can be regarded as occurring within a single \prb{NCL} \gad{and/or} gadget under $\kTS$.  

    Therefore, our reduction also applies to $\kTJ$.
    Our proof is completed.
\end{proof}

\subsection{Line Graphs and Claw-free Graphs}
In this section, we show that \prb{ISR} under $\Rule\in \{2\text{-}\TJ, 2\text{-}\TS\}$ is $\PSPACE$-complete even for line graphs, which contrasts that \prb{ISR} under $\Rule\in\{\TJ,\TS\}$ can be solved in polynomial time for line graphs~(more generally, claw-free graphs)~\cite{BonsmaKW14}.
For a graph \( G \), its \emph{line graph} \( L \) is defined as follows: each vertex of \( L \) corresponds to an edge of \( G \), and two vertices in \( L \) are adjacent if and only if their corresponding edges in \( G \) share a common endpoint.

\begin{theorem}\label{PSPACE-comp-claw-free}
    \prb{ISR} under $\Rule\in \{2\text{-}\TJ, 2\text{-}\TS\}$ is $\PSPACE$-complete for line graphs.
\end{theorem}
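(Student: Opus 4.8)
The plan is to prove membership in $\PSPACE$ by the general remark in \Cref{sec:Preliminaries}, and $\PSPACE$-hardness by a polynomial-time reduction from \NCL, reusing the overall gadget framework of the proof of \Cref{PSPACE-comp-planar} but redesigning the \gad{and} and \gad{or} gadgets so that the graph produced is a line graph (the edge gadget, as we explain below, already is one). Equivalently, I would describe the target instance as $(L(H), I, J, \Rule)$ for an explicitly constructed graph $H$: then independent sets of $L(H)$ of the relevant size are matchings of $H$, a token on a vertex of $L(H)$ is a matching edge of $H$, and a $2$-$\TS$ step on $L(H)$ amounts to rotating at most two matching edges of $H$, each to an incident edge, while keeping the set a matching (the perfect-matching side condition of $2$-$\TS$ is automatic, since each slid edge stays incident to its old position). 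When $k=2$ the edge gadget of \Cref{PSPACE-comp-planar} specializes to the $4$-cycle $C_4$, which is a line graph ($C_4 = L(C_4)$), and under $2$-$\TS$ its two tokens can be rotated from one perfect matching (of $C_4$) to the other in a single step; this faithfully simulates reversing the corresponding NCL edge, exactly as before.

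The heart of the proof is the construction of \gad{and} and \gad{or} gadgets that simultaneously (i) faithfully simulate the respective NCL vertices under the stronger $2$-$\TS$ rule, and (ii) glue to the edge gadgets at the shared connectors so that the whole graph is a line graph. For (ii) I would use Krausz's characterization: a graph is a line graph iff its edges admit a partition into cliques with every vertex lying in at most two of them. The obstacle is that each connector of a $C_4$ edge gadget already lies in two cliques, so the attached \gad{and/or} gadget cannot route a third clique through that connector. Consequently the degree-$3$ internal vertex of the \gad{and} gadget and the internal triangle of the \gad{or} gadget used in the proof of \Cref{PSPACE-comp-planar} must be reshaped — for instance by letting each connector have degree one toward the gadget and absorbing the remaining internal adjacencies into triangles, or by replacing a degree-$3$ internal vertex by a short ``clique-path'' — all while preserving the token logic (an \gad{and} gadget must allow both weight-$1$ connectors to be occupied together yet force the weight-$2$ connector free unless both weight-$1$ connectors are; an \gad{or} gadget must forbid only the configuration with all three connectors occupied). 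After each modification one verifies claw-freeness and, more precisely, exhibits the Krausz clique partition — equivalently, names the preimage graph $H$ gadget by gadget.

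With the gadgets in hand, the independent sets $I$ and $J$ are built from the two given NCL configurations $C_s$ and $C_t$ exactly as in \Cref{subsubsec:NCLtoISR}: two tokens per edge gadget encoding the edge orientation, and one internal token per \gad{and/or} gadget, which can always be placed precisely because $C_s$ and $C_t$ are valid NCL configurations. Correctness is then the analogue of \Cref{lem:NCL_ISR_correctness}: one NCL edge reversal is simulated by first repositioning (if necessary) the internal tokens of the two incident \gad{and/or} gadgets — a move that stays within a single gadget and disturbs nothing else — and then rotating the corresponding $C_4$; conversely, since every $2$-$\TS$ (or $2$-$\TJ$) step is token-preserving and each edge gadget must remain ``full'', any reconfiguration sequence projects onto a sequence of NCL configurations differing by single-edge reversals. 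Finally, as in \Cref{lem:NCL_ISR_correctness}, any $2$-$\TJ$ move on the constructed graph is realizable as a $2$-$\TS$ move (moves inside a $C_4$ carrying two tokens coincide for $\TJ$ and $\TS$, and each internal-token move stays inside one \gad{and/or} gadget), so the reduction covers both $2$-$\TJ$ and $2$-$\TS$.

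I expect the main difficulty to be two intertwined points: (a) because $2$-$\TS$ permits two simultaneous slides, one must rule out that some joint move ``cheats'' a gadget's constraint — in particular that an \gad{or} gadget still forbids its three connectors being occupied at once even under moves that also displace its internal token; and (b) satisfying the line-graph (Krausz) constraint at the degree-elevated connectors without weakening (a). Producing gadgets that meet both requirements at once is where the real work lies; the rest of the argument mirrors \Cref{subsubsec:CorrectnessOfNCLtoISR}. (Note that this places \prb{ISR} under $2$-$\TS$/$2$-$\TJ$ in sharp contrast with the polynomial-time solvability of \prb{ISR} under $\TS$ and $\TJ$ on line graphs, and more generally claw-free graphs.)
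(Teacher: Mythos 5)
There is a genuine gap: your argument hinges on the existence of \gad{and}/\gad{or} gadgets that are simultaneously (i) correct simulators of the NCL vertex constraints under two simultaneous slides and (ii) attachable to the $C_4$ edge gadgets so that the whole graph remains a line graph, but you never exhibit such gadgets. You yourself identify this as ``where the real work lies,'' and the difficulty is real: each connector of a $C_4$ edge gadget already lies in two cliques of any Krausz partition, so the old gadgets of \Cref{PSPACE-comp-planar} (a degree-$3$ internal vertex attached to two connectors, an internal triangle attached to three connectors) cannot be reused, and the sketched repairs (``absorbing adjacencies into triangles,'' ``a short clique-path'') are not shown to preserve the token logic, in particular the requirement that an \gad{or} gadget still forbid all three connectors being occupied even under joint moves that also displace its internal token. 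Without concrete gadgets and a verification of both the Krausz condition and the simulation lemma analogous to \Cref{lem:NCL_ISR_correctness}, this is a plan rather than a proof; it is not evident that gadgets meeting both constraints exist at all.

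The paper avoids gadget design entirely by reducing from \prb{Perfect Matching Reconfiguration}, which is $\PSPACE$-complete on bipartite graphs of maximum degree $5$ and bounded bandwidth~\cite{DBLP:conf/mfcs/BonamyBHIKMMW19}. Given such an instance $(G,M_s,M_t)$, one simply takes the line graph $L$ of $G$: independent sets of $L$ of size $|M_s|$ are exactly the perfect matchings of $G$, the symmetric difference of two distinct perfect matchings of equal size is a disjoint union of even cycles and hence has size exactly $4$ when at most $4$, and such a $4$-cycle of $G$ becomes a $4$-cycle of $L$, so adjacency under $2$-$\TJ$ (and $2$-$\TS$) coincides with the flip operation. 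Correctness in both directions is then nearly immediate, with $\PSPACE$-membership from the remark in \Cref{sec:Preliminaries}. If you wish to salvage your NCL-based route, you would effectively be re-proving a hardness result of this matching-reconfiguration flavor inside the preimage graph $H$, and you would need to supply and verify the gadgets explicitly.
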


In the proof of \Cref{PSPACE-comp-claw-free}, we reduce \prb{Perfect Matching Reconfiguration}, which is known to be $\PSPACE$-complete on bipartite graphs of maximum degree $5$ and bounded bandwidth~\cite{DBLP:conf/mfcs/BonamyBHIKMMW19}, to our problem.

Let $G=(V, E)$ be a graph. 
An edge set $M\subseteq E$ is called a \emph{matching} of $G$ if any two edges in $M$ do not share their endpoint.
A matching is \emph{perfect} if it covers all vertices of $G$.
For any two perfect matchings $M$ and $M'$ of a graph, the symmetric difference of them consists of even-length disjoint cycles.
Thus, if $|M\bigtriangleup M'|=4$, then $M\bigtriangleup M'$ induces a cycle of length 4.
A \emph{flip} operation is defined as a transformation of $M$ into $M'$ such that $|M\bigtriangleup M'|=4$.
We say that $M$ and $M'$ are \emph{adjacent} under the flip operation.

In the \prb{Perfect Matching Reconfiguration} problem, given two perfect matchings \(M_s\) and \(M_t\) of a graph \(G\), the question is whether there exists a sequence \(M_0, \ldots, M_\ell\) of perfect matchings of \(G\) such that \(M_0 = M_s\), \(M_\ell = M_t\), and each consecutive pair \(M_i, M_{i+1}\) is adjacent under the flip operation.

Now, we show \Cref{PSPACE-comp-claw-free}.
\begin{proof}[Proof of \Cref{PSPACE-comp-claw-free}]
    To prove the $\PSPACE$-hardness, we use a polynomial-time reduction from \prb{Perfect Matching Reconfiguration}.
    Let $(G=(V,E), M_s, M_t)$ be an instance of \prb{Perfect Matching Reconfiguration}.
    Let $L=(E,E')$ be the line graph of $G$, where $E'=\{ef \colon e,f\in E \land e\cap f\neq \emptyset\}$.
    Due to the correspondence between matchings in a graph and independent sets in its line graph, $M_s$ and $M_t$ are independent sets of $L$.
    Then, let $(L,M_s,M_t)$ be an instance of \prb{ISR} under $\Rule\in \{2\text{-}\TJ, 2\text{-}\TS\}$.

    To complete our reduction, we will show that $(G, M_s, M_t)$ is a yes-instance of \prb{Perfect Matching Reconfiguration} if and only if $(L,M_s,M_t)$ is a yes-instance of \prb{ISR} under $\Rule$.

    Firstly, suppose that there is a sequence \(M_0, M_1,\ldots, M_\ell\) of perfect matchings of \(G\) such that \(M_0 = M_s\), \(M_\ell = M_t\), and each consecutive pair \(M_i, M_{i+1}\) is adjacent under the flip operation.
    Then, the sequence is also a sequence of independent sets of $L$. 
    For two consecutive perfect matchings $M$ and $M'$, since $M\bigtriangleup M'$ induces the cycle of length 4 of $G$ and the line graph of a cycle is also a cycle with the same length, two independent sets $M$ and $M'$ of $L$ are adjacent under $\Rule$.
    Thus, \(M_0, M_1,\ldots, M_\ell\) is a reconfiguration sequence of independent sets of $L$ between $M_s$ and $M_t$ under $\Rule$. Therefore, $(L,M_s,M_t)$ is a yes-instance of \prb{ISR} under $\Rule$.

    Conversely, suppose that there is a reconfiguration sequence $\sigma=\langle M_s=M_0, M_1, \ldots, M_\ell=M_t \rangle$ of independent sets of $L$ under $\Rule$. 
    Then, \(M_0, M_1,\ldots, M_\ell\) is also a sequence of matchings of $G$.
    All matchings in $\sigma$ of $G$ have the same size with perfect matchings $M_s$ and $M_t$.
    Thus, all matchings in \(M_0, M_1,\ldots, M_\ell\) of $G$ are perfect.
    Since two consecutive independent sets $M$ and $M'$ are adjacent under $\Rule$, we have $|M\bigtriangleup M'|\leq 4$.
    In addition, since $M$ and $M'$ are distinct perfect matchings of $G$, 
    $M$ and $M'$ differ by at least two elements, thus we have $|M\bigtriangleup M'|=4$.
    Consequently, \( M \bigtriangleup M' \) induces a 4-cycle, and thus the two matchings \( M \) and \( M' \) are adjacent under the flip operation.  
    Therefore, the sequence \( M_0, M_1, \ldots, M_\ell \) is a sequence between \( M_s \) and \( M_t \) such that any consecutive pair \( (M_i, M_{i+1}) \) is adjacent under the flip operation.  
    Hence, \( (G, M_s, M_t) \) is a yes-instance of \prb{Perfect Matching Reconfiguration}.
    
    This completes our proof.
\end{proof}

\section{PSPACE-completeness When $k$ is Superconstant}\label{sec:large_k}
This section is devoted to establishing the $\PSPACE$-completeness of \prb{ISR} and \prb{VCR} under $\kTJ$ when $k$ is superconstant in the initial independent set size $|I|$ and the initial vertex cover size $|S|$, respectively.

\subsection{ISR}\label{ISR_large_k}
The main result in this section is the following.
\begin{theorem}\label{ISR_PSPACEcom_largek}
    There exists some constant $\varepsilon_0\in(0,1)$ such that \prb{ISR} under $\kTJ$ on graphs of maximum degree $3$ is $\PSPACE$-complete for any $k$ satisfying the following condition:
    there exists a constant $c$ such that $k\leq \varepsilon_0|I|$ holds whenever $|I|\geq c$, where $I$ is the initial independent set of the input graph. 
\end{theorem}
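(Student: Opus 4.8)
The plan is to derive \Cref{ISR_PSPACEcom_largek} from the hardness of approximating the optimization problem \prb{MaxminISR}. Recall (a natural formulation of) \prb{MaxminISR} under token jumping: an instance is a graph $H$ together with two independent sets $A,B$ of $H$ with $|A|=|B|=\ell$; a \emph{relaxed sequence} is a sequence $\langle A=V_0,V_1,\ldots,V_m=B\rangle$ of $\ell$-element vertex subsets of $H$ (not required to be independent) in which each $V_i$ is obtained from $V_{i-1}$ by relocating a single token, and its value is $\min_i \mathrm{val}(V_i)$, where $\mathrm{val}(V_i)$ denotes the size of a largest independent subset of $H[V_i]$. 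By the $\PSPACE$-hardness of approximating \prb{MaxminISR}~\cite{DBLP:conf/stoc/HiraharaO24,karthikc.s.2023inapproximability,DBLP:conf/stacs/Ohsaka23}, there is a constant $\delta\in(0,1)$ such that, even when $H$ has maximum degree $3$, it is $\PSPACE$-hard to distinguish (i) instances whose optimum equals $\ell$ (equivalently, $A$ and $B$ are reconfigurable through independent sets of $H$, i.e.\ under $\TJ$) from (ii) instances whose optimum is at most $(1-\delta)\ell$. I set $\varepsilon_0:=\delta/2$, and I use that \prb{MaxminISR} stays hard on instances with $\ell$ arbitrarily large, so we may assume $\ell\ge c$ and hence the rule associated to $|I|=\ell$ satisfies $k=k(\ell)\le\varepsilon_0\ell$.

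Given such an instance $(H,A,B)$, the reduction simply outputs it as an \prb{ISR} instance; since $|A|=\ell$ the associated rule is $k(\ell)$-$\TJ$, and the output graph has maximum degree $3$. For the easy direction, if the \prb{MaxminISR} optimum equals $\ell$ then there is a relaxed sequence all of whose members are independent sets of $H$ of size $\ell$ with consecutive members differing by one token jump; this is already a $1$-$\TJ$, hence $\kTJ$, reconfiguration sequence, so $(H,A,B,\kTJ)$ is a yes-instance. For the converse, suppose $A$ and $B$ are reconfigurable under $\kTJ$ via $\langle A=I_0,\ldots,I_\ell=B\rangle$ with $|I_{j-1}\symdif I_j|\le 2k$. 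Serialize each step: writing $I_{j-1}\setminus I_j=\{u_1,\ldots,u_p\}$ and $I_j\setminus I_{j-1}=\{v_1,\ldots,v_p\}$ with $p\le k$, move $u_1$ to $v_1$, then $u_2$ to $v_2$, and so on; after $t$ moves the set $W_t=(I_{j-1}\cap I_j)\cup\{u_{t+1},\ldots,u_p\}\cup\{v_1,\ldots,v_t\}$ has size $\ell$, and since $(I_{j-1}\cap I_j)\cup\{v_1,\ldots,v_t\}\subseteq I_j$ is independent, deleting the at most $p-t\le k$ vertices $u_{t+1},\ldots,u_p$ leaves an independent set, so $\mathrm{val}(W_t)\ge\ell-k$. (One checks each $W_t\to W_{t+1}$ is a legal token jump, $W_0=I_{j-1}$, $W_p=I_j$.) Concatenating over all steps yields a relaxed sequence from $A$ to $B$ of value at least $\ell-k\ge(1-\varepsilon_0)\ell>(1-\delta)\ell$, so by the gap the \prb{MaxminISR} optimum must equal $\ell$. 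Thus the reduction maps yes-instances to reconfigurable instances and no-instances to non-reconfigurable ones; together with membership of \prb{ISR} in $\PSPACE$ (\Cref{sec:Preliminaries}) this gives $\PSPACE$-completeness. Since $\varepsilon_0$ depends only on $\delta$ and the argument only ever uses $k\le\varepsilon_0\ell$ on the produced instances (whose $|I|=\ell$ is unbounded), the single reduction works simultaneously for every admissible choice of $k$ as a function of $|I|$.

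The bookkeeping above is routine; the real content is the imported ingredient, namely the inapproximability of \prb{MaxminISR} \emph{with perfect completeness on graphs of maximum degree $3$}. If the reconfiguration-PCP machinery only yields this on graphs of some larger constant degree $\Delta_0$, I would first compose with a degree-reduction reduction: replace each vertex of degree exceeding $3$ by a cycle of copies joined by ``equality'' edges, distributing the incident edges among distinct copies. The point to verify is that this is gap-preserving --- a ``conflicting'' token in the reduced instance is chargeable to $O(1)$ conflicting tokens in the original --- while sending single-token moves to $O(1)$-token moves in both directions, so that relaxed sequences and their values transfer with only a constant-factor loss, which is absorbed into $\varepsilon_0$. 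This degree-reduction (and, independently, pinning down exactly which formulation of \prb{MaxminISR} the cited results provide, since a value measured by the number of conflicting edges rather than by $\mathrm{val}$ costs an extra factor $\Delta$ in the estimate $\mathrm{val}(W_t)\ge\ell-k$) is the part that requires care; everything else follows the plan above.
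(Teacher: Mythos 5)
Your overall strategy is the same as the paper's: the reduction is the identity map applied to a gap instance of \prb{MaxminISR} on degree-$3$ graphs, completeness follows because a $\TJ$ sequence is in particular a $\kTJ$ sequence, and soundness follows by serializing each $\kTJ$ step into single-token changes whose loss is at most $k\leq\varepsilon_0|I|$, which the gap absorbs; this is precisely \Cref{MaxmintoISR} combined with \Cref{MaxminISRPSPACE-h}. Your worry about degree reduction is unnecessary: the cited hardness (\Cref{MaxminISRPSPACE-h}) already holds on graphs of maximum degree $3$ with $|I|=|J|=\alpha(G)$, and its completeness case $\mathsf{val_{max}}(I,J)\geq\alpha(G)-1$ is equivalent to $\TJ$-reconfigurability by \cite{KMM12}, which is all your easy direction uses.

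The genuine gap is the formulation of the source problem. The available hardness is for \prb{MaxminISR} under $\TAR$, where every set in the sequence is an independent set and the value is the minimum size, with soundness $\mathsf{val_{max}}(I,J)<(1-\varepsilon_0)(\alpha(G)-1)$. Your ``relaxed'' formulation (sequences of $\ell$-sets that need not be independent, valued by the largest independent subset) is not interchangeable with this in the direction your soundness argument needs: you conclude from a $\kTJ$ sequence that the relaxed optimum exceeds $(1-\delta)\ell$, so you need no-instances to have small \emph{relaxed} optimum, but a small $\TAR$ value does not imply a small relaxed value. For instance, in $K_{t,t}$ with $I$ and $J$ the two sides, every $\TAR$ sequence passes through the empty set, so $\mathsf{val_{max}}(I,J)=0$, whereas moving tokens across one at a time gives a relaxed sequence of value at least $t/2$. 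Hence the gap problem in your formulation does not follow from the cited results without an additional argument, and you correctly flagged but did not close this. The fix is exactly the paper's: serialize each $\kTJ$ step under $\TAR$ by first removing the at most $k$ departing tokens one by one and then adding the at most $k$ arriving ones, so every intermediate independent set has size at least $|I_{i-1}\cap I_i|\geq|I|-k\geq(1-\varepsilon_0)|I|>(1-\varepsilon_0)(|I|-1)$, contradicting the $\TAR$ soundness of \Cref{MaxminISRPSPACE-h} directly; with that substitution your proof matches the paper's.
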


To prove \Cref{ISR_PSPACEcom_largek}, we will construct a polynomial-time reduction from the \emph{optimization variant} of \prb{ISR} called \prb{Maxmin Independent Set Reconfiguration} (\prb{MaxminISR} for short)~\cite{IDHPSUU11}.
In the problem, we adopt the \emph{token addition and removal} ($\TAR$ for short)~\cite{IDHPSUU11}, under which two independent sets of a graph $G$ are adjacent if one is obtained from the other by adding or removing a single vertex of $G$.
In \prb{MaxminISR}, given a graph \( G \) and two independent sets \( I \), \( J \) of \( G \), we are asked to find a reconfiguration sequence $\sigma=\langle I=I_0, I_1, \ldots, I_\ell = J\rangle$ under \( \TAR \) that maximizes $\mathsf{val}(\sigma)$, where $\mathsf{val}(\sigma) = \min \{ |I_i|\colon 0\leq i \leq \ell \}$. 
For a graph $G$, we use $\alpha(G)$ to denote the number of vertices in a maximum independent set of $G$.
Let $(G, I, J)$ be an instance of \prb{MaxminISR}, and $\mathsf{val_{max}}(I,J)$ be the maximum value of $\mathsf{val}(\sigma)$ over all possible reconfiguration sequences $\sigma$ from $I$ to $J$ under $\TAR$.
Recently, the following \Cref{MaxminISRPSPACE-h} was proven~\cite{DBLP:conf/stoc/HiraharaO24,karthikc.s.2023inapproximability,DBLP:conf/stacs/Ohsaka23}.
\begin{theorem}[\cite{DBLP:conf/stoc/HiraharaO24,karthikc.s.2023inapproximability,DBLP:conf/stacs/Ohsaka23}]\label{MaxminISRPSPACE-h}
    Let $I$ and $J$ be initial and target independent sets of an input graph $G$ in \prb{MaxminISR}.
    Then, there exists some constant $\varepsilon_0\in (0,1)$ such that it is $\PSPACE$-hard to distinguish between the following two cases:
    \begin{description}[leftmargin=1em, nosep, topsep=0.5em]
        \item[{(i)}] $\mathsf{val_{max}}(I,J)\geq \alpha(G)-1$, and 
        \item[{(ii)}] $\mathsf{val_{max}}(I,J)<(1-\varepsilon_0)(\alpha(G)-1)$.
    \end{description}
    The same hardness result holds even when the maximum degree of $G$ is $3$, $|I|=|J|=\alpha(G)$, and $\frac{\alpha(G)}{|V(G)|}\in[\frac{1}{3},\frac{1}{2}]$.
\end{theorem}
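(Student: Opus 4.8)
The plan is to prove this as the reconfiguration analogue of the hardness-of-approximation consequence of the PCP theorem, following a two-phase strategy: first establish a constant gap for a bounded-occurrence \emph{constraint satisfaction reconfiguration} problem (a ``reconfiguration PCP''), and then transport that gap to \prb{MaxminISR} through a constraint-to-independent-set gadget reduction, finishing with a degree-reduction step to reach maximum degree $3$ and a balancing step to meet the stated structural constraints.

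For the first phase I would define a constraint satisfaction reconfiguration problem whose instance is a CSP $\Psi$ over a constant alphabet together with two satisfying assignments $\psi_s,\psi_t$; a reconfiguration sequence changes one variable per step, and the value of a sequence is the minimum, over all intermediate assignments, of the fraction of satisfied constraints. The exact version---distinguishing ``reconfigurable so that every intermediate assignment violates at most a vanishing fraction of constraints'' from ``not reconfigurable at all''---is $\PSPACE$-complete by a Cook--Levin/Savitch-style encoding of a space-bounded computation, but this base hardness carries no constant gap. I would then amplify the gap by adapting Dinur's proof of the PCP theorem to the reconfiguration setting, composing degree regularization, graph powering, and alphabet reduction via an inner assignment tester, while establishing at each step a \emph{reconfiguration-soundness} guarantee: if the base instance is a no-instance, then after amplification \emph{every} reconfiguration sequence of the amplified proof must pass through an intermediate assignment that violates a constant fraction $\varepsilon>0$ of constraints, whereas the two endpoints stay perfectly satisfying. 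The output is a bounded-arity, bounded-occurrence $2$-CSP of constant alphabet for which it is $\PSPACE$-hard to distinguish perfect reconfigurability from ``every sequence has value below $1-\varepsilon$''.

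For the second phase I would apply the FGLSS-type construction: for each constraint create a block of vertices indexed by its satisfying local assignments, put a clique inside each block, and join any two vertices in different blocks whose local assignments disagree on a shared variable. Then independent sets selecting one vertex per block correspond to assignments, the independent-set size equals the number of satisfied constraints, and $\alpha(G)$ equals the number of constraints. To obtain the clean completeness value $\alpha(G)-1$ in case (i), rather than a constant drop proportional to the variable occurrence, I would first replace each variable by a path of copies tied by equality constraints, so that a global variable flip is simulated copy-by-copy and at any moment at most one equality constraint is violated; in the independent-set picture this is a $\TAR$ walk whose size only ever dips to $\alpha(G)-1$. Conversely, any sequence forced to violate an $\varepsilon$-fraction of constraints yields an intermediate independent set of size at most $(1-\varepsilon)\alpha(G)$, and choosing $\varepsilon_0$ from $\varepsilon$ produces exactly the gap between $\alpha(G)-1$ and $(1-\varepsilon_0)(\alpha(G)-1)$ for the value $\mathsf{val_{max}}(I,J)$ with $|I|=|J|=\alpha(G)$. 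Bounded occurrence keeps $G$ of bounded degree; a standard expander/equality-gadget degree-reduction then brings the maximum degree down to $3$ while preserving the gap up to a constant, and padding with independent structure enforces $\tfrac{\alpha(G)}{|V(G)|}\in[\tfrac13,\tfrac12]$.

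The main obstacle is the reconfiguration-soundness analysis in the gap-amplification phase. Classical PCP amplification reasons about a single fixed assignment and amplifies its unsatisfaction value, whereas here the adversary controls an entire sequence and the quantity to be amplified is a \emph{minimum over that sequence}; one must therefore show that a ``bad step''---an intermediate assignment far from satisfying---is forced to survive graph powering and composition with the inner tester, simultaneously for every sequence, while in the yes-case a perfectly reconfiguring sequence upstairs can be simulated downstairs without introducing a large transient drop. Proving that powering both amplifies the per-step soundness and remains compatible with local, one-variable-at-a-time moves is the heart of the argument and the step I expect to be hardest; the gadget reduction and degree reduction of the second phase are comparatively routine once the gapped CSP reconfiguration instance is in hand.
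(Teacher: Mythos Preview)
This theorem is not proved in the paper: it is quoted as a black-box result from \cite{DBLP:conf/stoc/HiraharaO24,karthikc.s.2023inapproximability,DBLP:conf/stacs/Ohsaka23} and immediately applied via \Cref{MaxmintoISR} to obtain \Cref{ISR_PSPACEcom_largek}. There is therefore no ``paper's own proof'' to compare against.

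That said, your outline is a faithful high-level summary of how the cited references obtain the result. The Hirahara--Ohsaka paper does precisely build a Probabilistically Checkable Reconfiguration Proof system by adapting Dinur-style gap amplification (degree reduction, powering, alphabet reduction via an assignment tester) while tracking reconfiguration soundness and completeness, and the gap is then carried to \prb{MaxminISR} through an FGLSS-type clause/assignment gadget; the subsequent reduction to maximum degree~$3$ and the density constraint $\alpha(G)/|V(G)|\in[\tfrac13,\tfrac12]$ are handled by gap-preserving reductions of the kind in \cite{DBLP:conf/stacs/Ohsaka23}. Your identification of the reconfiguration-soundness analysis under powering as the crux is accurate: controlling the value of \emph{every} sequence, not just a single assignment, is exactly the new technical content of the PCRP framework. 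Two small caveats on your sketch: the ``variable-to-path'' trick you invoke to make the completeness drop exactly $1$ is not quite how the cited works achieve $\mathsf{val_{max}}(I,J)\ge\alpha(G)-1$; they instead arrange the gadget reduction so that a single constraint flip in the CSP sequence corresponds to a $\TAR$ move dropping by one token, and the completeness $\alpha(G)-1$ comes out directly. Also, the ``padding with independent structure'' you propose for the ratio constraint must be done carefully so as not to enlarge $\alpha(G)$ and destroy the gap; in the cited constructions the ratio bound is a by-product of the bounded-degree FGLSS graph rather than an afterthought.
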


To lead to \Cref{ISR_PSPACEcom_largek} from \Cref{MaxminISRPSPACE-h}, we provide the following lemma.

\begin{lemma}\label{MaxmintoISR}
    Let $I$ and $J$ be initial and target independent sets of an input graph $G$ in \prb{MaxminISR} and \prb{ISR}.
    Let \( f \) be a given function and \( g \) be any function defined on integers such that $x-g(x)\geq 1$ for all positive integers $x$ and there exists a fixed positive integer \( n_0 \) satisfying \( g(n) \geq f(n) \) for all integers \( n \geq n_0 \).
    Suppose that it is $\PSPACE$-hard to distinguish between the following two cases:
    \begin{description}[leftmargin=1em, nosep, topsep=0.5em]
        \item[{(i)}] $\mathsf{val_{max}}(I,J)\geq |I|-1$, and 
        \item[{(ii)}] $\mathsf{val_{max}}(I,J)< f(|I|)$.
    \end{description}
    Then, \prb{ISR} under $\kTJ$ is $\PSPACE$-hard, where $k=|I|-g(|I|)\geq 1$.
\end{lemma}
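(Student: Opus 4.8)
The plan is to use a polynomial-time reduction that leaves the graph untouched: an instance $(G,I,J)$ of the promise problem in the hypothesis is mapped to the \prb{ISR} instance $(G,I,J,\kTJ)$ with $k=|I|-g(|I|)\ge 1$. I must verify that case~(i) maps to a yes-instance and that case~(ii) maps to a no-instance. For the finitely many solution sizes with $|I|<n_0$ the second implication need not follow from the generic argument, but there the promise problem is itself polynomial-time solvable: deciding $\mathsf{val_{max}}(I,J)\ge|I|-1$ amounts to testing $1$-$\TJ$-reachability among the $n^{O(1)}$ independent sets of size $|I|$ (justified by the core lemma below), so such instances can be sent to a fixed trivial yes- or no-instance. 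All of what follows concerns $|I|\ge n_0$.

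For ``case~(i) $\Rightarrow$ yes'', I would start from a $\TAR$-sequence $\sigma=\langle I=A_0,\dots,A_m=J\rangle$ with $|A_i|\ge|I|-1$ for every $i$ and extract from it a $1$-$\TJ$-sequence between $I$ and $J$; since $k\ge1$, any $1$-$\TJ$-adjacency is also a $\kTJ$-adjacency, so $(G,I,J,\kTJ)$ is then a yes-instance. List the members of $\sigma$ of size exactly $|I|$ in the order they occur. Two consecutive such sets are separated in $\sigma$ either by a single set of size $|I|-1$ — in which case they differ by one swapped vertex and are $1$-$\TJ$-adjacent — or by a block all of whose members have size $\ge|I|+1$ (a ``bump''). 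The core lemma I would prove is: if $U,W$ are independent sets with $|U|=|W|=s$ joined by a $\TAR$-sequence all of whose members have size $\ge s$, then $U$ and $W$ are $1$-$\TJ$-reconfigurable through independent sets of size $s$. This I would establish by repeatedly destroying a member $C_\ell$ of maximum size $M>s+1$: its $\TAR$-neighbours $C_{\ell-1},C_{\ell+1}$ both have size $M-1$, and writing $C_\ell=C_{\ell-1}\cup\{v\}=C_{\ell+1}\cup\{w\}$, I either splice out $C_\ell,C_{\ell+1}$ when $v=w$, or, when $v\ne w$ (so $w\in C_{\ell-1}$), replace $C_\ell$ by $C_{\ell-1}\setminus\{w\}$, which is independent, has size $M-2\ge s$, and is a legitimate $\TAR$-step on both sides. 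Each operation strictly decreases $\sum_i(|C_i|-s)$ without raising the maximum size, so after finitely many steps all sizes lie in $\{s,s+1\}$; the sequence then alternates between these two sizes, and consecutive size-$s$ members differ by at most one swapped vertex, which yields the desired $1$-$\TJ$-sequence after deleting any repetitions. Applying this lemma to each ``bump'' (whose minimum size is $|I|$) and concatenating with the single-swap steps produces a $1$-$\TJ$-sequence from $I$ to $J$.

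For ``case~(ii) $\Rightarrow$ no'', I would argue the contrapositive: assume $(G,I,J,\kTJ)$ is a yes-instance, via a $\kTJ$-sequence $\langle I=B_0,\dots,B_p=J\rangle$ with each $B_i$ of size $|I|$ and $|B_i\symdif B_{i+1}|\le2k$. I simulate each step under $\TAR$ by first deleting, one at a time, the at most $k$ vertices of $B_i\setminus B_{i+1}$ (each intermediate set is a subset of $B_i$, hence independent), reaching $B_i\cap B_{i+1}$ of size $\ge|I|-k$, and then adding, one at a time, the at most $k$ vertices of $B_{i+1}\setminus B_i$ (each intermediate set is a subset of $B_{i+1}$, hence independent). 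Concatenating over $i$ gives a $\TAR$-sequence from $I$ to $J$ all of whose members have size $\ge|I|-k=g(|I|)$, so $\mathsf{val_{max}}(I,J)\ge g(|I|)\ge f(|I|)$ because $|I|\ge n_0$; hence the instance is not in case~(ii).

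The only step I expect to be more than bookkeeping is the core lemma on destroying ``bumps'' in a $\TAR$-sequence; the two explicit $\TAR\leftrightarrow\TJ$ simulations and the treatment of the finitely many small solution sizes are routine. I would also note that case~(i) is used only through the weaker consequence $\mathsf{val_{max}}(I,J)\ge|I|-k$, which is implied by $\mathsf{val_{max}}(I,J)\ge|I|-1$ since $k\ge1$.
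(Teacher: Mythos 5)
Your proposal is correct and follows essentially the same reduction as the paper: the instance is passed through unchanged with $k=|I|-g(|I|)$, case~(i) yields a $\TJ$ (hence $\kTJ$) sequence, and case~(ii) is handled by the same contrapositive argument simulating each $\kTJ$ step under $\TAR$ to get $\mathsf{val_{max}}(I,J)\ge g(|I|)\ge f(|I|)$. The only difference is that where the paper simply invokes the known equivalence between $\TJ$-reachability and $\TAR$-reachability with threshold $|I|-1$ (citing~\cite{KMM12}), you reprove it from scratch via your bump-elimination lemma, which is correct but could be replaced by that citation.
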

\begin{proof}
    Let $(G, I, J)$ be an instance of \prb{MaxminISR}, 
    and $(G, I, J, \kTJ)$ be an instance of \prb{ISR} where $k=|I|-g(|I|)$.
    We assume that $|I|\geq n_0$ and hence $k\ge 1$ since we can solve all instances with $|I|<n_0$ by enumerating all independent sets of constant size in polynomial time.
    We now show that if $(G, I, J)$ satisfies condition (i), then $(G, I, J, \kTJ)$ is a yes-instance, and if $(G, I, J)$ satisfies condition (ii), then $(G, I, J, \kTJ)$ is a no-instance.
    We will show the latter one by proving the contrapositive: if $(G, I, J, \kTJ)$ is a yes-instance, then $(G, I, J)$ does not satisfy condition (ii).

    Firstly, suppose that there is a reconfiguration sequence $\sigma=\langle I=I_0,I_1,\ldots,I_\ell = J\rangle$ between $I$ and $J$ under $\TAR$ such that $\mathsf{val}(\sigma)\geq |I|-1$. 
    It is known that this assumption holds if and only if there is a reconfiguration sequence under $\TJ$ between $I$ and $J$~\cite{KMM12}.
    Thus, there is a reconfiguration sequence under $\TJ$ between $I$ and $J$, and that is also a reconfiguration sequence under $\kTJ$. 
    Therefore, $(G, I, J, \kTJ)$ is a yes-instance.

    Conversely, suppose that there is a reconfiguration sequence $\sigma'=\langle I=I_0,I_1,\ldots,I_\ell = J\rangle$ between $I$ and $J$ under $\kTJ$ where $k=|I|-g(|I|)\geq 1$. 
    Then, for any two consecutive independent sets $I_{i-1}$ and $I_i$ with $i\in[\ell]$, we have $|I_{i-1}\cap I_i| \geq |I|-k = g(|I|)$.
    Additionally, we can transform from $I_{i-1}$ to $I_i$ under $\TAR$ as follows:
    Firstly, we remove tokens on vertices in $I_{i-1}\setminus I_i$ one by one; then, we add tokens on vertices in $I_i\setminus I_{i-1}$ one by one. 
    Through these steps, we have no independent set with size smaller than $|I_{i-1}\cap I_i|\geq g(|I|) \geq f(|I|)$.
    Therefore, there is a sequence $\sigma$ under $\TAR$ such that $\mathsf{val}(\sigma)\geq g(|I|)\geq f(|I|)$. That is, $(G, I, J)$ does not satisfy condition (ii).
    This completes the proof.
\end{proof}

We set $f(x)=(1-\varepsilon_0)(x-1)$ and let $g(x)$ be an arbitrary function such that $x-1\geq g(x)\geq f(x)$ for all $x\geq x_0$, for some constant $x_0$.
For example, $g(x)$ may be chosen as $x-c$ for some constant $c$,  $ x-\lceil\log x\rceil$, $ x-\lceil x^{1/2} \rceil$, or $x-\lceil \varepsilon x\rceil $ for some constant $\varepsilon \leq \varepsilon_0$.
Combining \Cref{MaxminISRPSPACE-h} and \Cref{MaxmintoISR}, \prb{ISR} under $\kTJ$ on graphs of maximum degree $3$ is $\PSPACE$-complete for $k=|I|-g(|I|)$, as claimed in \Cref{ISR_PSPACEcom_largek}.
This result includes the \(\PSPACE\)-completeness of \prb{ISR} under \(\kTJ\) for various values of \(k\), such as \(\Theta(1)\), \(\Theta(\log |I|)\), \(\Theta(|I|^\epsilon) \) with any constant $\epsilon\in(0,1)$, and \(\Theta(|I|)\).

\subsection{VCR}\label{VCR_large_k}
Similarly to \Cref{ISR_PSPACEcom_largek}, we prove the following \Cref{VCR_PSPACEcom_largek}.

\begin{theorem}\label{VCR_PSPACEcom_largek}
    There exists some constant $\varepsilon_0\in(0,1)$ such that \prb{VCR} under $\kTJ$ on graphs of maximum degree $3$ is $\PSPACE$-complete for any $k$ satisfying the following condition:
    there exists a constant $c$ such that $k\leq \varepsilon_0|S|$ holds whenever $|S|\geq c$, where $S$ is the initial vertex cover of the input graph.
\end{theorem}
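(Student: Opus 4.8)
The plan is to follow the template of \Cref{ISR_PSPACEcom_largek}, replacing the direct reduction from \prb{MaxminISR} by one that passes through the complementation bijection between independent sets and vertex covers. Since \prb{VCR} lies in $\PSPACE$ (see \Cref{sec:Preliminaries}), it suffices to establish $\PSPACE$-hardness. Let $\varepsilon^\star\in(0,1)$ denote the constant supplied by \Cref{MaxminISRPSPACE-h}; I will prove the theorem with $\varepsilon_0:=\varepsilon^\star/2$.

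Given an instance $(G,I,J)$ of \prb{MaxminISR} meeting the extra guarantees of \Cref{MaxminISRPSPACE-h} --- namely $G$ has maximum degree $3$, $|I|=|J|=\alpha(G)$, and $\tfrac{\alpha(G)}{|V(G)|}\in[\tfrac13,\tfrac12]$ --- I put $n=|V(G)|$, $S=V(G)\setminus I$, and $T=V(G)\setminus J$. Then $S,T$ are vertex covers of $G$ with $|S|=|T|=n-\alpha(G)$, and $G$ still has maximum degree $3$; I output $(G,S,T,\kTJ)$ with $k$ the value the given schedule assigns to this instance (recall $k$ ranges over positive integers). If $|S|$ is below the schedule's threshold constant $c$, I instead answer directly by breadth-first search over the polynomially many size-$|S|$ vertex covers of $G$. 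The crux is that $X\mapsto V(G)\setminus X$ is a size-preserving bijection with $|A\symdif B|=|(V(G)\setminus A)\symdif(V(G)\setminus B)|$, so it carries $\kTJ$-reconfiguration sequences among size-$|S|$ vertex covers bijectively to $\kTJ$-reconfiguration sequences among size-$|I|$ independent sets, and it preserves maximum degree.

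For correctness: in case (i) of \Cref{MaxminISRPSPACE-h} we have $\mathsf{val_{max}}(I,J)\ge\alpha(G)-1=|I|-1$, which by \cite{KMM12} is equivalent to the existence of a $\TJ$-reconfiguration from $I$ to $J$; complementing it (and noting a single token jump has symmetric difference $2\le 2k$) yields a $\kTJ$-reconfiguration from $S$ to $T$, so $(G,S,T,\kTJ)$ is a yes-instance. Conversely, if $(G,S,T,\kTJ)$ is a yes-instance, complementing a $\kTJ$-sequence for it gives a $\kTJ$-sequence $\langle I=I_0,\dots,I_\ell=J\rangle$ of independent sets; each step has $|I_{i-1}\cap I_i|\ge|I|-k$, hence --- exactly as in the proof of \Cref{MaxmintoISR} --- can be realised under $\TAR$ by removing the vertices of $I_{i-1}\setminus I_i$ one by one and then adding those of $I_i\setminus I_{i-1}$, never falling below $|I|-k$, so $\mathsf{val_{max}}(I,J)\ge|I|-k$. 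Now $\tfrac{\alpha(G)}{n}\in[\tfrac13,\tfrac12]$ forces $|S|=n-\alpha(G)\le 2\alpha(G)=2|I|$, so whenever $|S|\ge c$ the schedule gives $k\le\varepsilon_0|S|\le\varepsilon^\star|I|$, whence $\mathsf{val_{max}}(I,J)\ge|I|-k\ge(1-\varepsilon^\star)|I|\ge(1-\varepsilon^\star)(\alpha(G)-1)$, contradicting case (ii). Therefore the yes/no answer for $(G,S,T,\kTJ)$ separates cases (i) and (ii) of \Cref{MaxminISRPSPACE-h}, giving $\PSPACE$-hardness --- and hence $\PSPACE$-completeness --- of \prb{VCR} under $\kTJ$ for the given $k$.

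The one delicate point is the constant calibration in the last step: for \prb{ISR} the schedule is measured against $|I|$, whereas here it is measured against the vertex-cover size $|S|$, and the two can differ by a factor of up to $2$ --- which is exactly why we need the bound $\tfrac{\alpha(G)}{|V(G)|}\in[\tfrac13,\tfrac12]$ from \Cref{MaxminISRPSPACE-h}. Without it, a very small minimum vertex cover could let the schedule push $k$ above $\varepsilon^\star|I|$, breaking the reduction. Everything else is routine bookkeeping once \Cref{MaxminISRPSPACE-h} and the $\kTJ$-to-$\TAR$ simulation from the proof of \Cref{MaxmintoISR} are in hand; equivalently, one can package the argument as a ``\prb{MaxminISR}-to-\prb{VCR}'' lemma paralleling \Cref{MaxmintoISR}.
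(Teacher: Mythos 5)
Your proof is correct, but it takes a genuinely different route from the paper. The paper proves this theorem symmetrically to the ISR case: it invokes the gap hardness of \prb{MinmaxVCR} (\Cref{MinmaxVCRPSPACE-h}) and a dedicated reduction lemma (\Cref{MinmaxtoVCR}) in which each $\kTJ$ step between vertex covers is simulated under $\TAR$ by first adding the vertices of $S_i\setminus S_{i-1}$ and then removing those of $S_{i-1}\setminus S_i$, so that no intermediate cover exceeds $|S|+k\leq g(|S|)\leq f(|S|)$; this stays entirely on the vertex-cover side, needs no recalibration of the constant, and is stated for a general function $g$. You instead derive the result from the ISR-side gap theorem (\Cref{MaxminISRPSPACE-h}) via the complementation bijection $X\mapsto V(G)\setminus X$, which preserves sizes, symmetric differences, and maximum degree, and you translate the schedule measured in $|S|$ into one measured in $|I|$ using the density guarantee $\frac{\alpha(G)}{|V(G)|}\in[\frac13,\frac12]$ (giving $|S|\leq 2|I|$), at the cost of halving the constant to $\varepsilon_0=\varepsilon^\star/2$ --- harmless, since the theorem only asserts existence of some $\varepsilon_0$. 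Your argument for the no-side (complement the $\kTJ$ cover sequence, then realise each step under $\TAR$ without dropping below $|I|-k$) is exactly the mechanism of \Cref{MaxmintoISR}, so the whole proof can indeed be packaged as a \prb{MaxminISR}-to-\prb{VCR} analogue of that lemma. One small point: for instances with $|S|<c$, deciding the \emph{VCR} instance by BFS is not quite the right move inside a many-one reduction, because when the bound $k\leq\varepsilon_0|S|$ is not guaranteed the VCR answer need not match the gap classification; but since $\alpha(G)\leq|V(G)|/2$ forces $|S|\geq|V(G)|/2$, such instances have constantly many vertices, so you should simply brute-force the gap instance itself and output a trivial yes- or no-instance accordingly --- the same (equally loose) escape the paper uses in \Cref{MaxmintoISR,MinmaxtoVCR}, and a one-line fix here.
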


To prove \Cref{VCR_PSPACEcom_largek}, we will construct a polynomial-time reduction from the \emph{optimization variant} of \prb{VCR} called \prb{Minmax Vertex Cover Reconfiguration} (\prb{MinmaxVCR} for short)~\cite{IDHPSUU11}.
In the problem, we adopt the token addition and removal ($\TAR$ for short) rule under which two vertex covers of a graph $G$ are adjacent if one is obtained from the other by adding or removing a single vertex of $G$.
In \prb{MinmaxVCR}, given a graph \( G \) and two vertex covers \( S \), \( T \) of \( G \), we are asked to find a reconfiguration sequence $\sigma=\langle S=S_0, S_1, \ldots, S_\ell = T\rangle$ under \( \TAR \) that minimizes $\mathsf{val}(\sigma)$, where $\mathsf{val}(\sigma) = \max \{ |S_i|\colon 0\leq i \leq \ell \}$. 
We use $\beta(G)$ to denote the size of a minimum vertex cover of a graph $G$.
Let $(G, S, T)$ be an instance of \prb{MinmaxVCR}, and $\mathsf{val_{min}}(S,T)$ be the minimum value of $\mathsf{val}(\sigma)$ over all possible reconfiguration sequences $\sigma$ from $S$ to $T$ under $\TAR$.
The following \Cref{MinmaxVCRPSPACE-h} was proved~\cite{DBLP:conf/stoc/HiraharaO24,karthikc.s.2023inapproximability,DBLP:conf/stacs/Ohsaka23}.
\begin{theorem}[\cite{DBLP:conf/stoc/HiraharaO24,karthikc.s.2023inapproximability,DBLP:conf/stacs/Ohsaka23}]\label{MinmaxVCRPSPACE-h}
    Let $S$ and $T$ be initial and target vertex covers of an input graph $G$ in \prb{MinmaxVCR}.
    Then, there exists some constant $\varepsilon_0\in (0,1)$ such that it is $\PSPACE$-hard to distinguish between the following two cases:
    \begin{description}[leftmargin=1em, nosep, topsep=0.5em]
        \item[{(i)}] $\mathsf{val_{min}}(S,T)\leq \beta(G)+1$, and 
        \item[{(ii)}] $\mathsf{val_{min}}(S,T)>(1+\varepsilon_0)(\beta(G)+1)$.
    \end{description}
    The same hardness result holds even when the maximum degree of $G$ is $3$, $|S|=|T|=\beta(G)$, and $\frac{\beta(G)}{|V(G)|}\in[\frac{1}{2},\frac{2}{3}]$.
\end{theorem}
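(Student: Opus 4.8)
The plan is to derive \Cref{MinmaxVCRPSPACE-h} directly from the already-cited gap-hardness of \prb{MaxminISR} (\Cref{MaxminISRPSPACE-h}) by a parameter-free complementation argument, exploiting that a vertex subset $S$ of an $n$-vertex graph $G$ is a vertex cover exactly when $V(G)\setminus S$ is an independent set. First I would establish the key structural correspondence: under $\TAR$, adding (resp.\ removing) a single vertex to an independent set $I$ corresponds precisely to removing (resp.\ adding) that same vertex from the vertex cover $V(G)\setminus I$. Hence complementation $\sigma=\langle I_0,\ldots,I_\ell\rangle \mapsto \bar\sigma=\langle V(G)\setminus I_0,\ldots,V(G)\setminus I_\ell\rangle$ is a bijection between $\TAR$ reconfiguration sequences of independent sets and $\TAR$ reconfiguration sequences of vertex covers \emph{on the same graph} $G$. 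Since $|V(G)\setminus I_i|=n-|I_i|$, the two objectives are linked by $\max_i|V(G)\setminus I_i| = n-\min_i|I_i|$, so $\mathsf{val}(\bar\sigma)=n-\mathsf{val}(\sigma)$ and therefore $\mathsf{val_{min}}(V(G)\setminus I,V(G)\setminus J)=n-\mathsf{val_{max}}(I,J)$. Together with the identity $\beta(G)=n-\alpha(G)$, this is all the algebra the reduction needs.

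Next I would translate the two gap cases. Setting $S:=V(G)\setminus I$ and $T:=V(G)\setminus J$, condition~(i) of \Cref{MaxminISRPSPACE-h}, namely $\mathsf{val_{max}}(I,J)\ge \alpha(G)-1$, is equivalent via $\mathsf{val_{min}}(S,T)=n-\mathsf{val_{max}}(I,J)$ and $\beta=n-\alpha$ to $\mathsf{val_{min}}(S,T)\le (n-\alpha(G))+1 = \beta(G)+1$, which is exactly condition~(i) of the target statement. Condition~(ii), $\mathsf{val_{max}}(I,J)<(1-\varepsilon_0)(\alpha(G)-1)$, translates to
\[
    \mathsf{val_{min}}(S,T) \;>\; n-(1-\varepsilon_0)(\alpha(G)-1) \;=\; (\beta(G)+1)+\varepsilon_0\bigl(\alpha(G)-1\bigr),
\]
so the additive surplus over $\beta(G)+1$ is exactly $\varepsilon_0(\alpha(G)-1)$.

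To recover a \emph{multiplicative} gap I would invoke the promised ratio $\frac{\alpha(G)}{|V(G)|}\in[\frac13,\frac12]$, which forces $\alpha(G)-1=\Theta(n)$ and $\beta(G)+1=\Theta(n)$ with $\beta(G)=n-\alpha(G)\le \frac{2n}{3}$; hence there is an absolute constant $c_0>0$ with $\frac{\alpha(G)-1}{\beta(G)+1}\ge c_0$ for all sufficiently large $n$ (the ratio tends to $\frac{n/3}{2n/3}=\frac12$ in the worst case). Choosing a fresh constant $\varepsilon_0':=\varepsilon_0 c_0\in(0,1)$ then gives $\varepsilon_0(\alpha(G)-1)\ge \varepsilon_0'(\beta(G)+1)$, so the displayed inequality implies $\mathsf{val_{min}}(S,T)>(1+\varepsilon_0')(\beta(G)+1)$, which is precisely condition~(ii) of the target statement. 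Finally I would verify that all auxiliary constraints carry over: $G$ is untouched, so its maximum degree remains $3$; moreover $|S|=|T|=n-\alpha(G)=\beta(G)$ and $\frac{\beta(G)}{|V(G)|}=1-\frac{\alpha(G)}{n}\in[\frac12,\frac23]$, matching the stated restrictions. Thus the ability to distinguish the two \prb{MinmaxVCR} cases would distinguish the two \prb{MaxminISR} cases, yielding $\PSPACE$-hardness.

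The main obstacle is exactly this last conversion from an additive to a multiplicative gap: the reduction is distinguishing-preserving only because the ratio constraint guarantees that both $\alpha(G)-1$ and $\beta(G)+1$ are $\Theta(n)$, so the fixed additive surplus $\varepsilon_0(\alpha(G)-1)$ is a constant fraction of $\beta(G)+1$. Without that constraint the surplus could be a vanishing fraction of $\beta(G)+1$ and no constant $\varepsilon_0'$ would survive. A minor technical point to dispatch separately is the finitely many small instances where $n$ lies below the threshold at which $\frac{\alpha(G)-1}{\beta(G)+1}\ge c_0$ stabilizes; these have bounded size and are decidable in polynomial time, hence do not affect the $\PSPACE$-hardness conclusion.
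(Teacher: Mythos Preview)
The paper does not prove \Cref{MinmaxVCRPSPACE-h}; it is simply imported from the cited references, just as \Cref{MaxminISRPSPACE-h} is. There is therefore no ``paper's own proof'' to compare against.

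That said, your derivation is correct and is a natural way to make the paper self-contained on this point. The complementation bijection between $\TAR$ sequences of independent sets and of vertex covers, the identity $\mathsf{val_{min}}(S,T)=n-\mathsf{val_{max}}(I,J)$, and the translation of case~(i) are all exact. In case~(ii) you correctly identify that complementation turns the multiplicative gap $(1-\varepsilon_0)(\alpha(G)-1)$ into an \emph{additive} surplus $\varepsilon_0(\alpha(G)-1)$ over $\beta(G)+1$, and then use the promised density constraint $\alpha(G)/n\in[\tfrac13,\tfrac12]$ to bound $(\alpha(G)-1)/(\beta(G)+1)$ below by an absolute constant, recovering a multiplicative gap with a new constant $\varepsilon_0'$. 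This step is genuinely necessary and you handle it cleanly; without the density promise the conversion would fail. The auxiliary conditions (maximum degree~$3$, $|S|=|T|=\beta(G)$, and $\beta(G)/n\in[\tfrac12,\tfrac23]$) transfer exactly as you state. Your treatment of the finitely many small instances is also fine.

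In short: the paper takes both theorems as black boxes from the same sources, whereas you give an explicit reduction of one to the other. Your argument is sound and slightly more informative than what the paper does.
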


To lead \Cref{VCR_PSPACEcom_largek} from \Cref{MinmaxVCRPSPACE-h}, we provide the following lemma.

\begin{lemma}\label{MinmaxtoVCR}
    Let $S$ and $T$ be initial and target vertex covers of an input graph $G$ in \prb{MinmaxVCR} and \prb{VCR}.
    Let \( f \) be a given function and \( g \) be any function defined on integers such that $g(x)\geq x+1$ for all positive integers $x$ and there exists a fixed positive integer \( n_0 \) satisfying \( g(n) \leq f(n) \) for all \( n \geq n_0 \).
    Suppose that it is $\PSPACE$-hard to distinguish between the following two cases:
    \begin{description}[leftmargin=1em, nosep, topsep=0.5em]
        \item[{(i)}] $\mathsf{val_{min}}(S,T)\leq |S|+1$, and 
        \item[{(ii)}] $\mathsf{val_{min}}(S,T)> f(|S|)$.
    \end{description}
    Then, \prb{VCR} under $\kTJ$ is $\PSPACE$-hard, where $k=g(|S|)-|S|\geq 1$.
\end{lemma}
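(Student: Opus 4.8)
The plan is to mirror the proof of \Cref{MaxmintoISR}, exchanging independent sets for vertex covers and reversing every inequality (``$\geq$'' becomes ``$\leq$'', ``$-$'' becomes ``$+$''). Given an instance $(G,S,T)$ of \prb{MinmaxVCR}, I would output the instance $(G,S,T,\kTJ)$ of \prb{VCR} with $k=g(|S|)-|S|$. Instances with $|S|<n_0$ are handled directly: $G$ has only $O(|V(G)|^{n_0})$ vertex covers of size $|S|$, so one builds the reconfiguration graph on these and runs a breadth-first search in polynomial time. For the remaining instances $|S|\geq n_0$, so $k=g(|S|)-|S|\geq 1$ and $g(|S|)\leq f(|S|)$. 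It then suffices to prove: (a) if $(G,S,T)$ satisfies (i), then $(G,S,T,\kTJ)$ is a yes-instance; and (b) if $(G,S,T,\kTJ)$ is a yes-instance, then $(G,S,T)$ does not satisfy (ii). Statement (b) gives ``(ii) $\Rightarrow$ no-instance'' by contraposition, and together (a) and (b) complete the reduction.

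For (a), I would use the complementary relationship between vertex covers and independent sets: $V(G)\setminus S'$ is an independent set of $G$ exactly when $S'$ is a vertex cover of $G$, and a single $\TAR$ step on one side corresponds to a single $\TAR$ step on the other. Hence a $\TAR$ reconfiguration sequence from $S$ to $T$ all of whose sets have size at most $|S|+1$ corresponds to a $\TAR$ reconfiguration sequence (on complements, in $G$) from $V(G)\setminus S$ to $V(G)\setminus T$ all of whose sets have size at least $|V(G)|-|S|-1=|V(G)\setminus S|-1$. By the known equivalence between $\TAR$ under this threshold and $\TJ$~\cite{KMM12}, condition (i) is therefore equivalent to $S$ and $T$ being reconfigurable under $\TJ$ (as vertex covers). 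Since every $\TJ$ move is a $\kTJ$ move for $k\geq 1$, $(G,S,T,\kTJ)$ is a yes-instance.

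For (b), suppose $\sigma'=\langle S=S_0,S_1,\dots,S_\ell=T\rangle$ is a reconfiguration sequence under $\kTJ$. For each $i\in[\ell]$, since $|S_{i-1}|=|S_i|=|S|$ and $|S_{i-1}\symdif S_i|\leq 2k$, we have $|S_i\setminus S_{i-1}|=|S_{i-1}\setminus S_i|\leq k$. I would simulate the move $S_{i-1}\to S_i$ under $\TAR$ by first \emph{adding} the vertices of $S_i\setminus S_{i-1}$ one by one, reaching $S_{i-1}\cup S_i$ (every intermediate set is a superset of the vertex cover $S_{i-1}$, hence a vertex cover), and then \emph{removing} the vertices of $S_{i-1}\setminus S_i$ one by one to reach $S_i$ (every intermediate set is a superset of the vertex cover $S_i$, hence a vertex cover). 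The largest set encountered is $S_{i-1}\cup S_i$, of size $|S|+|S_i\setminus S_{i-1}|\leq |S|+k=g(|S|)\leq f(|S|)$. Concatenating these simulations over all $i\in[\ell]$ yields a $\TAR$ reconfiguration sequence $\sigma$ from $S$ to $T$ with $\mathsf{val}(\sigma)\leq f(|S|)$, so $\mathsf{val_{min}}(S,T)\leq f(|S|)$ and condition (ii) fails.

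The main obstacle is the one step that is not a verbatim transcription of \Cref{MaxmintoISR}: the equivalence invoked in (a), namely that condition (i) phrased for vertex covers (a $\TAR$ upper bound of $|S|+1$) coincides with $\TJ$-reconfigurability. I expect this to be routine once the complementary relationship and \cite{KMM12} are combined, but it is the place to be careful, since—unlike the \prb{ISR} setting where \cite{KMM12} applies directly—here the $\TJ$ characterization must first be translated through the complement. Everything else (the brute-force handling of small $|S|$, the add-then-remove $\TAR$ simulation, and the size bookkeeping $|S|+k=g(|S|)\leq f(|S|)$) is direct.
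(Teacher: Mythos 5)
Your proposal is correct and follows essentially the same route as the paper's proof: the identical reduction, the brute-force handling of $|S|<n_0$, the citation of the $\TAR$/$\TJ$ equivalence~\cite{KMM12} for direction (a), and the same add-then-remove $\TAR$ simulation bounding the largest set by $|S_{i-1}\cup S_i|\leq g(|S|)\leq f(|S|)$ for direction (b). The only difference is that you spell out the complementation argument behind the \cite{KMM12} step, which the paper simply cites.
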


\begin{proof}
    Let $(G,S,T)$ be an instance of \prb{MinmaxVCR}, and $(G,S,T,\kTJ)$ be an instance of \prb{VCR} where $k=g(|S|)-|S|$. We assume that $|S|\geq n_0$ and hence $k\geq 1$, since we can solve all instances with $|S|<n_0$ by enumerating all vertex covers of constant size in polynomial time. 
    We now show that if $(G,S,T)$ satisfies the condition (i), then $(G,S,T,\kTJ)$ is a yes-instance, and if $(G,S,T)$ satisfies the condition (ii), then $(G,S,T,\kTJ)$ is a no-instance. 
    We will show the latter one by proving the contrapositive: if $(G,S,T,\kTJ)$ is a yes-instance, then $(G,S,T)$ does not satisfy condition (ii).

    Firstly, assume that there is a reconfiguration sequence $\sigma=\langle S=S_0,S_1,\ldots,S_\ell = T\rangle$ between $S$ and $T$ under $\TAR$ such that $\mathsf{val}(\sigma)\leq |S|+1$. 
    Then, we observe that there is a reconfiguration sequence under $\TJ$ between $S$ and $T$~\cite{KMM12}.
    Moreover, that is also a reconfiguration sequence under $\kTJ$. 
    Therefore, $(G, S, T, \kTJ)$ is a yes-instance.

    Conversely, assume that there is a reconfiguration sequence $\sigma'=\langle S=S'_0,S'_1,\ldots,S'_\ell = T\rangle$ between $S$ and $T$ under $\kTJ$ where $k=g(|S|)-|S|\geq 1$. 
    Then, for any two consecutive vertex covers $S_{i-1}$ and $S_i$ with $i\in[\ell]$, we have $|S_{i-1}\cup S_i| \leq |S|+k = g(|S|)$.
    Additionally, we can transform from $S_{i-1}$ to $S_i$ under $\TAR$ as follows:
    Firstly, we add tokens on vertices in $S_i\setminus S_{i-1}$ one by one;
    then, we remove tokens on vertices in $S_{i-1}\setminus S_i$ one by one. 
    Through these steps, we have no vertex cover with size larger than $|S_{i-1}\cup S_i|\leq g(|S|) \leq f(|S|)$.
    Therefore, there is a sequence $\sigma$ under $\TAR$ such that $\mathsf{val}(\sigma) \leq f(|S|)$. That is, $(G, S, T)$ does not satisfy condition (ii).
    This completes the proof.
\end{proof}

We set $f(x)=(1+\varepsilon_0)(x+1)$ and let $g(x)$ be an arbitrary function such that $ g(x)\leq f(x)$ for all $x\geq x_0$, for some constant $x_0$.
For example, $g(x)$ may be chosen as $x+c$ for some constant $c$,  $ x+\lceil\log x\rceil$, $ x+\lceil x^{1/2} \rceil$, or $x+\lceil \varepsilon x\rceil $ for some positive constant $\varepsilon \leq \varepsilon_0$.
Combining \Cref{MinmaxVCRPSPACE-h} and \Cref{MinmaxtoVCR}, \prb{VCR} under $\kTJ$ on graphs of maximum degree $3$ is $\PSPACE$-complete for $k=g(|S|)-|S|$, as claimed in \Cref{VCR_PSPACEcom_largek}.
This result includes the \(\PSPACE\)-completeness of \prb{VCR} under \(\kTJ\) for various values of \(k\), such as \(\Theta(1)\), \(\Theta(\log |S|)\), \(\Theta(|S|^\epsilon )\) with any constant $\epsilon\in (0,1)$, and \(\Theta(|S|)\).

\section{Conclusion and Future Work}
In this paper, we investigated the computational complexity of the fundamental reconfiguration problems \prb{ISR} and \prb{VCR} on various graph classes under the extended reconfiguration rules $\kTJ$ and $\kTS$.

The following open problems are suggested for future research:
(1)~Is \prb{ISR} under $\kTJ$ with \( k = |I| - \mu \) $\NP$-hard on planar graphs of maximum degree 3 for any fixed positive integer \( \mu \)?  
(2)~Is \prb{ISR} under $\kTJ$ with \( k = |I| - \mu \) in $\NP$ on general graphs not only when $\mu$ is fixed but also when \( \mu = O(\log |I|) \)?
(See also \Cref{ISRresultskTJ}.)

\bibliographystyle{plainurl}

 
\end{document}